\title{On Reachable Assignments in Cycles and Cliques}
\author{Luis Müller \and Matthias Bentert}
\date{}
\newtheorem{definition}{Definition}
\newtheorem{theorem}{Theorem}
\newtheorem{lemma}{Lemma}
\newtheorem{proposition}{Proposition}
\newtheorem{observation}{Observation}
\newtheorem{rrule}{Reduction Rule}
\newtheorem{construction}{Construction}
\newtheorem{example}{Example}
\newcommand{\problemdef}[3]{
		\begin{center}
	\begin{minipage}{0.95\textwidth}
		\noindent
		\textsc{#1}

				\vspace{2pt}
				\setlength{\tabcolsep}{3pt}
				\begin{tabularx}{\textwidth}{@{}lX@{}}
						\textbf{Input:} 		& #2 \\
						\textbf{Question:} 	& #3
					\end{tabularx}
	\end{minipage}
		\end{center}
}
\def \cycseq[#1][#2]{ \ensuremath \mu_{#1, #2} }
\def \ro{\textsc{Reachable Object}}
\def \ra{\textsc{Reachable Assignment}}
\def \fsra{\textsc{First Swap Reachable Assignment}}
\newcommand{\racycle}[3]{
\draw (#1,#2) circle (#3);
\draw[>=latex, double=gray,double distance = .6cm,line cap=round,opacity=.05] (#1,#2) circle (#3);
}
\newcommand{\objpath}[7]{
\tikzmath{\cs = cos(#4); \ss = sin(#4); \hxs =\cs * #3 ; \hys =\ss * #3; \xs = \hxs + #1 ; \ys =\hys +#2 ;}
\tikzmath{\ce = cos(#5); \se = sin(#5); \hxe =\ce * #3 ; \hye =\se * #3; \xe = \hxe + #1 ; \ye =\hye +#2 ;}
\node [draw, circle, fill=black, inner sep = 0.1cm] at (\xs ,\ys) {};
\node at (\hxs * 1.6 + #1, \hys * 1.4 + #2 ) {$\sigma_0^{-1}(#6)$};

\draw[>=latex, double=#7,double distance = .6cm,line cap=round,opacity=.25] (\xs ,\ys) arc (#4:#5:#3);

\node [draw, circle, fill=black, inner sep = 0.1cm] at (\xe ,\ye) {};
\node at (\hxe * 1.6 + #1, \hye * 1.4 + #2 ) {$\sigma^{-1}(#6)$};
}
\begin{document}

\maketitle

\begin{abstract}
The efficient and fair distribution of indivisible resources among agents is a common problem in the field of \emph{Multi-Agent-Systems}.
We consider a graph-based version of this problem called \ra{}, introduced by Gourves, Lesca, and Wilczynski [AAAI, 2017].
The input for this problem consists of a set of agents, a set of objects, the agent's preferences over the objects, a graph with the agents as vertices and edges encoding which agents can trade resources with each other, and an initial and a target distribution of the objects, where each agent owns exactly one object in each distribution.
The question is then whether the target distribution is reachable via a sequence of rational trades.
A trade is rational when the two participating agents are neighbors in the graph and both obtain an object they prefer over the object they previously held.
We show that \ra{} is NP-hard even when restricting the input graph to be a clique and develop an~$\mathcal{O}(n^3)$-time algorithm for the case where the input graph is a cycle with~$n$ vertices.
\end{abstract}

\section{Introduction}
The efficient distribution of resources among agents is a frequent problem in \emph{Multi-Agent-Systems}~\cite{DBLP:conf/wine/0001W19} with e.\,g.\ medical applications \cite{DBLP:conf/sigecom/AbrahamBS07}.
These resources are often modeled as objects and sometimes they can be divided among agents and sometimes they are indivisible.
One famous problem in this field is called \textsc{House Marketing}: Each of the~$n$ participating agents initially owns one house (an indivisible object) and the agents can trade their houses in trade cycles with other agents~\cite{roth_incentive_1982,shapley_cores_1974}.
Versions of this problem were considered with different optimization criteria like pareto-optimality~\cite{AbrCecManMeh2005,IP18,SoeUen2010} or envy-freeness \cite{BCGLMW18,BKN18,DBLP:conf/aaai/ChevaleyreEM07}.
Gourves et al.\,\cite{gourves_object_2017} studied two similar problems where agents are only able to perform (pairwise) trades with agents they trust.
This is modeled by a social network of the participating agents where an edge between two agents means that they trust each other.
The first version is called \ra: Can one reach a given target assignment by a sequence of rational swaps?
A swap is rational if both participating agents obtains an object they prefer over their current object and the agents share an edge in the social network.
The second version is called \ro{} and the question is whether there is a sequence of rational swaps such that a given agent obtains a given target object.

Gourves et al. \cite{gourves_object_2017} showed that \ra{} and \ro{} are both NP-hard on general graphs.
They further proved that \ra{} is decidable in polynomial time if~$G$ is a tree.
Huang and Xiao \cite{DBLP:conf/aaai/HuangX19} showed that if the underlying graph is a path, then \ro{} can be solved in polynomial time.
Moreover, they studied a version of \ro{} that allows weak preference lists, i.e. an agent can be indifferent between different objects and shows that this problem is NP-hard even if the input graph is a path.
Contributing to the \ro{} problem, Saffidine and Wilczynski \cite{DBLP:conf/sagt/SaffidineW18} proposed an alternative version of \ro, called \textsc{Guaranteed Level Of Satisfaction}, where an agent is guaranteed to obtain an object at least as good as a given target object.
They showed that \textsc{Guaranteed Level Of Satisfaction} is co-NP-hard.
Finally, Bentert et al.\,\cite{bentert_good_2019} showed that \ro{} is polynomial-time solvable if the input graph is a cycle and NP-hard even if the input graph is a clique, that is, all agents can trade with each other.

In our work, we will mainly focus on \ra{} for the two special cases where the input graph is either a cycle or a clique.
We show an~$\mathcal{O}(n^3)$-time algorithm for \ra{} on~$n$-vertex cycles and further prove NP-hardness for \ra{} on cliques.
\Cref{fig:cycleExample} displays an example for a triangle, that is, a cycle (and a clique) of size three.
\begin{figure}
    \centering

\begin{minipage}[c]{4cm}
\begin{tikzpicture}
   \node[circle, draw] (1) {$1$};
    \node[circle, draw] (3) [below left = 1.5cm and 1cm of 1]  {$3$};
    \node[circle, draw] (2) [below right = 1.5cm and 1cm of 1] {$2$};

    \path[draw,thick]
    (1) edge node {} (2)
    (2) edge node {} (3)
    (3) edge node {} (1);
\end{tikzpicture}
\end{minipage}
\begin{minipage}[c]{4cm}
1: $x_2$ $\succ$ \fbox{$x_1$}\\
2: $x_3$ $\succ$ \fbox{$x_2$}\\
3: $x_1$ $\succ$ $x_2$ $\succ$ \fbox{$x_3$}
\end{minipage}

    \caption{Example for \ra{} on a triangle with preference lists on the right-hand side.
    We use the notation ``1: $x_2$ $\succ$ \fbox{$x_1$}'' to denote that agent~$1$ prefers object~$x_2$ the most and object~$x_1$ the second most.
    Moreover, agent~$1$ initially holds object~$x_1$ and since we only consider rational swaps and hence agent~$x_1$ will never held an object it prefers less than object~$x_1$, we do not list these objects for agent~$1$.
    In the target assignment each agent shall hold its most preferred object.
    First, agents~$2$ and~$3$ can swap their currently held objects.
    Afterwards, agent~$1$ can trade object~$x_1$ to agent~$3$ and receive object~$x_2$ in return.
    }
    \label{fig:cycleExample}
\end{figure}
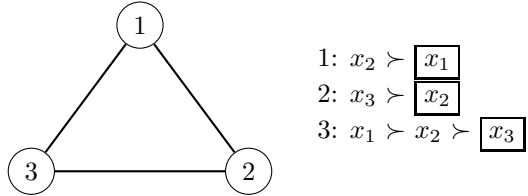

At the core of our hardness proof for cliques lies the combination of a reduction from \ro{} to \ra{} by Gouvres et al.\,\cite{gourves_object_2017} and the NP-hardness proof for \ro{} on cliques by Bentert et al.\,\cite{bentert_good_2019}.
The polynomial-time algorithm for \ra{} on cycles is shown in three steps.
In the first step, we will show that once an object is swapped into either clockwise or counter-clockwise direction, it is impossible to swap it back into the opposite direction.
Moreover, assigning a direction to each object is equivalent to providing a sequence of rational swaps.
We will also show how to verify in polynomial time whether such an assignment corresponds to a solution.
In this case, we will say that the assignment of directions \emph{yields the target assignment}.
In a second step, we show a characterization of assignments of directions that yield the target assignment.
We call these direction assignments \emph{valid}.
In the third and final step, we will iterate over all edges in the input graph and construct for each iteration a 2-SAT formula that is satisfiable if and only if there exists a valid assignment of directions that corresponds to a solution in which the first swap is done over the iterated edge.
Besides a novel characterization of instances that have a solution, our main technical contribution is a non-trivial reduction to 2-SAT.
This approach to showing polynomial running times was used before \cite{BMW18,GW09}, but we believe that its potential is still relatively unexplored.  

\section{Preliminaries and Preprocessing}\label{chapter_prelim}
We use graph-theoretical notation in a similar way as Diestel \cite{DBLP:books/daglib/diestel}.
For a graph~$G := (V,E)$ and a set of vertices~$W \subseteq V$, we use~$G[W]$ to denote the induced subgraph of~$W$ in~$G$, that is, the graph~$G' := (W, E')$ where~$\{v,w\} \in E'$ if and only if~$\{v,w\} \in E$ and~$v,w \in W$.
For two integers~$a$ and~$b$ we denote by~$[a,b]$ the set of integers~$\{a, a+1, ..., b\}$.
If~$G$ is a cycle, then we always assume that the agents are numbered from~$0$ to~$n-1$, where~$n = |V|$, such that agent~$i$ shares an edge with agents~$i+1 \bmod n$ and~$i-1 \bmod n$.
We also say that agent~$i+1 \bmod n$ is the clockwise neighbor of agent~$i$ and that agent~$i-1\bmod n$ is the counter-clockwise neighbor of agent~$i$.
We denote by~$\cycseq[a][b]$ the sequence of clockwise neighbors starting from~$a$ and ending in~$b$, that is, 
\begin{equation}
    \cycseq[a][b] =
	    \begin{cases*}
      (a,a+1,...,n-1,0,1,...,b) & if $b < a$ \\
      (a,a+1,...,b)        & otherwise
    \end{cases*}
\end{equation}

\begin{example}
Let~$n = 8$. Then~$\cycseq[2][6] = (2, 3, 4, 5, 6)$ and~$\cycseq[6][2] = (6, 7, 0, 1, 2)$.
\end{example}
We now formally define \ra{} and \ro{}.
Let~$N$ be a set of~$n$ agents and let~$X$ be a set of~$n$ indivisible objects.
Each agent~$i \in N$ has a \emph{preference list~$\succ_i$} over a non-empty subset~$X_i$ of the set of objects~$X$.
Each preference list is a strict ordering on~$X_i$.
The set of all preference lists is called a \emph{preference profile~$\succ$}.
A bijection~$\sigma : N \rightarrow X$ is called an\emph{ allocation or assignment}.
Akin to the house marketing problem \cite{AbrCecManMeh2005}, each agent is initially assigned exactly one object.
We denote this initial assignment by~$\sigma_0$ and for the sake of simplicity, we will interchangeably use~$\sigma(i)=o$ and~$(i,o) \in \sigma$ for an agent~$i$, an object~$o$, and an assignment~$\sigma$. 
Let~$G=(N,E)$ be a graph where the set of vertices is the set~$N$ of agents.
We will use the term agents interchangeably with vertices of~$G$.
A trade between agents~$i$ and~$j$ is only possible if their corresponding vertices share an edge in~$G$ and if both~$i$ and~$j$ receive an object that they prefer over the objects they hold in the current assignment~$\sigma$.
We can express this formally by~$\sigma(i) \succ_j \sigma(j)$ and~$\sigma(j) \succ_i \sigma(i)$ and we call such a trade a \emph{rational swap}.
A \emph{sequence of rational swaps} is a sequence of assignments~$(\sigma_s, ..., \sigma_t)$ where~$\sigma_i$ is the result of performing one rational swap in assignment~$\sigma_{i-1}$ for all~$i \in \{s + 1, s + 2, ..., t\}$.
We call an assignment~$\sigma$ \emph{reachable} if there exists a sequence of rational swaps~$(\sigma_0, ..., \sigma)$.
For an agent~$i$, an object~$x$ is reachable if there exists an assignment~$\sigma_t$ and a sequence of rational swaps~$(\sigma_0, ..., \sigma_t)$ such that~$\sigma_t(i) = x$.
Gourves et al. \cite{gourves_object_2017} introduced the problems \ra{} and \ro{} as follows.

\vspace{.3cm}
\problemdef{\ra}{A set~$N$ of agents, a set~$X$ of objects, a preference profile~$\succ$, a graph~$G$, an initial assignment~$\sigma_0$ and a target assignment~$\sigma$.}{Is~$\sigma$ reachable from~$\sigma_0$?}
\vspace{.1cm}
\problemdef{\ro}{A set~$N$ of agents, a set~$X$ of objects, a preference profile~$\succ$, a graph~$G$, an initial assignment~$\sigma_0$, an agent~$i$ and an object~$x$.}{Is~$x$ reachable for agent~$i$ from~$\sigma_0$?}
\vspace{.3cm}

Recall that assignments are bijections.
This allows us for any assignment~$\sigma$, any object~$y$ and the agent~$i$ with~$\sigma(i) = y$ to denote~$i$ by~$\sigma^{-1}(y)$.

We end this section with a simple data reduction rule for \ra{}. 
Whenever an agent prefers an object~$q$ over the object~$p$ it is assigned in the target assignment~$\sigma$, then we can simply remove~$q$ from its preference lists.
This is due to the fact that once the agent possesses~$q$, then it cannot receive~$p$ anymore.
We will assume that every instance has already been preprocessed by the following data reduction rule, which is equivalent to assuming that~$\sigma$ assigns each agent its most preferred object.
\begin{rrule}\label{rr1}
If for any agent~$i$ with object~$p := \sigma(i)$ there exists an object~$q$ such that~${q \succ_i p}$, then remove~$q$ from~$\succ_i$.
\end{rrule}

\section{A Polynomial-Time Algorithm for \ra{} on Cycles}\label{sec:cycles}
In this section we develop a polynomial-time algorithm for \ra{} on cycles.
To the best of our knowledge, this is the first polynomial-time algorithm for \ra{} beyond the initial algorithm for trees by Gourves et al.\,\cite{gourves_object_2017}.
Our algorithm generalizes several ideas used in the algorithm for trees and uses a novel characterization of solutions.
We divide the section into three subsections.
In Subsection \ref{subsec:directions}, we formally define what we mean by swapping an object in a certain direction and provide a polynomial-time algorithm to verify whether a given assignment of directions to all objects corresponds to a solution.
In this case, we will say that the assignment of directions \emph{yields~$\sigma$}.
In Subsection \ref{subsec:validity}, we define a property we call \emph{validity} and show that this characterizes the assignments of directions that yield~$\sigma$.
Finally in Subsection \ref{subsec:2SAT}, we reduce the problem of deciding whether there exists a valid assignment of directions to 2-SAT.

\subsection{Swapping Directions in a Cycle}
\label{subsec:directions}
In this subsection we will formally introduce assignments of directions.
We will refer to them as \emph{selections} and always denote them by~$\gamma$.
Consider an instance~$\mathcal{I} := (N, X, \succ, G, \sigma_0, \sigma)$ of \ra{}.
Let~$p$ be an object, let~$j = \sigma^{-1}(p)$, and let~$i = \sigma_0^{-1}(p)$.
Since the underlying graph is a cycle, there are exactly two paths between agents~$i$ and~$j$ for object~$p$.
By definition of rational swaps, once~$p$ has been swapped, say from agent~$i$ to agent~$i+1 \bmod n$ then~$p$ is not able to return to agent~$i$ since agent~$i$ just received an object that it prefers over~$p$ and will therefore not accept~$p$ again.
Hence, if~$p$ is swapped again, then it is given to agent~$i + 2 \bmod n$ and the argument can be repeated for agent~$i + 1 \bmod n$.
As there are only two paths between agents~$i$ and~$j$, there are also only two directions, namely clockwise and counter-clockwise.
We will henceforth encode these directions into a binary number saying that the direction of~$p$ is~$1$ if~$p$ is swapped in clockwise direction and~$0$ otherwise.
This yields the following definition of selections.
\begin{definition}
Let~$\mathcal{I} := (N, X, \succ, C_n, \sigma_0, \sigma)$ be an instance of \ra{}.
A \emph{selection}~$\gamma$ of~$\mathcal{I}$ is a function that assigns each object~$p \in X$ a direction~$\gamma(p) \in \{0,1\}$.
\end{definition}
Given a selection~$\gamma$, we say that an object~$p$ is \emph{closer} than another object~$q$ with~$\gamma(p) = \gamma(q)$ to~$\ell$, if starting from~$q$ and going in direction~$\gamma(q)$, it holds that~$p$ comes before~$\ell$.
Therein, $\ell$ can be an edge, an agent, or a third object.

We will see that for a pair of objects (with assigned directions) there is a unique edge over which they can be swapped.
To show this, we need the following definition of \emph{a path of an object}.
These paths consist of all agents that will hold the respective object in any successful sequence of swaps that respects the selection.

\begin{definition}\label{def:swapspace}
Let~$\gamma$ be a selection for instance~$\mathcal{I}$ and let~$p$ and~$q$ be two objects with~$\gamma(p) \neq \gamma(q)$.
Let~$i := \sigma_0^{-1}(p)$, let~$j = \sigma^{-1}(p)$, let~$I := \cycseq[i][j]$, and let~$J := \cycseq[j][i]$.
The \emph{path of~$p$ for~$\gamma$} is
\begin{equation*}
   P_\gamma(p) := \begin{cases*}
        C_n[I] & if $\gamma(p) = 1$ \\
		C_n[J]  & otherwise.
    \end{cases*}
\end{equation*}
The set~\emph{$\xi_\gamma(p,q)$ of shared paths of~$p$ and~$q$ for~$\gamma$} is the set of all connected paths in~$P_\gamma(p) \cap P_\gamma(q)$.\footnote{If~$\{\sigma_0^{-1}(p),\sigma^{-1}(p)\} \in E \cap P_{\gamma}(q)$, then~$\xi_\gamma(p,q)$ contains two disjoint paths, one with~$\sigma_0^{-1}(p)$ as endpoint and one with~$\sigma^{-1}(p)$.}
Finally,~$p$ and~$q$ are \emph{opposite} if there exists a selection~$\gamma'$ such that~$|\xi_{\gamma'}(p,q)| > 1$.
\end{definition}
Examples for paths and shared paths are given in \cref{fig:sharedrestpath}.
\begin{figure}
\centering
\begin{tikzpicture}
\def \radius {1.5};
\racycle{0}{0}{\radius}
\objpath{0}{0}{\radius}{225}{135}{p}{red}
\objpath{0}{0}{\radius}{-45}{45}{q}{blue}

\racycle{4.75}{0}{\radius}
\objpath{4.75}{0}{\radius}{225}{45}{p}{red}
\objpath{4.75}{0}{\radius}{-45}{135}{q}{blue}

\racycle{9.5}{0}{\radius}
\objpath{9.5}{0}{\radius}{-135}{135}{p}{red}
\objpath{9.5}{0}{\radius}{45}{315}{q}{blue}
\end{tikzpicture}
\caption{Given a selection~$\gamma$ and two objects~$p$ and~$q$ with~$\gamma(q) \neq \gamma(p) = 1$, the two marked paths in the left figure are the respective paths of~$p$ and~$q$ for~$\gamma$.
Since they do not intersect, the set~$\xi_\gamma(p,q)$ of shared paths of~$p$ and~$q$ for~$\gamma$ is empty.
In the center figure~$\xi_\gamma(p,q)$ contains the single intersection between~$\sigma^{-1}(q)$ and~$\sigma^{-1}(p)$ and in the right figure it contains the two intersections in the top and in the bottom.
The two objects are opposite in the left and the right figure, but not in the center.}%
\label{fig:sharedrestpath}
\end{figure}
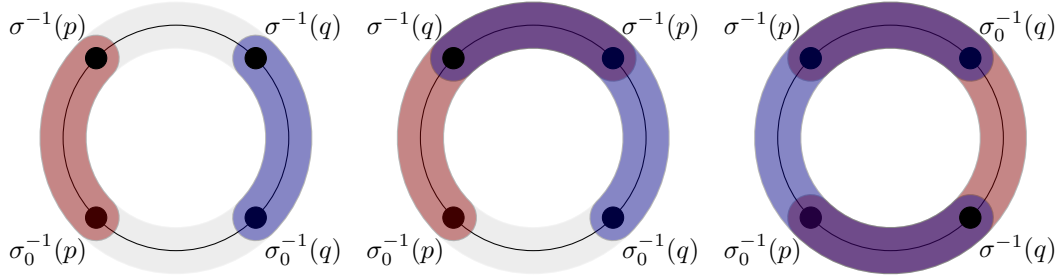
With these definitions at hand, we are now able to determine unique edges where two objects~$p$ and~$q$ can be swapped with respect to all selections~$\gamma$ with~$\gamma(q) \neq \gamma(p) = 1$.
\begin{lemma}\label{lem:uniqueSwap}
Let~$\gamma$ be a selection for instance~$\mathcal{I}$ and~$p$ and~$q$ be two objects with~$\gamma(p) \neq \gamma(q)$.
If for some path~$P \in \xi_\gamma(p,q)$ there is not exactly one edge in~$P$ such that the corresponding preference lists allow~$p$ and~$q$ to be swapped between the two incident agents, then~$\gamma$ does not yield target assignment~$\sigma$.
\end{lemma}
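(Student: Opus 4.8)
The plan is to prove the contrapositive: assuming that $\gamma$ yields $\sigma$, I will show that for every shared path $P \in \xi_\gamma(p,q)$ there is \emph{exactly} one edge whose two incident agents' preferences permit a swap of $p$ and $q$. So I fix a sequence of rational swaps witnessing that $\gamma$ yields $\sigma$ and assume without loss of generality that $\gamma(p) = 1$ and $\gamma(q) = 0$, writing the shared path as $P = (a_0, a_1, \dots, a_k)$ in clockwise order. By the monotonicity of swap directions established at the beginning of this subsection, $p$ only travels clockwise and $q$ only counter-clockwise; since $P \subseteq P_\gamma(p) \cap P_\gamma(q)$ is a contiguous sub-path of both, every agent $a_\ell$ holds $p$ at some unique time $s_p(\ell)$ and holds $q$ at some unique time $s_q(\ell)$, with $s_p(0) < s_p(1) < \dots < s_p(k)$ and $s_q(k) < s_q(k-1) < \dots < s_q(0)$. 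Consequently $s_p(\ell) - s_q(\ell)$ is strictly increasing in $\ell$.

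The next step is to translate visit order into preference order. Because every swap is rational, the object held by a fixed agent $a_\ell$ strictly increases in $\succ_{a_\ell}$ over the course of the sequence; hence $p \succ_{a_\ell} q$ if and only if $a_\ell$ holds $q$ before $p$, i.e.\ if and only if $s_q(\ell) < s_p(\ell)$. Setting $\pi(\ell) := 1$ if $p \succ_{a_\ell} q$ and $\pi(\ell) := 0$ otherwise, this reads $\pi(\ell) = 1 \iff s_q(\ell) < s_p(\ell)$, so by the previous paragraph $\pi$ is monotone non-decreasing in $\ell$. Now an edge $\{a_\ell, a_{\ell+1}\}$ permits a swap of $p$ (arriving from the counter-clockwise side) and $q$ (arriving from the clockwise side) precisely when $q \succ_{a_\ell} p$ and $p \succ_{a_{\ell+1}} q$, that is when $\pi(\ell) = 0$ and $\pi(\ell+1) = 1$. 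Thus the number of permitting edges equals the number of $0 \to 1$ transitions in the monotone sequence $\pi(0), \dots, \pi(k)$.

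It therefore remains to determine the two endpoints, and I expect this to be the main obstacle. I would argue that $s_p(0) < s_q(0)$ and $s_q(k) < s_p(k)$, forcing $\pi(0) = 0$ and $\pi(k) = 1$ and hence exactly one $0 \to 1$ transition. The counter-clockwise endpoint $a_0$ is the counter-clockwise-most agent of the intersection, so its counter-clockwise neighbour lies outside $P_\gamma(p)$ or outside $P_\gamma(q)$. In the first case $a_0 = \sigma_0^{-1}(p)$, so $p$ sits at $a_0$ at time $0$ while $q$ (which starts elsewhere, as $\sigma_0$ is a bijection) reaches $a_0$ strictly later. In the second case $a_0 = \sigma^{-1}(q)$, so $q$ is $a_0$'s target and, by \Cref{rr1}, its most preferred object; hence once $a_0$ holds $q$ it never gives it up, and $p$ must have visited $a_0$ beforehand. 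Either way $s_p(0) < s_q(0)$, and the symmetric analysis at $a_k$ (whose clockwise neighbour leaves $P_\gamma(p)$ or $P_\gamma(q)$, identifying $a_k$ as $\sigma^{-1}(p)$ or $\sigma_0^{-1}(q)$) yields $s_q(k) < s_p(k)$.

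Finally, I would dispose of the footnote case of \Cref{def:swapspace}, in which $\{\sigma_0^{-1}(p), \sigma^{-1}(p)\} \in E \cap P_\gamma(q)$ and $\xi_\gamma(p,q)$ splits into two disjoint paths, by applying the same argument to each path separately: one of them has $\sigma_0^{-1}(p)$ as an endpoint and the other has $\sigma^{-1}(p)$, and both the strict monotonicity of $s_p$ and $s_q$ and the endpoint identification carry over verbatim on each contiguous sub-path. Combining these observations, every shared path contributes exactly one permitting edge, which is the contrapositive of the claim.
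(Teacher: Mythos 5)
Your proof is correct, and it takes a genuinely different route from the paper's. The paper argues locally and by contradiction: it first observes that $p$ and $q$ must be swapped somewhere on each shared path (so at least one permitting edge exists), and then supposes two permitting edges $e$ and $f$ exist; if the actual swap occurs at $e$, then one of the objects, say $p$, has already crossed $f$, so the agent incident to $f$ who prefers $p$ over $q$ has already held $p$ and will never accept $q$ afterwards --- yet that agent lies on $q$'s path, contradicting that $\gamma$ yields $\sigma$. You instead prove the contrapositive via a global monotonicity invariant: you encode each agent's preference between $p$ and $q$ as a bit $\pi(\ell)$, deduce from visit times and rationality of swaps that $\pi$ is non-decreasing along the shared path, and pin $\pi=0$ at the counter-clockwise end and $\pi=1$ at the clockwise end by identifying the endpoints as $\sigma_0^{-1}(p)$ or $\sigma^{-1}(q)$ (respectively $\sigma_0^{-1}(q)$ or $\sigma^{-1}(p)$), the target-agent cases relying on \Cref{rr1}. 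Permitting edges are then exactly the $0 \to 1$ transitions of $\pi$, of which there is precisely one. Your argument is heavier --- it needs the endpoint case analysis, the preprocessing rule, and the (correct) fact that a yielding sequence pushes each object across its entire path, none of which the paper's short proof invokes --- but it buys more: it establishes the ``at least one'' and ``at most one'' halves in a single stroke, and it locates the unique swap edge as the point where the agents' preference between $p$ and $q$ flips, information the paper only extracts later in the analysis of \texttt{Greedy Swap} and in \Cref{order_lemma}. Your handling of the two-component case from the footnote of \Cref{def:swapspace} (running the argument on each component, with the endpoint identifications as stated) is also sound.
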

\begin{proof}
Note that~$p$ and~$q$ have to be swapped somewhere on each path in~$\xi_\gamma(p,q)$ as otherwise at least one of the objects cannot reach its target agent.
For each path~$P \in \xi_\gamma(p,q)$ there has to be at least one edge on~$P$ where~$p$ and~$q$ can be swapped or~$\gamma$ does not yield the target assignment~$\sigma$.
Assume towards a contradiction that there are at least two edges~$e$ and~$f$ on some path~$P \in \xi_\gamma(p,q)$ where~$p$ and~$q$ can be swapped according to the corresponding preference lists and~$\gamma$ yields~$\sigma$.
Suppose~$p$ and~$q$ are swapped over~$e$ in a sequence of rational swaps that yields~$\sigma$ (the case for~$f$ is analogous).
Then one of~$p$ or~$q$ must have already passed the other edge~$f$, say~$p$ (again, the case for~$q$ is analogous).
Since~$p$ and~$q$ could be swapped over~$f$, it holds for the two agents~$a$ and~$b$ incident to~$f$ that~$(p \succ_a q)$ and~$(q \succ_b p)$.
Since~$p$ already passed~$f$, agent~$a$ already held~$p$.
Hence, agent~$a$ will not accept object~$q$ in the future and hence~$q$ can not reach its destination as~$a$ is on the path of~$q$.
This contradicts the assumption that~$\gamma$ yields~$\sigma$.
\end{proof}

We show that two objects~$p$ and~$q$ are swapped exactly once on each path in~$\xi_\gamma(p,q)$ in any sequence of rational swaps and that~$|\xi_\gamma(p,q)| \leq 2$.
Examples with~$|\xi_{\gamma}(p,q)| \in \{0,1,2\}$ are given in \cref{fig:sharedrestpath}.

\begin{proposition}
\label{max2edge}
Let~$\gamma$ be a selection and let~$p$ and~$q$ be two objects with~$\gamma(q) \neq \gamma(p)$.
There are at most two edges on~$P_\gamma(p) \cup P_\gamma(q)$ such that~$p$ and~$q$ can only be swapped over these edges or~$\gamma$ does not yield target assignment~$\sigma$.
Each of these edges is on a different path in~$\xi_\gamma(p,q)$.
\end{proposition}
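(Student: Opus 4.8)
The plan is to reduce Proposition~\ref{max2edge} to \Cref{lem:uniqueSwap} combined with a bound on~$|\xi_\gamma(p,q)|$. The key structural fact we need is that the set of shared paths~$\xi_\gamma(p,q)$ contains at most two connected paths. Once we have that, \Cref{lem:uniqueSwap} immediately gives that on each path in~$\xi_\gamma(p,q)$ there is exactly one edge where~$p$ and~$q$ can be swapped (otherwise~$\gamma$ does not yield~$\sigma$), and since~$p$ and~$q$ can only be exchanged at agents they both pass through---that is, on~$P_\gamma(p) \cap P_\gamma(q) = \bigcup \xi_\gamma(p,q)$---this yields at most two swap edges total, each lying on a distinct path of~$\xi_\gamma(p,q)$.

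First I would argue that any edge over which~$p$ and~$q$ can be swapped must lie in~$P_\gamma(p) \cap P_\gamma(q)$, hence on some path in~$\xi_\gamma(p,q)$. Indeed, if~$p$ and~$q$ are swapped over an edge~$\{a,b\}$, then both objects are held by~$a$ and~$b$ at that moment, so both agents lie on the path of~$p$ \emph{and} on the path of~$q$; this is exactly the defining property of~$P_\gamma(\cdot)$ as the set of agents holding the object in any valid swap sequence respecting~$\gamma$. Combined with \Cref{lem:uniqueSwap}, this already reduces the statement to proving~$|\xi_\gamma(p,q)| \le 2$.

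The main obstacle is therefore the bound~$|\xi_\gamma(p,q)| \le 2$. Here I would use the assumption~$\gamma(p) \neq \gamma(q)$ crucially. Write~$P_\gamma(p)$ and~$P_\gamma(q)$ each as an arc of the cycle~$C_n$ (a contiguous set of vertices). The intersection of two arcs of a cycle is in general a union of at most two arcs---two arcs on a circle can ``wrap around'' so that their intersection splits into two pieces, but no more, since each arc has only two endpoints. I would make this precise by considering the four endpoints~$\sigma_0^{-1}(p), \sigma^{-1}(p), \sigma_0^{-1}(q), \sigma^{-1}(q)$ and the orientation forced by~$\gamma$: because~$p$ travels clockwise exactly when~$q$ travels counter-clockwise (or vice versa), the two arcs are oriented oppositely, and a short case analysis on the cyclic ordering of the four endpoints shows the intersection is a union of at most two subarcs. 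The footnote to \Cref{def:swapspace} already signals the one delicate sub-case, where an endpoint edge~$\{\sigma_0^{-1}(p),\sigma^{-1}(p)\}$ lies inside~$P_\gamma(q)$ and splits the intersection into exactly two paths; I would treat this explicitly to confirm it still yields at most two components rather than more.

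Finally I would assemble the pieces: every potential swap edge lies on a path in~$\xi_\gamma(p,q)$; by \Cref{lem:uniqueSwap} each such path admits at most one swap edge on penalty of~$\gamma$ not yielding~$\sigma$; and by the arc-intersection bound there are at most two such paths. Hence there are at most two swap edges, each on a different path of~$\xi_\gamma(p,q)$, which is exactly the claim. I expect the arc-intersection counting to be the only genuinely nontrivial step, with the rest following cleanly from the already-established \Cref{lem:uniqueSwap}.
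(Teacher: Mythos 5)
Your proposal is correct, but it reaches the key bound~$|\xi_\gamma(p,q)| \le 2$ by a genuinely different route than the paper. You argue statically: both~$P_\gamma(p)$ and~$P_\gamma(q)$ are arcs of the cycle, and two arcs on a circle intersect in at most two arcs (each arc contributes only two endpoints, so walking around the cycle the membership indicator of the intersection can switch on at most twice); this bounds~$|\xi_\gamma(p,q)|$ for \emph{every} selection, unconditionally, and then \Cref{lem:uniqueSwap} applied per shared path finishes the argument exactly as in the paper. The paper instead argues dynamically: it first notes that~$p$ and~$q$ must be swapped once on each path of~$\xi_\gamma(p,q)$ in any swap sequence that yields~$\sigma$, and then shows that three swaps are impossible by a travel-distance count---after the first swap the two objects have jointly passed at least~$2$ agents, after the second at least~$n+2$, and after a third at least~$2n+2$, forcing one object past more than~$n$ agents, contradicting the irrevocability of trades. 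The trade-off: your geometric argument is more elementary and stronger in that it bounds~$|\xi_\gamma(p,q)|$ without assuming~$\gamma$ yields~$\sigma$ (it also directly justifies the footnote to \Cref{def:swapspace}), whereas the paper's counting argument, while conditional, establishes the independently useful dynamic fact that two objects can be swapped at most twice in any rational sequence, which is the form of the insight reused in its later analysis. The only place you should be careful is to actually carry out the endpoint case analysis you sketch (including the degenerate case where a shared path is a single vertex, and the footnote case where~$P_\gamma(p)$ spans all vertices but misses one edge), but the sketch you give is sound and the orientation of the two arcs plays no role in the bound.
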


{
	\begin{proof}
	Let~$\gamma$ be any selection such that~$\gamma(q) \neq \gamma(p)$.
	Note that~$p$ and~$q$ cannot be swapped over any edge that is not on a path in~$\xi_\gamma(p,q)$ and that by \cref{lem:uniqueSwap} they have to be swapped over a specific edge for each path in~$\xi_\gamma(p,q)$.
	Hence the number of edges where~$p$ and~$q$ can be swapped over in any sequence of rational swaps that yield~$\sigma$ and that respects~$\gamma$ is equal to~$|\xi_\gamma(p,q)|$.
	
	Now assume that~$|\xi_{\gamma}(p,q)| \geq 3$.	
	Then,~$p$ and~$q$ are swapped at least thrice.	
	For~$p$ and~$q$ to perform a first swap, both of them must have at least passed one agent each.
	Observe that if~$p$ and~$q$ are swapped for a second time, then each agent has held~$p$ or~$q$ between the first and the second swap of~$p$ and~$q$.
	Hence, after the second swap, they must have passed at least~$n + 2$ agents combined.
	Repeating this argument once again, we get that they must have passed passed~$2n+2$ agents after the third swap.
	This means that at least one of the agents has passed more than~$n$ agents, a contradiction to the fact that an agent does not accept an object once the agent traded that object away.
	Thus, two objects can only be swapped twice and we can use \cref{lem:uniqueSwap} to find at most two unique edges.
	\end{proof}
}

Observe that $\xi_\gamma(p,q)$ only depends on $\gamma(p)$ and $\gamma(q)$ and hence for each selection~$\gamma'$ with~$\gamma'(p) \neq \gamma'(q) = \gamma(q)$, it holds that~$\xi_{\gamma}(p,q) = \xi_{\gamma'}(p,q)$ and the edges specified in \cref{lem:uniqueSwap} are the same for~$\gamma$ and~$\gamma'$.
We will denote the set of edges specified in \cref{lem:uniqueSwap} by~$E_\gamma(p,q)$.
Note that \cref{max2edge} also implies that if~$\gamma$ yields~$\sigma$, then~$|E_\gamma(p,q)| = |\xi_\gamma(p,q)| \leq 2$.
If~$\gamma(p) = \gamma(q)$, then we define~$E_\gamma(p,q) := \emptyset$.

We next show that the order in which objects are swapped is irrelevant once a selection is fixed.
We use the following definition to describe an algorithm that checks in polynomial time whether a selection yields~$\sigma$. 
\begin{definition}\label{def:swappos}
Let~$\mathcal{I}:=(N, X, \succ, C_n, \sigma_0, \sigma)$ be an instance of \ra{} and let~$\gamma$ be selection of~$\mathcal{I}$.
Let~$p$ and~$q$ be two objects with~$\gamma(q) \neq \gamma(p) = 1$.
Let~$i$ be the agent currently holding~$p$.
If~$q$ is held by agent~$i+1\bmod n$, then \emph{$p$ and~$q$ are facing each other} and if also~$\sigma(i) \neq p$ and~$\sigma(j) \neq q$, then \emph{$p$ and~$q$ are in swap position}.
\end{definition}

Using the notion of swap positions, we are finally able to describe a polynomial-time algorithm that decides whether a given selection yields~$\sigma$.
We refer to our algorithm as \texttt{Greedy Swap} and pseudo code for it is given in \cref{alg1}.
We mention that it is a generalization of the polynomial-time algorithm for \ra{} on trees by Gourves et al.\,\cite{gourves_object_2017}.
\texttt{Greedy Swap} arbitrarily swaps any pair of objects that is in swap position until no such pair is left.
If~$\sigma$ is reached in the end, then it returns True and otherwise it returns False.

\begin{proposition}
\label{greedySwaps}
Let~$\mathcal{I}:=(N, X, \succ, C_n, \sigma_0, \sigma)$ be an instance of \ra{} and let~$\gamma$ be a selection of~$\mathcal{I}$.
\texttt{Greedy Swap} returns True if and only if~$\gamma$ yields~$\sigma$.
\end{proposition}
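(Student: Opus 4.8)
The plan is to prove the two implications separately, with the order-independence of the swaps as the heart of the argument, exactly as the remark preceding the statement foreshadows. I read a swap performed by \texttt{Greedy Swap} as an actual rational swap, so the easy direction is soundness: if the algorithm returns True, then it has produced a sequence of rational swaps starting in $\sigma_0$ and ending in $\sigma$. Each of these swaps respects $\gamma$, since in a swap position the object with direction $1$ moves to its clockwise neighbor and the object with direction $0$ moves to its counter-clockwise neighbor, matching their assigned directions. Hence reaching $\sigma$ in a run witnesses that $\gamma$ yields $\sigma$, and this direction needs nothing beyond the definitions.

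For the converse I would first pin down that the swaps \texttt{Greedy Swap} may perform are \emph{exactly} the swaps available to any $\gamma$-respecting sequence of rational swaps. In any such sequence a swap is performed between two agents sharing an edge of the cycle, so it moves one object clockwise and one counter-clockwise; for the sequence to respect $\gamma$ these two objects must have directions $1$ and $0$ and therefore face each other. Moreover, by \cref{rr1} each agent's target object is its most preferred remaining object, so no agent ever trades away the object it holds at its target; consequently neither of the two swapped objects can be at its target agent. Thus every rational $\gamma$-swap is precisely a swap of a pair in swap position, and conversely. This reduces the claim to showing that the nondeterministic process ``repeatedly swap an arbitrary pair in swap position'' always reaches the same final assignment, regardless of the choices made.

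The core step is therefore a confluence argument. First I would establish termination by a potential equal to the total remaining distance of all objects to their target agents along their assigned directions: every swap moves both participating objects one step closer to their respective targets, so the potential drops by exactly $2$ and is bounded below; in particular all maximal runs have the same length. Next I would establish a diamond property. The key observation is that, given its direction, an object faces only one of its two neighbors, so each object lies in at most one pair that is in swap position; hence two \emph{distinct} pairs in swap position occupy four pairwise distinct agents. Swaps on disjoint pairs of agents commute, and performing one of them alters neither the facing relation nor the not-at-target status of the other, so it cannot disable the other pair. This yields local confluence, and together with termination it gives, by Newman's lemma, confluence and a unique normal form reachable from $\sigma_0$, where a normal form is an assignment admitting no pair in swap position.

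Finally I would combine these facts. The assignment $\sigma$ is a normal form, since at $\sigma$ every object sits at its target and no pair is in swap position. If $\gamma$ yields $\sigma$, then by the characterization of the second paragraph the witnessing $\gamma$-respecting sequence of rational swaps is a maximal sequence of swap-position swaps ending at $\sigma$; by uniqueness of the normal form, $\sigma$ is the only assignment reachable from $\sigma_0$ by such swaps, so \texttt{Greedy Swap}, which follows some maximal such sequence, must also end in $\sigma$ and return True. I expect the main obstacle to be the diamond property, and in particular making fully precise both the disjointness of distinct swap positions (including the wrap-around case and the two-shared-paths situation of \cref{max2edge}) and the accompanying worry that a swap might prematurely park an object on its target and thereby block another object; ruling this out is exactly where one must invoke \cref{lem:uniqueSwap} and \cref{max2edge} to guarantee that two opposite objects can meet, and hence be swapped, only at their unique swap edges, so that no reachable state is stuck at a normal form other than $\sigma$ once $\gamma$ yields $\sigma$.
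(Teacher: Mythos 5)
Your soundness direction is fine, and your positional analysis (every rational $\gamma$-respecting swap is a swap of a pair in swap position, distinct swap-position pairs occupy disjoint edges and commute, and the process terminates via the distance potential) is correct and genuinely different in flavor from the paper, which instead pins down by a direct counting argument the unique edge at which two opposite objects must meet in every $\gamma$-respecting run. However, your final combination step has a genuine gap. \texttt{Greedy Swap} does not fail by ``reaching a normal form other than $\sigma$''; it fails by picking, at some reachable \emph{non-normal-form} state, a pair in swap position whose preference check fails (the \textbf{else} branch of \cref{alg1}). Uniqueness of the normal form of the positional rewriting system says nothing about this failure mode, so Newman's lemma alone does not yield that the algorithm returns True. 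Already in your second paragraph, the claim that every swap of a pair in swap position is a rational $\gamma$-swap (``and conversely'') is precisely the statement in question: it is false for general selections, and proving it for selections that yield $\sigma$ is the whole difficulty. What you actually need is: in every state reachable by positional moves from $\sigma_0$, every pair in swap position is rationally swappable; this follows once you know that the edge at which two opposite objects meet is the same in \emph{every} run, because then the witnessing rational sequence (which exists since $\gamma$ yields $\sigma$) swaps them at exactly that edge, and rationality of a swap depends only on the two agents and the two objects involved.

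Your appeal to \cref{lem:uniqueSwap} and \cref{max2edge} to close this conflates two different statements. Those results say that two opposite objects \emph{can be swapped} (according to preferences) at most at one edge per shared path; they do not say the objects \emph{meet} only at that edge in every run. The implication you need goes the other way: meeting must force swappability. This is exactly what the paper's counting argument supplies (the number of objects that $q$ must be swapped with before meeting $p$ is determined by $\gamma$ alone, so the meeting edge is run-independent). Your rewriting machinery can in fact deliver the same invariance: for a terminating system in which simultaneously applicable moves are pairwise disjoint and commute, all maximal move sequences from a fixed state consist of the same multiset of moves---a strictly stronger conclusion than unique normal forms---and the move ``swap $p$ and $q$ at $e$'' is forced in every maximal extension of the current run, since $q$ cannot leave its agent except by swapping with $p$. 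But you would have to state and prove that stronger property and then draw the rationality conclusion; as written, the proof stops one step short of it.
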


\begin{algorithm}[t]
\KwData{\ra{} instance~$\mathcal{I} := (N, X, \succ, C_n, \sigma_0, \sigma)$ and selection~$\gamma$}

\If{$\sigma_0 = \sigma$}{\Return True\;}

\If{$\exists d \in \{0,1\}.\ \forall o \in X. \gamma(o) = d$}{\Return False\;}

$\sigma' \gets \sigma_0$\;
\While{$\exists x_1, x_2 \in X.\ x_1 \text{ and } x_2 \text{ are in swap position}$}{
  ~$i\gets \sigma'^{-1}(x_1)$\;
  ~$j\gets \sigma'^{-1}(x_2)$\;
  
   \eIf{$x_1 \succ_j x_2$ and~$x_2 \succ_i x_1$}{
   	 Swap~$x_1$ and~$x_2$\;
   	 Update~$\sigma'$\;
   }{
     \Return False\;
   }
 }
 
 \Return True\;
 \caption{Greedy Swap}
 \label{alg1}
\end{algorithm}

\begin{proof}
Observe that \texttt{Greedy Swap} only performs rational swaps and only returns True if~$\sigma$ is reached.
Hence, if it returns True, then~$\gamma$ yields~$\sigma$.

Now suppose that a selection~$\gamma$ yields assignment~$\sigma$, but \texttt{Greedy Swap} returns False.
Then either~$\sigma_0 \neq \sigma$ and every object is assigned the same direction, or there are two objects~$p$ and~$q$, which are in swap position at some edge~$e$, but the corresponding preference lists do not allow a swap.
In the former case~$\gamma$ clearly does not yield~$\sigma$.
In the latter case, consider the initial positions of~$p$ and~$q$ and the corresponding shared paths~$\xi_\gamma(p,q)$.
Since~$p$ and~$q$ are in swap position at the point where \texttt{Greedy Swap} returns False, it holds that~$\gamma(p) \neq \gamma(q)$ and there is a shared path~$P \in |\xi_{\gamma}(p,q)|$ such that~$e$ is on~$P$.

If~$p$ and~$q$ meet at edge~$e$ for the first time, then let~$m$ be the number of objects that are closer to~$q$ than~$p$ and that are assigned direction~$\gamma(p)$.
If~$p$ and~$q$ meet at edge~$e$ for the second time, then let~$m$ be the number of objects that are assigned direction~$\gamma(p)$.
We will now show that~$q$ has to be swapped with exactly~$m - 1$ objects before~$q$ and~$p$ can be swapped (for the first or between the first and the second swap, respectively).
We will only show the case for~$p$ and~$q$ being swapped over~$e$ for the first time as the other case is analogous when considering the instance after the first swap of~$p$ and~$q$ has just happened.
Starting from the agent that initially holds~$q$ we can then calculate the edge~$e'$ where~$q$ and~$p$ meet in every sequence of rational swaps that results in reaching the target assignment~$\sigma$.
Suppose~$q$ can be swapped with more than~$m-1$ objects before meeting~$p$ at edge~$e'$.
Then~$q$ must either be swapped with at least one object~$q'$ that is not closer to~$q$ than~$p$ or it must be swapped with the same object~$q'$ twice before being swapped with~$p$ for the first time.
This, however, means that~$q$ will meet~$q'$ after~$p$, a contradiction.
Now suppose that~$q$ can be swapped with less than~$m-1$ objects before meeting~$p$. Then there is an object~$r$ that is closer to~$q$ than~$p$ and that is assigned direction~$\gamma(p)$ and which is not swapped with~$q$ before~$q$ and~$p$ are swapped.
However, then~$q$ will not be able to meet~$p$ as at least~$r$ is between them, a contradiction.

Now that we have shown that~$q$ will swap exactly~$m$ objects before meeting~$p$, we can also determine the edge~$f$ where, given~$\gamma$, they must necessarily meet, in every sequence of rational swaps where objects are swapped according to the directions assigned to them by~$\gamma$.
Now since we assumed that~$\sigma$ is reachable with selection~$\gamma$, there must be a sequence of rational swaps where~$q$ and~$p$ must be swapped at edge~$f$. But since~$p$ and~$q$ met, by assumption, at edge~$e$, either~$e = f$, which is a contradiction because at~$e$ the preference lists of the incident agents do not allow a swap between~$p$ and~$q$, or~$p$ and~$q$ cannot have met at edge~$e$ if all swaps were performed according to the directions assigned by selection~$\gamma$, a contradiction to the definition of \texttt{Greedy Swaps}.
\end{proof}

\subsection{Validity of Selections}\label{subsec:validity}
We will next define a set of properties that a \emph{selection} needs to fulfill in order for \texttt{Greedy Swap} to return True and then show that these properties are both necessary and sufficient.
The idea is to have one property that guarantees that for each object and for each edge on its path there exists exactly one object that should be swapped with that object over this edge and a second property to guarantee that for each two objects~$p$ and~$q$ there are exactly~$|\xi_\gamma(p,q)|$ edges where~$p$ and~$q$ should be swapped.
Before we can formally state these properties, we give the following two definitions.
\begin{definition}
Let~$\mathcal{I}:=(N, X, \succ, C_n, \sigma_0, \sigma)$ be an instance of \ra.
The set~${H := X \times E(C_n)}$ is the set of all \emph{object-edge-pairs} and~$d: H \rightarrow \{0,1\}$ is the function that assign to each object-edge-pair~$(p,e) \in H$ the direction~$c$ such that~$e$ lies on the path of~$p$ from its initial agent to its destination in direction~$c$.
\end{definition}

Using the notion of object-edge-pairs, we define \emph{candidate lists}.
These will be used to eliminate possible choices for selections.
\begin{definition}\label{definition:validity:f_gamma}
Let~$(p,e) \in H$ be an object-edge-pair and let~$\gamma$ be a selection.
The \emph{candidate list}~$C(p,e)$ of~$p$ at edge~$e$ and the \emph{size}~$f_\gamma(p,e)$ of~$C(p,e)$ with respect to~$\gamma$ is
\begin{equation*}
    C(p,e) := \{q \in X \mid e \in E_\gamma(p,q)\} \text{ and }
\end{equation*}
\begin{equation*}
    f_\gamma(p,e) := 
    \begin{cases}
        |\{q \in C(p,e) \mid \gamma(p) \neq \gamma(q)\}| & \text{ if~$d(p,e) = \gamma(p)$} \\
        1& \text{ otherwise.}
    \end{cases}
\end{equation*}
\end{definition}
\Cref{fig:candidatelist} presents an example of candidate lists.
\begin{figure}
\centering
\begin{minipage}[c]{8cm}
\begin{tikzpicture}
    \node[circle, draw, thick] (1) at (0,0) {$1$};
    \node[circle, draw, thick] (2) at (2,0) {$2$};
    \node[circle, draw, thick] (3) at (4,0) {$3$};
    \node[circle, draw, thick] (4) at (6,0) {$4$};    
   
    \path[draw,thick]
    (1) edge node [midway,fill=white] {$e$} (2)
    (2) edge node [midway,fill=white] {$f$} (3)
    (3) edge node [midway,fill=white] {$g$} (4)
    (4) edge[bend left=25] node [midway,fill=white] {$h$} (1);
\end{tikzpicture}
\end{minipage}
\begin{minipage}[c]{4cm}
1:~$x_3 \succ x_2 \succ$ \fbox{$x_1$}\\
2:~$x_1 \succ x_3 \succ x_4 \succ$ \fbox{$x_2$}\\
3:~$x_4 \succ x_1 \succ$ \fbox{$x_3$}\\
4:~$x_2 \succ x_3 \succ$ \fbox{$x_4$}
\end{minipage}
\caption{Example for \ra{} on a~$C_4$ with edges~$e,f,g,h$.
The candidate lists of~$x_1$ are~$C(x_1, e) := \{ x_2, x_3\},\ C(x_1,h) := \emptyset,\ C(x_1, g) := \emptyset$, and~$C(x_1,f) := \{x_4\}$.
Note that since~$\sigma^{-1}(x_1) = 2$, the candidate list~$C(x_1, e)$ is defined such that~$x_1$ is swapped in clockwise direction while the candidate list of~$x_1$ for all other edges is defined for counter-clockwise direction.}
\label{fig:candidatelist}
\end{figure}
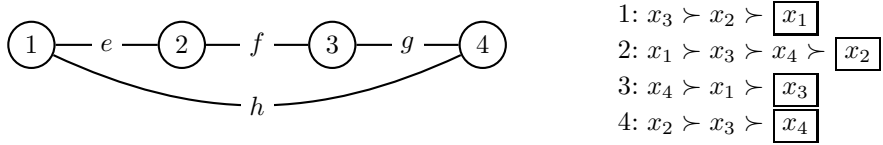
Note that the size of~$C(p,e)$ with respect to~$\gamma$ is set to one if~$d(p,e) \neq \gamma(p)$.
This is due to the fact that we will later search for a selection~$\gamma^*$ such that~$ f_{\gamma^*}(p,e)=1$ for all objects~$p$ and all edges with~$d(p,e) = \gamma^*(p)$.
This definition then avoids a case distinction.
The following observation follows from \cref{max2edge} and the observation that~$E_\gamma(p,q) = E_{\gamma'}(p,q)$ for all~$\gamma'$ with~$\gamma(p)=\gamma'(p)$ and~$\gamma(q)=\gamma'(q)$ with a simple counting argument.
\begin{observation}\label{observation:num.q.inCL}
	For each~$p \in X$ it holds that~$\sum_{e \in E} |C(p,e)| \leq 4|X|$.
\end{observation}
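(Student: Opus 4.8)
The plan is to rewrite the sum as a count of object-edge-pairs and then swap the order of summation. By definition of the candidate list, $\sum_{e \in E} |C(p,e)|$ equals the number of pairs $(e,q) \in E \times X$ with $e \in E_\gamma(p,q)$. Swapping the order of summation, this is $\sum_{q \in X} |\{e \in E : e \in E_\gamma(p,q)\}|$, so it suffices to show that each fixed object $q$ contributes at most four edges. The whole point is that membership $e \in E_\gamma(p,q)$ forces $e$ to lie on $P_\gamma(p)$, hence forces $\gamma(p) = d(p,e)$; so the direction of $p$ is not globally fixed but is dictated by the edge $e$.

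For a fixed $q$, I would split the contributing edges according to the value of $d(p,e)$, i.e.\ according to whether $e$ lies on the clockwise arc or the counter-clockwise arc of $p$ between $\sigma_0^{-1}(p)$ and $\sigma^{-1}(p)$. Every edge on the clockwise arc is accounted for by the direction pair $(\gamma(p),\gamma(q)) = (1,0)$, and every edge on the counter-clockwise arc by $(\gamma(p),\gamma(q)) = (0,1)$. By the observation that $E_\gamma(p,q)$ depends only on $\gamma(p)$ and $\gamma(q)$, each of these two choices determines a well-defined edge set independently of the rest of $\gamma$, and by \cref{max2edge} each such set has size at most two, since it contains at most one edge per shared path in $\xi_\gamma(p,q)$ and $|\xi_\gamma(p,q)| \leq 2$. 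Hence $q$ contributes at most $2 + 2 = 4$ edges, and summing over all $q \in X$ gives $\sum_{e \in E}|C(p,e)| \leq 4|X|$.

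The step that needs the most care is precisely this bookkeeping of the direction of $p$: a single selection only sends $p$ in one direction, which would yield a bound of $2|X|$, but the candidate lists implicitly let $p$ travel both ways (one direction for each edge, via $d(p,e)$), and it is exactly the two admissible direction pairs $(0,1)$ and $(1,0)$ that produce the factor of $4$. Beyond that, the argument is routine: I only need to invoke direction-independence to make the two edge sets well-defined and \cref{max2edge} to cap each at two edges, so I do not expect any genuine obstacle once the summation is reorganized this way.
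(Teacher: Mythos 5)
Your proof is correct and is exactly the "simple counting argument" the paper alludes to: it invokes the same two ingredients (the direction-independence of~$E_\gamma(p,q)$ and the bound~$|E_\gamma(p,q)|\leq 2$ from \cref{max2edge}) and obtains the factor~$4$ from the two admissible direction pairs per object~$q$. You have merely spelled out the bookkeeping that the paper leaves implicit, so there is nothing to object to.
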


We can finally define the first property called \emph{exact} from the set of properties that characterizes the selections for which \texttt{Greedy Swap} returns True and which, by \cref{greedySwaps}, correspond to a solutions for \ra.
\begin{definition}\label{def:exact}
Let~$\mathcal{I} := (N, X, \succ, C_n, \sigma_0, \sigma)$ be an instance of \ra{}.
A selection~$\gamma$ is \emph{unambiguous} if for all~$(p,e) \in H$ it holds that~$f_\gamma(p,e) \leq 1$;~$\gamma$ is \emph{complete} if for all~$(p,e) \in H$ it holds that~$f_\gamma(p,e) \geq 1$, and \emph{exact} if~$f_\gamma(p,e) = 1$.
\end{definition}

Where exactness guarantees that for each object and for each edge on its path for~$\gamma$ there exists a possible swapping partner for this edge, the second property will guarantee that no pair of objects blocks one another in a selection.
We start with the case in which an object blocks another object that is swapped in the same direction.
\begin{definition}\label{def:shield}
Let~$\gamma$ be a selection and let~$p$ and~$q$ be two objects with~$\gamma(p) = \gamma(q)$.
Then~$q$ \emph{shields}~$p$ in direction~$\gamma(p)$ if~$P_\gamma(p) \cap P_\gamma(q) \neq \emptyset$, the object~$q$ is closer to~$p$ than the destination of~$p$ (in direction~$\gamma(p)$), and there exists an agent~$i$ on~$P_\gamma(p) \cap P_\gamma(q)$ with~$q \succ_i p$.
\end{definition}
In \cref{fig:candidatelist} the object~$x_1$ shields~$x_4$ in clockwise direction.
Note that if~$q$ shields~$p$ in direction~$c$, then~$p$ and~$q$ cannot be both swapped in direction~$c$ as the agent~$i$ obtains~$q$ first and then will not accept~$p$ afterwards.
However, by definition~$i$ is on the path of~$p$ and hence has to obtain~$p$ or the target assignment~$\sigma$ cannot be reached.

We continue with the case where an object blocks another object that is swapped in the opposite direction.
Based on \cref{lem:uniqueSwap}, we define compatibility of two objects.
\begin{definition}\label{def:compatible}
Let~$\gamma$ be a selection and let~$p$ and~$q$ be two objects with~$\gamma(p) \neq \gamma(q)$.
Then,~$p$ and~$q$ are \emph{compatible} in selection~$\gamma$ if for each~$P \in \xi_\gamma(p,q)$ there exists exactly one edge on~$P$ that is also contained in~$E_\gamma(p,q)$.
\end{definition}

Based on \cref{def:shield,def:compatible} we formalize the second property that characterize valid selections as follows.
\begin{definition}\label{def:harmonic}
A selection~$\gamma$ is \emph{harmonic} if for every object~$p$ there is no object~$q$ moving in direction~$\gamma(p)$ that shields~$p$ in that direction,
and every object~$r$ with direction~$1-\gamma(p)$ is compatible with~$p$.
A selection is \emph{valid}, if it is both exact and harmonic.
\end{definition}

In order to show that valid selections characterize those selections that correspond to a solution of \ra{}, we start with an intermediate lemma.

\begin{lemma}\label{order_lemma}
Let~$\mathcal{I} := (N, X, \succ, C_n, \sigma_0, \sigma)$ be an instance of \ra{} on cycle~$C_n$, let~$\gamma$ be a selection, let~$(p,e_0),(p,e_1) \in H$ be two object-edge-pairs where~$d(p,e_0) = d(p,e_1)$ and let~$q_0, q_1$ be objects with direction~$1 - \gamma(p)$. Let further~$q_0 \in C(p,e_0)$ and let~$q_1 \in C(p,e_1)$. Let~$h \in \{0,1\}$ be the index such that~$q_h$ starts closer to~$p$ in direction~$\gamma(p)$ than~$q_{1-h}$. 
If~$\gamma$ is harmonic, then~$e_h$ is closer to~$p$ in direction~$\gamma(p)$ than~$e_{1-h}$. If~$e_{1-h}$ is closer to~$p$ in direction~$\gamma(p)$ than~$e_{h}$, then~$q_{h}$ shields~$q_{1-h}$.
\end{lemma}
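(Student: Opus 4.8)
The plan is to reduce the whole statement to its second assertion. First I would record that, since $(p,e_0)$ and $(p,e_1)$ are \emph{two} object-edge-pairs, the edges $e_0$ and $e_1$ are distinct, so exactly one of them is encountered first when travelling from $p$ in direction $\gamma(p)$; thus ``closer to $p$'' is a strict total order on $\{e_0,e_1\}$. Granting the second assertion, the first follows by contraposition: $q_h$ and $q_{1-h}$ both move in direction $1-\gamma(p)$, so if $\gamma$ is harmonic then by \cref{def:harmonic} no object moving in direction $1-\gamma(p)$ shields another in that direction, and in particular $q_h$ does not shield $q_{1-h}$; the second assertion then forbids $e_{1-h}$ from being closer to $p$ than $e_h$, leaving $e_h$ strictly closer. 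Hence only the implication ``$e_{1-h}$ closer $\Rightarrow q_h$ shields $q_{1-h}$'' requires work, and I would fix the convention $\gamma(p)=1$ (clockwise), the case $\gamma(p)=0$ being symmetric.

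For this implication I would check the three requirements of \cref{def:shield} in turn. Write $s_h:=\sigma_0^{-1}(q_h)$ and $s_{1-h}:=\sigma_0^{-1}(q_{1-h})$ for the starting agents. By hypothesis $s_h$ is reached before $s_{1-h}$ going clockwise from $p$, and by assumption $e_{1-h}$ lies clockwise-before $e_h$, which lies clockwise-before or at $s_h$ because $e_h\in E_\gamma(p,q_h)\subseteq P_\gamma(q_h)$. Consequently $q_{1-h}$, travelling counter-clockwise from $s_{1-h}$, must pass through $s_h$ before reaching $e_{1-h}$. This shows at once that $s_h\in P_\gamma(q_h)\cap P_\gamma(q_{1-h})$, so the overlap is nonempty, and that $q_h$ is closer to $q_{1-h}$ than the destination of $q_{1-h}$ in direction $1-\gamma(p)$. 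The first two conditions of \cref{def:shield} are thereby met, and it remains to exhibit an agent $i$ on the overlap with $q_h\succ_i q_{1-h}$.

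Producing this agent is the main obstacle, and here I would argue by contradiction with a counting argument in the spirit of \cref{greedySwaps}. Suppose every agent on $P_\gamma(q_h)\cap P_\gamma(q_{1-h})$ preferred $q_{1-h}$ over $q_h$. Then nothing blocks $q_{1-h}$ from following $q_h$ through the whole overlap, so I can determine the forced meeting edges of $p$ with $q_h$ and with $q_{1-h}$ exactly as in the proof of \cref{greedySwaps}: before meeting $p$, the object $q_h$ (resp.\ $q_{1-h}$) must first be swapped with precisely the objects of direction $\gamma(p)=1$ that start strictly between $p$ and $s_h$ (resp.\ $s_{1-h}$). Since $s_h$ precedes $s_{1-h}$ clockwise, the former set is contained in the latter, and the deficit is accounted for exactly by the counter-clockwise objects starting in the arc from $s_h$ to $s_{1-h}$ --- a set that already contains $q_h$ itself. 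Computing the resulting meeting positions then forces $p$ to meet $q_h$ strictly clockwise-before $q_{1-h}$, i.e.\ $e_h$ is strictly closer to $p$ than $e_{1-h}$, contradicting the assumption that $e_{1-h}$ is the closer edge.

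The delicate points I expect to spend the most care on are twofold. First, under the ``no preferring agent'' assumption I must argue that the feasible edges $e_h,e_{1-h}$ really coincide with the forced meeting edges produced by the counting argument, rather than being some other admissible edges; for this I would invoke \cref{lem:uniqueSwap} together with \cref{max2edge} to pin down the unique admissible edge on each shared path. Second, I must handle the degenerate geometry of the footnote to \cref{def:swapspace}, where $\xi_\gamma(p,q_{1-h})$ may consist of two shared paths and a ``second meeting'' has to be analysed; this case I would reduce to the first-meeting case by restarting the counting from the configuration obtained immediately after the first swap, exactly as is done in the proof of \cref{greedySwaps}.
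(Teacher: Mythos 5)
Your contrapositive reduction of the first assertion to the second is sound (harmonicity forbids $q_h$ from shielding $q_{1-h}$, so the second assertion forces $e_h$ to be the closer edge), and your geometric verification of the first two conditions of \cref{def:shield} --- nonempty overlap, and $q_h$ starting before the destination of $q_{1-h}$ --- is correct and matches what the paper does implicitly. The genuine gap is in the third condition, the existence of an agent $i$ on the overlap with $q_h \succ_i q_{1-h}$, which is the whole content of the lemma. You establish it by simulating an execution: assuming no such agent exists, you compute ``forced meeting edges'' by counting, as in the proof of \cref{greedySwaps}, and conclude that $p$ must meet $q_h$ before $q_{1-h}$. But \cref{order_lemma} assumes nothing about~$\gamma$ yielding~$\sigma$; indeed, the configuration in the second assertion ($e_{1-h}$ closer than $e_h$) is precisely one in which~$\gamma$ cannot yield~$\sigma$, so there is no sequence of rational swaps in which your meeting edges materialize. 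The counting in \cref{greedySwaps} is carried out under the standing assumption that~$\sigma$ is reachable respecting~$\gamma$; here that assumption is not merely unavailable, it is false. The same defect undoes your proposed repair: \cref{lem:uniqueSwap} and \cref{max2edge} are statements of the form ``\ldots or~$\gamma$ does not yield~$\sigma$'', hence exactly vacuous in the situation you analyze, so they cannot identify the preference-feasible edges $e_h$, $e_{1-h}$ with your counted meeting edges. Moreover, even granting some execution, your count needs every intermediate swap of $q_h$ and $q_{1-h}$ with the other objects between them and $p$ to be rational, and the preferences of those other agents are completely unconstrained by your ``no preferring agent on the overlap'' assumption.

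The paper's proof never leaves the preference lists. Writing $e_{1-h} = \{k,k+1\}$ with $k+1$ on the far side from $p$, it extracts $p \succ_{k+1} q_{1-h}$ directly from $q_{1-h} \in C(p,e_{1-h})$; it then notes that the path of $q_h$ must pass $k+1$ (the destination of $q_{1-h}$ is at $k$ or beyond, and $q_h$ travels in front of $q_{1-h}$, so $q_h$ must end at $k-1$ or beyond), and that $p$ passes $k+1$ before $q_h$ does since the unique admissible swap edge $e_h$ of $p$ and $q_h$ lies beyond $k+1$; harmonicity then forces $q_h \succ_{k+1} p$, as otherwise $p$ would block $q_h$ at $k+1$. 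Transitivity gives $q_h \succ_{k+1} p \succ_{k+1} q_{1-h}$, so $k+1$ itself is the shielding witness, and together with the no-shielding relation $q_{1-h} \succ_{k+1} q_h$ this closes the cycle $q_h \succ_{k+1} p \succ_{k+1} q_{1-h} \succ_{k+1} q_h$, contradicting strictness. This local, execution-free derivation at a single agent is what your write-up is missing: to repair it, you would need to replace the counting simulation by such a static argument producing the relation $q_h \succ_{k+1} p$ (and hence $q_h \succ_{k+1} q_{1-h}$) from harmonicity and the candidate-list hypotheses alone.
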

\begin{proof}
Considering the preference lists at edges~$e_{1-h}$ and~$e_h$ we will denote the agents at~$e_{1-h}$ by~$k$ and~$k+1$ and the agents at~$e_h$ by~$m$ and~$m+1$.\\
Suppose towards a contradiction that~$e_{1-h}$ is closer to~$p$ than~$e_h$ in direction~$d(p,e_h)$.

We distinguish between two cases. In the first case~$\gamma$ is harmonic. Then since~$q_h$ and~$q_{1-h}$ are assigned the same direction,~$q_h$ and~$q_{1-h}$ are not opposite. Further~$q_h$ and~$q_{1-h}$ cannot shield each other. Thus, it must hold that
\begin{equation*}
q_{1-h} \succ_{k+1} q_h,
\end{equation*}
because~$k+1$ is on both~$P_\gamma(q_h)$ and~$P_\gamma(q_{1-h})$.

In the second case~$\gamma$ is not harmonic and there is some agent that will not accept~$q_{1-h}$ since it prefers~$q_h$ more and~$q_{1-h}$ would be shielded from its destination by~$q_h$. Therefore we will now assume that~$\gamma$ is harmonic and show that then~$e_h$ must be closer to~$p$ in direction~$\gamma(p)$ than~$e_{1-h}$.

Since~$p$ can be swapped with~$q_{1-h}$ at~$e_{1-h}$ we know that 
\begin{equation*}
p \succ_{k+1} q_{1-h}
\end{equation*}
since~$q_{1-h}$ arrives at~$k+1$ and is then swapped with~$p$.
However~$e_h$ is further away from~$p$ than~$e_{1-h}$ and in order for~$q_h$ to be swapped with~$p$ at that edge we know
\begin{equation*}
q_h \succ_{m} p
\end{equation*}
We also know that~$p$ has already passed agent~$k+1$ to get to~$m$ and that~$q_h$ must at least come to agent~$k-1$ since otherwise~$q_{1-h}$ could not reach its destination. Since~$q_{1-h}$ can be swapped with~$p$ at~$e_{1-h}$, its destination is at least agent~$k$. But if~$q_h$ must move at least to~$k-1$ it also must pass~$k+1$ and since~$p$ was there first we know that 
\begin{equation*}
q_h \succ_{k+1} p
\end{equation*}
because otherwise~$p$ would shield~$q_h$ from its destination which we know cannot be true as~$\gamma$ is harmonic.
However, if those terms are true then
\begin{equation*}
q_h \succ_{k+1} p \succ_{k+1} q_{1-h} \succ_{k+1} q_h,
\end{equation*}
which cannot be true as~$\succ$ is a strict ordering, a contradiction.
\end{proof}

We end this subsection with the statement that valid selections are exactly the selections that correspond to a solution of \ra.

\begin{proposition}
\label{validDecides}
Let~$\mathcal{I} := (N, X, \succ, C_n, \sigma_0, \sigma)$ be an instance of \ra{} and let~$\gamma$ be a selection for~$\mathcal{I}$.
\texttt{Greedy Swap} returns True on input~$\gamma$ if and only if~$\gamma$ is valid.
\end{proposition}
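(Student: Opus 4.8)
The plan is to invoke \cref{greedySwaps}, which equates \texttt{Greedy Swap} returning \textbf{True} with \(\gamma\) yielding \(\sigma\); the statement then reduces to proving that \(\gamma\) yields \(\sigma\) if and only if \(\gamma\) is exact and harmonic, i.e.\ valid, and I would prove the two implications separately with the backward one carrying essentially all the difficulty. For the forward implication, suppose \texttt{Greedy Swap} returns \textbf{True}, so \(\gamma\) yields \(\sigma\), and read off exactness and harmonicity. For exactness, fix \((p,e)\in H\) with \(d(p,e)=\gamma(p)\). Then \(e\) lies on \(P_\gamma(p)\), so \(p\) must cross \(e\) in any yielding sequence, and a crossing is realised by a single swap there with one opposite-direction object \(q\); by \cref{lem:uniqueSwap} this \(e\) is the unique admissible swap-edge for \(p\) and \(q\) on their shared path, so \(q\in C(p,e)\) and \(f_\gamma(p,e)\ge 1\). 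If a second opposite-direction object \(q'\) also satisfied \(e\in E_\gamma(p,q')\), then by the same uniqueness both \(q\) and \(q'\) would have to be swapped with \(p\) at the single crossing of \(e\), which is impossible; hence \(f_\gamma(p,e)\le 1\) and exactness holds. For harmonicity, if some same-direction \(q\) shielded \(p\), the shielding agent, lying on \(P_\gamma(p)\), would accept \(q\) first and refuse \(p\) afterwards, so \(p\) could never reach its destination; and compatibility of every opposite-direction \(r\) with \(p\) is precisely the single-edge-per-shared-path conclusion of \cref{lem:uniqueSwap} for a routed pair. Thus \(\gamma\) is valid.

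For the backward implication, assume \(\gamma\) is valid; I would show that \(\gamma\) yields \(\sigma\), after which \cref{greedySwaps} gives that \texttt{Greedy Swap} returns \textbf{True}. The idea is to build a yielding sequence by scheduling the \emph{intended swaps}: for each object \(p\) and each edge \(e\) on \(P_\gamma(p)\), exactness supplies a unique opposite-direction partner \(q^{\ast}(p,e)\), and since \(E_\gamma(p,q)=E_\gamma(q,p)\) this is a mutual intended swap that is rational by the definition of \(E_\gamma\) through \cref{lem:uniqueSwap}. Because exactness forces a swap on every edge of \(P_\gamma(p)\), object \(p\) must swap at its consecutive path-edges \(e_1\prec e_2\prec\cdots\) in order, and \cref{order_lemma} shows that the starting positions of the partners \(q^{\ast}(p,e_1),q^{\ast}(p,e_2),\dots\) are sorted in the same order, so \(p\) meets these partners exactly in this order as it advances. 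I would turn this into an induction on the number of swaps already performed, maintaining the invariant that every object is \emph{on schedule} (the swaps it has done are precisely its first few intended swaps, in order). Under this invariant, whenever two objects occupy a swap position at an edge \(e\), the ordering consistency forces them to be each other's intended partner there, so the swap is rational; and the ``no shielding'' clause of harmonicity guarantees that no same-direction object ever comes to rest in front of an object still short of its destination. Hence a partially routed configuration is never deadlocked before \(\sigma\): some intended swap is always enabled, and performing it preserves the invariant while strictly increasing total progress. Once no object can move, every object sits at its image under \(\sigma^{-1}\), so the sequence yields \(\sigma\). The boundary cases are immediate: \(\sigma_0=\sigma\) makes the algorithm return \textbf{True} at once, and when \(\sigma_0\neq\sigma\) exactness forbids all objects from sharing one direction, so the second early-exit check does not trigger.

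The step I expect to be hardest is the deadlock-freedom argument, i.e.\ proving that the object facing \(p\) across its next path-edge \(e_{j+1}\) is exactly \(q^{\ast}(p,e_{j+1})\) and not some other opposite-direction object that slipped in between, nor a stalled same-direction object blocking the way. This is precisely where \cref{order_lemma} is indispensable: its first conclusion prevents a partner that should be met later from overtaking one that should be met earlier, while its shielding statement, together with the harmonicity hypothesis, forbids an opposite-direction object from halting between \(p\) and its partner. Making the \emph{on-schedule} invariant airtight—in particular exploiting the symmetry \(E_\gamma(p,q)=E_\gamma(q,p)\) so that \(p\) and \(q^{\ast}(p,e)\) agree on their common swap-edge, and using \cref{observation:num.q.inCL} to keep the relevant partner sets finite and well-ordered—is the most delicate bookkeeping in the write-up.
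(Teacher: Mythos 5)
Your proposal is correct and takes essentially the same route as the paper: both directions rest on the same ingredients (the unique swap edges of \cref{lem:uniqueSwap} for exactness and compatibility, the shielding argument for harmonicity, and \cref{order_lemma} as the crucial tool for showing that a valid selection never gets stuck during the execution of \texttt{Greedy Swap}). The only organizational difference is that you factor the claim through \cref{greedySwaps} (``yields $\sigma$'') and phrase the valid-implies-True direction as an explicit on-schedule induction, whereas the paper argues directly about the object met at each edge and derives the same contradictions (incomplete, ambiguous, or inharmonic via \cref{order_lemma}).
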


\begin{proof}
First we will prove that if~$\gamma$ is valid then \texttt{Greedy Swap} returns True. 
Let therefore~$q$ be an object with path~$P := P_\gamma(q)$. We will prove that if~$\gamma$ is valid then~$q$ is guaranteed to pass every edge of~$P$. Since~$q$ is arbitrarily chosen we can generalize the statement to all objects. Now recall that \texttt{Greedy Swap} simply performs a swap between two objects that are in swap position over and over again. It only returns False if there are two objects in swap position at edge~$e$ that cannot be swapped according to the preference lists of the agents that are incident to~$e$. If we prove for every object, at every edge on its path, that the corresponding preference lists always allow a swap, we know that \texttt{Greedy Swap} returns True.

Consider any edge~$e_i$ on the path~$P_\gamma(q)$ of~$q$.
Let~$Q$ be the path starting with~$e_i$ and ending at~$\sigma^{-1}(q)$, that is, the path that~$q$ still has to take after reaching~$e_i$.
Let~$Q' = P_\gamma(q) \setminus Q$ be the path that~$q$ has already passed when reaching~$e_i$.
Let~$t_i$ be the object that~$q$ meets at~$e_i$ with~$\gamma(q) \neq \gamma(t_i)$.
Suppose~$q$ and~$t_i$ cannot be swapped at~$e_i$ due to the preference lists of their agents. Then either there exists no object that~$q$ can be swapped with at edge~$e_i$ in which case~$\gamma$ is incomplete, a contradiction, or there exists object~$q'$ with~$\gamma(q') \neq \gamma(q)$ that can be swapped with~$q$ at~$e_i$ but~$t_i$ starts closer to~$q$ than~$q'$. We will now show that if~$e_i$ is not an edge where~$q$ and~$t_i$ can be swapped, then that leads to a contradiction.

First of all, since~$\gamma$ is harmonic there must be an edge~$f$ on the shared path of~$q$ and~$t_i$ for~$\gamma$ where~$t_i$ and~$q$ can be swapped. We distinguish between two cases. If~$f$ is on~$Q'$, then~$q$ has already passed edge~$f$ and there exist two objects with direction~$1 - \gamma(q)$ that~$q$ can be swapped with at~$f$. Then,~$\gamma$ would be ambiguous, a contradiction. Otherwise~$f$ is on~$Q$ but not equal to~$e_i$. But then the edge where~$t_i$ and~$q$ can be swapped at is further away from~$q$ with respect to~$\gamma$, than the edge where~$q$ and~$q'$ can be swapped at, even though~$t_i$ is closer to~$q$ than~$q'$, with respect to~$\gamma$. But then due to \cref{order_lemma},~$\gamma$ is not harmonic, a contradiction.
Thus we can conclude that~$q$ and~$t_i$ can be swapped at edge~$e_i$.
Since we chose the edge~$e_i$ to be arbitrary we can generalize this to all edges on~$P_\gamma(q)$. From this follows that~$q$ reaches its destination. This can be generalized to all objects and thus, as described above, \texttt{Greedy Swap} returns True.

Now we will show that if~$\gamma$ is not valid, then \texttt{Greedy Swap} returns False.
Let~$\gamma$ be incomplete. Then there exists an object~$p$ and an edge~$e$ on its path where there exists no object~$q$ such that~$p$ and~$q$ can be swapped at~$e$.
Hence, if~$p$ reaches this edge there is no object that it can be swapped with.
Since this edge is on its path,~$p$ cannot have reached its destination and hence~$\sigma$ cannot be reached.
\texttt{Greedy Swap} therefore returns False.

Let~$\gamma$ be ambiguous. Then there exist objects~$p$,~$q$ and~$r$ and edge~$e$ on their paths such that if~$p$ reaches~$e$, then it can be swapped with both~$q$ and~$r$. Say~$p$ is swapped with~$q$ at~$e$.
Note that if~$p$ and~$r$ can be swapped at edge~$e$ then~$r$ must at least reach both agents incident to~$e$ to get to its destination and at one of these two agents, let us call it~$i$ it holds~$p \succ_i r$.
But then~$r$ cannot be accepted by~$i$ anymore since~$p$ has already been assigned to both of the agents before~$r$ has been assigned to one of them and thus~$r$ cannot reach its destination. Hence,~$\sigma$ cannot be reached. \texttt{Greedy Swap} therefore returns False.

Let~$\gamma$ be inharmonic. We distinguish between two cases.
In the first case there exist objects~$p$ and~$q$ such that~$q$ shields~$p$ from its destination. That means that~$p$ and~$q$ walk in the same direction and~$q$ stops at its destination~$i$ before~$p$ can pass~$i$. Since~$i$, however, is on~$P_\gamma(p)$,~$p$ cannot reach its destination and~$\sigma$ cannot be reached.
In the second case there exist objects~$p$ and~$q$ with~$\gamma(p) \neq \gamma(q)$ and a path~$P \in \xi_\gamma(p,q)$ such that there exists no edge on~$P$ where~$p$ and~$q$ can be swapped.
Since~$P \in \xi_\gamma(p,q)$, it holds that~$p$ and~$q$ must meet at some edge~$e$ on~$P$ and none of them has already reached its destination.
Since~$p$ and~$q$ cannot be swapped at~$e$ they cannot reach their destinations and \texttt{Greedy Swap} returns False.

If~$\gamma$ is not valid, then it is either incomplete, ambiguous or inharmonic and thus \texttt{Greedy Swap} returns False on input~$\gamma$, a contradiction.
\end{proof}

\subsection{Reduction to 2-SAT}\label{subsec:2SAT}
In this subsection we will construct a 2-SAT formula~$\phi$ such that a selection is valid if and only if it corresponds to a satisfying truth assignment of~$\phi$.
We will use a variable for each object and say that setting this variable to True corresponds to swapping this object in clockwise direction.
We will use objects interchangeably with their respective variables.
The formula~$\phi$ will be the conjunction of two subformulas~$\psi_h$ and~$\psi_e$.
The first subformula~$\psi_h$ corresponds to harmonic selections and is constructed as follows.
\begin{construction}
\label{const:harmonic}
Let~$\mathcal{I}:=(N, X, \succ, C_n, \sigma_0, \sigma)$ be an instance of \ra.
For every pair~$p,q$ of objects for which there exists a direction~$c$ such that~$q$ shields~$p$ in direction~$c$, we add the clause~$p \rightarrow \neg q$ if~$c = 1$, and~$\neg p \rightarrow q$ otherwise.
Further, for every pair~$p,q$ of objects for which there exists a direction~$c$ such that~$p$ and~$q$ are not compatible for any selection~$\gamma$ where~$c = \gamma(p) \neq \gamma(q)$, we add the term~$p \rightarrow q$ if~$c = 1$, and~$\neg p \rightarrow \neg q$ otherwise.
We use~$\psi_h$ to denote the constructed formula.
\end{construction}

We will now show that any satisfying truth assignment of~$\psi_h$ exactly corresponds to a harmonic selection.
\begin{lemma}
\label{harmonicSel}
Let~$\mathcal{I}:=(N, X, \succ, C_n, \sigma_0, \sigma)$ be an instance of \ra{}.
A selection~$\gamma$ is harmonic if and only if it corresponds to a satisfying truth assignment of~$\psi_h$.
\end{lemma}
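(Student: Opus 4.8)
The plan is to prove the equivalence clause by clause, exploiting that both the shielding relation and the (in)compatibility relation between two objects depend only on the directions assigned to those two objects and not on the rest of the selection. Concretely, $P_\gamma(p)$ depends only on $\gamma(p)$, so whether $q$ shields $p$ in direction $c$ (\cref{def:shield}) is determined entirely by fixing $\gamma(p) = \gamma(q) = c$; likewise, as noted after \cref{max2edge}, both $\xi_\gamma(p,q)$ and $E_\gamma(p,q)$ depend only on $\gamma(p)$ and $\gamma(q)$, so whether $p$ and $q$ are compatible (\cref{def:compatible}) is determined by fixing $\gamma(p) = c \neq \gamma(q)$. This is exactly what lets \cref{const:harmonic} add clauses that refer only to ``there exists a direction~$c$'' rather than to an actual selection. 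I would first make this direction-dependence explicit, since it is the conceptual crux that reduces the rest to bookkeeping.

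Next I would translate each clause type into the forbidden configuration it encodes. A shielding clause for $(p,q)$ with $c = 1$ is $p \rightarrow \neg q$, logically $\neg(p \wedge q)$, and thus forbids $\gamma(p) = \gamma(q) = 1$; for $c = 0$ the clause $\neg p \rightarrow q$ is $p \vee q$ and forbids $\gamma(p) = \gamma(q) = 0$. A compatibility clause for $(p,q)$ with $c = 1$ is $p \rightarrow q$, i.e.\ $\neg p \vee q$, forbidding $\gamma(p) = 1 \wedge \gamma(q) = 0$; for $c = 0$ the clause $\neg p \rightarrow \neg q$ forbids $\gamma(p) = 0 \wedge \gamma(q) = 1$. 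In every case the forbidden configuration of $(\gamma(p),\gamma(q))$ is precisely the one under which the pair witnesses a violation of harmonicity.

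For the forward direction ($\gamma$ harmonic implies the assignment satisfies $\psi_h$), I would argue contrapositively on a single violated clause: a violated shielding clause for $(p,q)$ in direction $c$ forces $\gamma(p) = \gamma(q) = c$, and since that clause was added precisely because $q$ shields $p$ in direction $c$ and shielding depends only on the common direction, $q$ then shields $p$ in direction $\gamma(p)$ in the actual selection, contradicting the first condition of \cref{def:harmonic}; a violated compatibility clause for $(p,q)$ in direction $c$ forces $\gamma(p) = c \neq \gamma(q)$, whence direction-independence gives that $q$ is incompatible with $p$ in~$\gamma$, contradicting the second condition. For the backward direction I would assume $\gamma$ is not harmonic and exhibit a violated clause: if some $q$ with $\gamma(q)=\gamma(p)=:c$ shields $p$, then the shielding clause for $(p,q)$ and $c$ lies in $\psi_h$ by construction and is violated by $\gamma$; if some $r$ with $\gamma(r)=1-\gamma(p)$ is incompatible with $p$, then with $c := \gamma(p)$ the compatibility clause for $(p,r)$ and direction $c$ lies in $\psi_h$ and is violated.

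I expect the only genuine obstacle to be the direction-independence step rather than the clause bookkeeping: one must check carefully that ``$q$ shields $p$ in direction $c$'' and ``$p,q$ incompatible with $\gamma(p)=c\neq\gamma(q)$'' are truly functions of $c$ alone, so that the existential ``$\exists c$'' used by \cref{const:harmonic} lines up with the realized directions in a given $\gamma$. Once that is in place, both implications collapse to the statement that each added clause forbids exactly one of the two-direction configurations disallowed by \cref{def:harmonic}, and conversely every disallowed configuration is witnessed by some clause. A minor point worth noting is that a single pair may contribute clauses for both $c=0$ and $c=1$, or a shielding clause together with a compatibility clause; since these constrain disjoint configurations of $(\gamma(p),\gamma(q))$, they never conflict and the argument applies to each independently.
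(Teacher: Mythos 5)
Your proposal is correct and follows essentially the same route as the paper's proof: both argue clause-by-clause in each direction, matching every violated clause of $\psi_h$ to a shielding or incompatibility violation of \cref{def:harmonic} and vice versa, using the fact (noted in the paper after \cref{max2edge}) that shielding and compatibility depend only on the directions assigned to the two objects involved. Your explicit treatment of this direction-dependence and of the four forbidden configurations is a slightly more careful write-up of exactly the argument the paper gives.
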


\begin{proof}
We will prove both directions of the statement by contradiction. For the first direction suppose that~$\gamma$ is a harmonic selection but it does not correspond to a satisfying truth assignment of~$\psi_h$.
Then there must exist a clause~$c$ in~$\psi_h$ such that~$c$ is not satisfied by the truth assignment corresponding to~$\gamma$.
We distinguish between two cases.
In the first case~$c$ is a clause for~$q$ shielding~$p$ in direction~$c$ for every selection~$\gamma'$ where~$\gamma'(p) = c_p = \gamma'(q)$ but it holds that~$\gamma(p) = c_p = \gamma(q)$.
But then, by definition,~$\gamma$ is not harmonic, a contradiction.
In the second case~$c$ is a clause for two non-compatible objects~$p$ and~$q$ for every selection~$\gamma$ where~$\gamma(p) = c_p = 1-\gamma(q)$ but it holds that~$\gamma(p) = c_p = 1-\gamma(q)$, a contradiction to~$\gamma$ being harmonic.

We will now show the other direction of the statement.
Suppose that~$\gamma$ corresponds to a satisfying truth assignment of~$\psi_h$ but it~$\gamma$ is not harmonic.
We again distinguish between two cases.	
In the first case there exist two objects~$p$ and~$q$ and a direction~$c_p \in \{0,1\}$ such that~$q$ shields~$p$ in direction~$c_p$ for every selection~$\gamma'$ where~$\gamma'(p) = c_p = \gamma'(q)$. But since~$\gamma$ corresponds to a satisfying truth assignment of~$\psi_h$, due to the clauses for shielding objects, if~$\gamma(p) = c_p$, then~$\gamma(q) = 1-c_p$ and thus,~$p$ does not shield~$q$, a contradiction.

In the second case there exist two objects~$p$ and~$q$ and a direction~$c_p \in \{0,1\}$ such that~$p$ and~$q$ are not compatible in direction~$c_p$ for every selection~$\gamma'$ where~$\gamma'(p) = c_p = 1- \gamma'(q)$. But since~$\gamma$ corresponds to a satisfying truth assignment of~$\psi_h$, due to the clause for compatibility, if~$\gamma(p) = c_p$, then~$\gamma(q) = c_p$ and thus,~$p$ and~$q$ are compatible, a contradiction.
\end{proof}

A formula for exact selections requires much more work.
To this end, we first introduce a variant of \ra.
\vspace{.3cm}
\problemdef{\fsra}{An instance~$\mathcal{I} := (N, X, \succ, C_n, \sigma_0, \sigma)$ of \ra{} and an edge~$e \in E(C_n)$}{Is~$\sigma$ reachable if the first swap is performed over edge~$e$?}
\vspace{.3cm}
Observe that~$\mathcal{I}$ is a yes-instance of \ra{} if and only if there is an edge~$e$ in~$C_n$ such that~$(\mathcal{I},e)$ is a yes-instance of~\fsra.
We will later iterate over all edges and check for each whether it yields a solution.
Suppose that some instance~$(\mathcal{I},e_i)$ of \fsra{} is a yes-instance.
Let~$e_i = \{i,i+1 \bmod n\}$ and~$j = i+1 \bmod n$.
Then, the object~$\sigma_0(i)$ is swapped into clockwise direction and the object~$\sigma_0(j)$ is swapped into counter-clockwise direction.
Thus, in every selection~$\gamma$ that yields the target assignment~$\sigma$ as a solution to~$(\mathcal{I},e_i)$, it must hold that~$\gamma(\sigma_0(i)) = 1$ and~$\gamma(\sigma_0(j)) = 0$.
We refer to the pair of objects~$x:= \sigma_0(i)$ and~$y:= \sigma_0(j)$ as the \emph{guess} of that instance and denote it~$\Phi := (x,y)$, where~$x$ refers to the object that is swapped into clockwise direction and~$y$ refers to the object that is swapped into counter-clockwise direction.
For every selection~$\gamma$ where~$\gamma(x) = 1 = 1 - \gamma(y)$, we say that~$\gamma$ \emph{respects} the guess~$\Phi := (x,y)$.
Further,~$\Phi_0$ denotes the object in guess~$\Phi$ which is swapped in counter-clockwise direction and~$\Phi_1$ denotes the object in guess~$\Phi$ which is swapped in clockwise direction.
We also denote the unique path of~$x$ in clockwise direction from its initial agent to its destination with~$P_x$ and the unique path of~$y$ in counter-clockwise direction from its initial agent to its destination with~$P_y$.
If we can prove that for a guess~$\Phi$ and the corresponding instance of \fsra{} there exists no valid selection that respects~$\Phi$, then we say that the guess~$\Phi$ is \emph{wrong}.
We will now define a property of objects which cannot be swapped in one direction given a certain guess~$\Phi$.
\begin{definition}\label{def:decidedObject}
Let~$\mathcal{I} := ((N, X, \succ, C_n, \sigma_0, \sigma), e)$ be an instance of \fsra{} with guess~$\Phi := (x,y)$ and let~$p$ be an object such that there exists a direction~$d$ such that there is no selection~$\gamma$ with~$\gamma(p) = d$ that respects~$\mathcal{I}$ and that is valid.
Then~$p$ is \emph{decided} in direction~$1-d$.
\end{definition}
We find that an object~$p \notin \{x,y\}$ is decided by means of five rules listed below.
Therein,~$q$ is an object in~$\{x,y\}$.
\begin{enumerate}
	\item The objects~$p$ and~$q$ are opposite.
	\item The object~$p$ occurs in none of~$q$'s candidate lists.
	\item The object~$p$ occurs in a candidate list of~$q$ together with another object~$r$ that is decided in direction $1-d(p,e)$.
	\item The agent~$\sigma(p)$ is neither on~$P_x$ nor on~$P_y$.
	\item The object~$p$ starts between two decided and opposite objects.
\end{enumerate}
We mention that there are a couple of additional technical conditions for the last rule and that there may be decided objects that are not found by either of the five rules.
For the sake of simplicity, we say that an object is decided only if fulfills at least one of the rules above.
All other objects are said to be \emph{undecided}.
We next show that each of these rules in fact shows that~$p$ is decided.

\begin{lemma}\label{lemma:decidedOpposite}
Let~$\Phi := (x,y)$ be a guess, let~$c$ be the decided direction of~$x$ in~$\Phi$, and let~$p \in X \setminus \{x,y\}$.
If~$p$ and~$x$ are opposite, then~$p$ is decided in the direction~$1-c$.
\end{lemma}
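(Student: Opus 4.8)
The plan is to show that no valid selection respecting $\Phi$ can set $\gamma(p) = c$; by \cref{def:decidedObject} this is exactly what it means for $p$ to be decided in direction $1-c$. Since $x = \Phi_1$, we have $c = 1$, and every selection respecting $\Phi$ satisfies $\gamma(x) = c$. So I would suppose towards a contradiction that there is a valid selection $\gamma$ respecting $\Phi$ with $\gamma(p) = c$. Then $\gamma(p) = \gamma(x) = c$, i.e.\ $p$ and $x$ travel in the same direction, so the only relevant obstruction to harmonicity (\cref{def:harmonic}) is shielding (\cref{def:shield}), since compatibility concerns only objects of opposite direction. The goal is then to exhibit such a shielding, contradicting validity.

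The core of the argument is to translate the hypothesis that $p$ and $x$ are opposite into a statement about their common direction-$c$ paths. Write $P_o^c$ for the path of object $o$ in direction $c$. By \cref{def:swapspace}, $p$ and $x$ being opposite means that for one of the two selections $\gamma'$ with $\gamma'(p) \neq \gamma'(x)$ we have $|\xi_{\gamma'}(p,x)| > 1$, and by \cref{max2edge} this value is exactly $2$. I would use the elementary fact that the intersection of two arcs of a cycle has two connected components if and only if their union is the entire cycle and neither arc contains the other. Applying this to the arcs $P_{\gamma'}(p)$ and $P_{\gamma'}(x)$ and passing to complements (the direction-$c$ path of an object is, up to its two endpoints, the complement of its direction-$(1-c)$ path), I obtain that $P_p^c$ and $P_x^c$ are nested: one is properly contained in the other, and the start and the destination of the inner object lie on the path of the outer object. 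The degenerate case flagged in the footnote of \cref{def:swapspace}, where a path consists of a single edge, yields the same nesting conclusion.

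Assume without loss of generality that $P_x^c \subsetneq P_p^c$; the reverse inclusion is handled symmetrically with the roles of $p$ and $x$ exchanged. Then $P_\gamma(p) \cap P_\gamma(x) = P_x^c \neq \emptyset$, and the start of $x$ occurs strictly before the destination of $p$ when traversing from $p$ in direction $c$, so $x$ is closer to $p$ than the destination of $p$. Consider the agent $i := \sigma^{-1}(x)$, which lies on $P_x^c$ and hence on the shared path, and which differs from $\sigma^{-1}(p)$ because $\sigma$ is a bijection and $p \neq x$. By \cref{rr1}, $x$ is the most preferred object in the reduced preference list of $i$. Hence either $x \succ_i p$, in which case all three conditions of \cref{def:shield} hold and $x$ shields $p$ in direction $c$, so $\gamma$ is not harmonic; or $p \notin X_i$, in which case $p$ can never be delivered to $i$ although $i$ lies on $P_\gamma(p)$, so $\gamma$ fails to be complete (\cref{def:exact}). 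In both cases $\gamma$ is not valid, contradicting our assumption.

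The main obstacle is the geometric equivalence in the second step: one must verify carefully that two shared components indeed force proper nesting of the direction-$c$ paths, tracking the behaviour at the shared endpoints and the single-edge footnote case, and must confirm that the destination agent $i$ lies strictly inside $P_p^c$ so that both the ``closer than the destination'' clause and the ``on the shared path'' clause of \cref{def:shield} are satisfied. Once the nesting is established, the remainder is the short preference argument via \cref{rr1}.
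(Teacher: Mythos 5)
Your proof is correct and takes essentially the same route as the paper's: oppositeness forces any selection assigning $p$ and $x$ the same direction~$c$ to contain a shielding pair, so no valid selection respecting~$\Phi$ can set $\gamma(p)=c$, which is exactly the definition of $p$ being decided in direction~$1-c$. The difference is only one of detail: the paper asserts the shielding outright (its case distinction $c=d$ versus $c\neq d$ corresponds to your nesting direction), whereas you justify it explicitly via the nesting of the direction-$c$ paths and Reduction Rule~\ref{rr1} applied to the inner object's destination agent, also covering the corner case where $p$ is absent from that agent's preference list through completeness rather than shielding.
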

\begin{proof}
	Observe that by definition there is a direction~$d$ such that in every selection~$\gamma$ with~$\gamma(q) \neq \gamma(p) = d$ it holds that~$|\xi_\gamma(p,q)| \geq 2$.
	If~$c=d$, then in any selection~$\gamma'$ with~$\gamma'(p) = \gamma(q) = c$, it holds that~$q$ shields~$p$.
	If~$c \neq d$, then in any selection~$\gamma'$ with~$\gamma'(p) = \gamma(q) = c$, it holds that~$p$ shields~$q$.
	Hence in no selection~$\gamma'$ with~$\gamma'(p) = \gamma'(q)$ is harmonic and thus there is no valid selection~$\gamma'$ with~$\gamma'(q) = c$ and thus~$q$ is decided in direction~$1-c$.
\end{proof}

The second rule focuses on objects that cannot be swapped with one of the two guessed objects.
\begin{lemma}\label{lemma:decidedObjects}
Let~$\Phi := (x,y)$ be a guess and let~$p \in X$ be an object such that~$y \neq p \neq x$.
If there exists a~$c \in \{0,1\}$ such that~$p$ is not in the candidate list of~$\Phi_c$, then~$p$ is decided.
\end{lemma}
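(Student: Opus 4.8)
The plan is to establish decidedness directly from \cref{def:decidedObject}: I will exhibit a direction $d \in \{0,1\}$ such that no valid selection respecting $\Phi$ can set $\gamma(p) = d$. Fix the direction $c$ guaranteed by the hypothesis and write $q := \Phi_c$, so that every selection respecting $\Phi$ assigns $q$ the fixed direction $c$. Recall from the remark following \cref{max2edge} that $\xi_\gamma(q,p)$, and hence $E_\gamma(q,p)$, depend only on $\gamma(q) = c$ and $\gamma(p)$. The hypothesis that $p$ lies in none of the candidate lists of $q$ then says precisely that $E_\gamma(q,p) = \emptyset$ whenever $\gamma(p) = 1-c$, i.e.\ $q$ and $p$ can never be swapped while moving in opposite directions. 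I would split the proof according to whether the direction-$(1-c)$ path of $p$ meets $P_q$.

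In the first case, where these two paths intersect, I set $d := 1-c$. Consider any selection with $\gamma(p) = 1-c$. Then $\gamma(p) \neq \gamma(q)$ and $\xi_\gamma(q,p) \neq \emptyset$, yet $E_\gamma(q,p) = \emptyset$, so no shared path carries a swap edge and $q$ and $p$ are not compatible by \cref{def:compatible}. By \cref{def:harmonic} such a selection is inharmonic and therefore not valid, so no valid selection respecting $\Phi$ sets $\gamma(p) = 1-c$ and $p$ is decided in direction $c$. This case is a direct unfolding of the harmonicity definition.

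The interesting case is when the direction-$(1-c)$ path of $p$ is edge-disjoint from $P_q$; here I set $d := c$ and show $p$ is decided in direction $1-c$. The key observation is that the two candidate paths of $p$ (in directions $0$ and $1$) meet only in the vertices $\sigma_0^{-1}(p)$ and $\sigma^{-1}(p)$ and partition the edge set of $C_n$; hence edge-disjointness from the direction-$(1-c)$ path forces every edge of $P_q$ onto the direction-$c$ path of $p$. Writing $s_q := \sigma_0^{-1}(q)$ and $t_q := \sigma^{-1}(q)$, a short arc argument (using $s_q \neq t_q$ because the first-swap object $q$ moves, together with $s_q \neq \sigma_0^{-1}(p)$ and $t_q \neq \sigma^{-1}(p)$ since $q \neq p$) places $t_q$ strictly between the two endpoints of the direction-$c$ path of $p$. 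Now suppose $\gamma(p) = c$: then $q$ and $p$ move in the same direction, their paths overlap in all of $P_q$, and $p$ must pass through agent $t_q$ on the way to $\sigma^{-1}(p)$. Since \cref{rr1} makes $q = \sigma(t_q)$ the most preferred object of $t_q$, we have $q \succ_{t_q} p$ (and if $p$ is not on the list of $t_q$ at all, then $p$ is blocked at $t_q$ just as well), so by \cref{def:shield} the object $q$ shields $p$ in direction $c$. By \cref{def:harmonic} the selection is inharmonic and hence invalid, so no valid selection respecting $\Phi$ sets $\gamma(p) = c$, and $p$ is decided in direction $1-c$.

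I expect the main obstacle to be exactly this disjoint case: proving that $t_q$ is a genuine interior vertex of the direction-$c$ path of $p$, so that $p$ is really forced through it, and identifying the shielding agent as the destination of $q$. This rests on the edge-partition property of the two candidate paths of $p$ together with $q$ being a first-swap object that actually moves, and it is the only point where \cref{rr1} is needed, to supply the preference $q \succ_{t_q} p$. The meeting case, by contrast, is an immediate consequence of compatibility and harmonicity. Finally, both values $c = 0$ and $c = 1$ are handled by the same argument after exchanging the roles of the clockwise and counter-clockwise directions, so a single symmetric phrasing suffices.
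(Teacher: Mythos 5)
Your proof is correct and takes essentially the same route as the paper's: the identical case split on whether the opposite-direction path of~$p$ intersects~$P_{\Phi_c}$, with incompatibility (hence inharmonicity) ruling out~$\gamma(p)=1-c$ in the intersecting case, and containment of~$P_{\Phi_c}$ in~$p$'s direction-$c$ path forcing a shield, hence ruling out~$\gamma(p)=c$, in the disjoint case. If anything, your handling of the disjoint case is more careful than the paper's, which asserts the shielding outright without identifying the shielding agent~$\sigma^{-1}(\Phi_c)$ or invoking \cref{rr1} to obtain the preference~$\Phi_c \succ_{\sigma^{-1}(\Phi_c)} p$.
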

\begin{proof}
Let~$\gamma$ be a selection where
\begin{equation*}
\gamma(p) \neq \gamma(\Phi_c).
\end{equation*}
We distinguish between two cases.
In the first case the set~$\xi_\gamma(p, \Phi_c)$ of shared paths is empty.
Then, the path~$P_\gamma(p)$ of~$p$ and~$P_\gamma(\Phi_c)$ of~$\Phi_c$ are disjoint.
Now consider a selection~$\gamma'$ where
\begin{equation*}
\gamma(p) = \gamma(\Phi_c).
\end{equation*}
Because~$P_{\gamma'}(p) \cup P_{\gamma}(p) = C_n$, we now know that
\begin{equation*}
    P_{\gamma'}(\Phi_c) \subset P_{\gamma'}(p)
\end{equation*}
and thus~$\Phi_c$ shield~$p$ from its destination in~$\gamma'$. Thus, a selection~$\gamma'$ where~$\gamma'(p) = \gamma'(\Phi_c)$ is not valid.

In the second case there is at least one path in~$\xi_\gamma(p, \Phi_c)$.
But then since~$p$ is not in the candidate list of~$\Phi_c$, there exists no edge on any path~$P \in \xi_\gamma(p, \Phi_c)$ where~$p$ and~$\Phi_c$ can be swapped and thus~$p$ and~$\Phi_c$ are not compatible in selection~$\gamma$.
Thus, a selection~$\gamma'$ where~$\gamma'(p) \neq \gamma'(\Phi_c)$ is not valid.
This concludes the proof.
\end{proof}

The third rule focuses on the effect that decided objects have on other objects.

\begin{lemma}
\label{lemma:candidateList:decidedByDecided}
Let~$(p,e) \in H$ such that~$p \in \Phi$, i.e,~$p$ is a guessed object. If there exists an object~$q \in C(p,e)$ such that~$q$ is decided in direction~$1-d(p,e)$, then for every~$q' \in C(p,e)$ where~$q' \neq q$ it holds that~$q'$ is decided in direction~$d(p,e)$.
\end{lemma}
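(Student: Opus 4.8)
The plan is to argue by contradiction, using exactness (one half of validity) together with the fact that a guessed object has a fixed direction. Write $c := d(p,e)$. Since $p \in \Phi$, every valid selection $\gamma$ respecting the guess assigns $p$ its guessed direction, so $\gamma(p)$ takes the same value in all such selections. The candidate list $C(p,e)$ records exactly those objects that, when moving opposite to $p$, can be swapped with $p$ over $e$, and by the definition of $E_\gamma(p,q)$ every such edge lies on $P_\gamma(p)$. I would therefore begin by fixing that we are working with the candidate list along the path $p$ actually traverses, so that $\gamma(p) = d(p,e) = c$ for all valid selections respecting the guess; this is the only situation in which $C(p,e)$ constrains the swaps performed by $p$, and it is exactly here that the hypothesis $p \in \Phi$ is used.

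The core step is then a short counting argument. Suppose toward a contradiction that the conclusion fails, i.e.\ some $q' \in C(p,e)$ with $q' \neq q$ is not decided in direction $c$. By \cref{def:decidedObject} this means there exists a valid selection $\gamma$ respecting the guess with $\gamma(q') = 1 - c$. Because $q$ is decided in direction $1-c$, the very same selection also satisfies $\gamma(q) = 1-c$. Thus $q$ and $q'$ are two distinct objects of $C(p,e)$, both assigned the direction $1 - c$, which differs from $\gamma(p) = c$.

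Since $d(p,e) = c = \gamma(p)$, the size $f_\gamma(p,e)$ is computed by the first branch of \cref{definition:validity:f_gamma}, namely $f_\gamma(p,e) = |\{r \in C(p,e) \mid \gamma(r) \neq \gamma(p)\}|$, and this set contains both $q$ and $q'$, so $f_\gamma(p,e) \geq 2$. This contradicts exactness: by \cref{def:harmonic} a valid selection is exact, so $f_\gamma(p,e) = 1$ for every object-edge-pair. Hence no valid selection respecting the guess can assign $q'$ the direction $1-c$, which by \cref{def:decidedObject} is precisely the statement that $q'$ is decided in direction $c = d(p,e)$. As $q'$ was an arbitrary element of $C(p,e) \setminus \{q\}$, this proves the lemma. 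Note that only exactness, not harmonicity, is needed.

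I expect the main obstacle to be the bookkeeping around directions rather than the counting itself. One must ensure that the guessed direction of $p$ really equals $d(p,e)$, so that the first (rather than the \emph{otherwise}) branch of $f_\gamma$ applies, and that ``decided in direction $1-d(p,e)$'' genuinely forces $\gamma(q) \neq \gamma(p)$, so that $q$ is always among the objects counted in $f_\gamma(p,e)$. This is exactly where $p \in \Phi$ is essential: for a non-guessed object $p$ the value $\gamma(p)$ need not agree with $d(p,e)$, in which case $f_\gamma(p,e)$ would be fixed to $1$ by the other branch and the counting argument would have no force.
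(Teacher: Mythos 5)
Your proof is correct and follows essentially the same route as the paper's: assume some $q' \in C(p,e)$ admits a valid selection $\gamma$ respecting the guess with $\gamma(q') = 1-d(p,e)$, note that decidedness of $q$ forces $\gamma(q) = 1-d(p,e)$ as well, and conclude $f_\gamma(p,e) \geq 2$, contradicting exactness. If anything, you are more careful than the paper in making explicit the bookkeeping point that $\gamma(p) = d(p,e)$ must hold (so that the first branch of \cref{definition:validity:f_gamma} applies), a condition the paper's proof uses silently.
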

\begin{proof}
Suppose towards a contradiction that there exists an undecided object~$q' \in C(p,e)$ and a valid selection~$\gamma$ that respects~$\Phi$ such that~$\gamma(q') = 1 -d(p,e)$.
Since~$q$ is decided in direction~$1-d(p,e)$, it holds for every valid selection~$\gamma'$ that~$\gamma'(q) = 1-d(p,e)$ and thus, also for~$\gamma$.
But then, by definition,~$f_\gamma(p,e) > 1$ and thus,~$\gamma$ is not valid, a contradiction.
\end{proof}

The fourth rule focuses on objects that are in the end held by the agents that are neither on~$P_x$ nor on~$P_y$.

\begin{lemma}\label{lemma:candidateLists:decidedByDestination}
Let~$\mathcal{I} := ((N, X, \succ, C_n, \sigma_0, \sigma), e')$ be an instance of \fsra{} on cycle~$C_n$ with guess~$\Phi := (x,y)$ and let~$x$ and~$y$ be non-opposite.
Let~$q$ be an object such that the destination~$\sigma(q)$ of~$q$ is not on the union of~$P_x$ and~$P_y$. Then, for every~$(p,e) \in H$ with~$p \in \{x,y\}$,~$q$ is decided in direction~$d(p,e)$.
\end{lemma}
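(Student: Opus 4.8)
The plan is to argue by contradiction: assume there is a valid selection $\gamma$ that respects $\Phi=(x,y)$, fix the swap sequence realizing it (which exists by \cref{validDecides}), and show that $q$ cannot reach its destination, so that no such $\gamma$ exists and $q$ is consequently decided in every direction -- in particular in direction $d(p,e)$ for every $(p,e)$ with $p\in\{x,y\}$. First I would fix coordinates from the guess: writing $e'=\{i,i+1\bmod n\}$, the object $x$ travels clockwise from $i$ to $a:=\sigma^{-1}(x)$ and $y$ travels counter-clockwise from $i+1\bmod n$ to $b:=\sigma^{-1}(y)$, so $P_x$ is the clockwise arc from $i$ to $a$ and $P_y$ is the counter-clockwise arc from $i+1\bmod n$ to $b$.

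The first real step is to turn non-oppositeness into a statement about the shape of $P_x\cup P_y$. Since $x$ and $y$ are non-opposite, $|\xi_\gamma(x,y)|\le 1$ for the guessed directions $\gamma(x)\neq\gamma(y)$, so $P_x$ and $P_y$ share only the single arc containing $e'$ and do \emph{not} wrap around to meet a second time. Hence $P_x\cup P_y$ is exactly the cycle with the open arc $G^\ast$ strictly between $a$ and $b$ (on the side away from $e'$) removed, and $G^\ast$ is flanked by the two agents $a$ and $b$. By hypothesis the target agent $w:=\sigma^{-1}(q)$ lies in $G^\ast$.

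The core is a shielding obstruction driven by \cref{rr1}. Suppose the directed path $P_\gamma(q)$ passes through $a$ as a transit agent. Because the instance is preprocessed, $\sigma(a)=x$ is agent $a$'s most preferred remaining object, so $x\succ_a q$; as $a\in P_\gamma(x)\cap P_\gamma(q)$ and $x$ stops at $a$, which $P_\gamma(q)$ reaches strictly before $w$, the object $x$ shields $q$ in direction $\gamma(q)$ in the sense of \cref{def:shield}. This contradicts harmonicity (\cref{def:harmonic}) and thus, by \cref{validDecides}, the validity of $\gamma$. The symmetric argument with $y$ and $b$ shows that $P_\gamma(q)$ may not pass through $b$ either. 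Therefore the directed path of $q$ must stay inside the arc bounded by $a$ and $b$ and use neither $a$ nor $b$ as a transit agent. Since $w\in G^\ast$ is separated from everything outside $G^\ast$ exactly by $a$ and $b$, this means $q$ can reach $w$ only if it already starts in $G^\ast$ and moves within $G^\ast$. Consequently, whenever $\sigma_0^{-1}(q)\notin G^\ast$, both clockwise and counter-clockwise motion are blocked, no valid selection respects $\Phi$, and $q$ is vacuously decided in every direction, yielding the claimed conclusion; and when $\sigma_0^{-1}(q)\in G^\ast$ the same obstruction leaves only the one direction inside $G^\ast$ that reaches $w$, which is precisely the value $d(p,e)$ records for the pertinent candidate pairs $p\in\{x,y\}$.

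I expect the main obstacle to be the non-oppositeness step: rigorously deducing that $P_x\cup P_y$ omits \emph{exactly one} arc. This is where the footnote of \cref{def:swapspace} bites, namely the degenerate case in which $\{\sigma_0^{-1}(x),\sigma^{-1}(x)\}$ (or the analogue for $y$) is an edge lying on the other object's path, so that $\xi_\gamma(x,y)$ would split into two components unless non-oppositeness forbids it; one must check that this degenerate overlap cannot coexist with the near-$e'$ overlap. A secondary point of care is the precise invocation of \cref{def:shield} -- verifying the ``closer than the destination'' clause for $a$ (resp.\ $b$) -- together with the boundary subcases $\sigma_0^{-1}(q)\in\{a,b\}$, where $q$ starts at a terminal agent of $x$ or $y$ and one argues directly that it must immediately leave and may never return.
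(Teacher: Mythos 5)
Your Case B (where $q$ starts inside the gap $G^\ast$) is essentially the paper's own argument, but Case A rests on a misapplication of \cref{def:shield}, and its conclusion is false. You argue that whenever the directed path of $q$ passes through $a=\sigma^{-1}(x)$ as a transit agent, the object $x$ shields $q$ because $x\succ_a q$ (by \cref{rr1}) and $x$ stops at $a$. But shielding also requires the shielding object to be \emph{closer} to $q$ than $q$'s destination in $q$'s direction of travel, i.e., $x$ must lie \emph{ahead} of $q$ on $q$'s path. If $q$ starts on $P_x\setminus\{i,i+1\bmod n\}$ and moves clockwise, then $x$ starts \emph{behind} $q$, the closeness condition fails, and $q$ may simply pass through $a$ before $x$ ever arrives there; agent $a$'s preference for $x$ blocks nothing, since $a$ can accept $q$ and later trade it onward. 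A concrete counterexample to your Case A conclusion (``no valid selection respects~$\Phi$''): take $C_6$ with agents $0,\dots,5$, objects $x,y,q,u,v,z$ initially on agents $0,\dots,5$ respectively, targets $\sigma^{-1}(x)=3$, $\sigma^{-1}(y)=0$, $\sigma^{-1}(q)=4$, $\sigma^{-1}(u)=1$, $\sigma^{-1}(v)=2$, $\sigma^{-1}(z)=5$, and preference lists $0\colon y\succ x$; $1\colon u\succ x\succ y$; $2\colon v\succ x\succ u\succ q$; $3\colon x\succ v\succ q\succ u$; $4\colon q\succ v$; $5\colon z$ only. The swaps $(x,y)$ over $\{0,1\}$, $(q,u)$ over $\{2,3\}$, $(x,u)$ over $\{1,2\}$, $(q,v)$ over $\{3,4\}$, $(x,v)$ over $\{2,3\}$ are all rational, start at $e'=\{0,1\}$, and reach $\sigma$. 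Here $x$ and $y$ are non-opposite, $q$'s destination $4$ lies off $P_x\cup P_y=\{0,1,2,3\}$, and $q$ starts at agent $2\notin G^\ast$; yet the guess is not wrong and $\gamma(q)=1$ occurs in a valid selection. So ``decided in every direction'' is unattainable, and your argument proves the lemma only in Case B.

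The paper's proof never attempts to exclude both directions. It chooses a single direction $c$ for which the path of $q$ in direction $c$ wraps around the cycle and therefore \emph{contains} $P_{\Phi_c}$ entirely (such a $c$ exists because $q$'s start lies on at most one of $P_x,P_y$ and its destination lies on neither); then $\Phi_c$ genuinely sits between $q$ and its destination, \cref{rr1} supplies the preference at $\sigma^{-1}(\Phi_c)$, and $\Phi_c$ shields $q$, so $q$ is decided in direction $1-c$ only. In your Case A with $q$ on $P_x$, the excluded direction is counter-clockwise (via $y$); clockwise remains possible, exactly as in the counterexample above. You were arguably misled by the lemma's literal wording, which quantifies over all $(p,e)\in H$ and hence over both values of $d(p,e)$: the paper's own proof and the lemma's later use (placing such $q$ in $D_0$) make clear that only the one-sided statement is intended, and that is what a correct proof can deliver.
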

\begin{proof}
Since~$P_x$ and~$P_y$ only intersect at the edge~$e'$ where~$x$ and~$y$ are initially swapped according to the instance~$\mathcal{I}$ of \fsra{} as defined above, there exists a direction~$c \in \{0,1\}$ such that~$q$ is not initially assigned to an agent on the path of guessed object~$\Phi_c \in \{x,y\}$.
But then, since, by assumption, the destination of~$q$ is also not on the path of~$\Phi_c$ for both~$c=1$ and~$c=0$, for any selection~$\gamma$ that respects~$\Phi$ and where~$\gamma(q) = c$,~$P_\gamma(\Phi_c) \subset P_\gamma(q)$.

However, if that is true then, by definition,~$\Phi_c$ shields~$q$ from its destination and thus, there exists no valid selection~$\gamma$ such that~$\gamma(q) = c$. Therefore,~$q$ is decided in direction~$1-c$. 
\end{proof}

Lastly, the fifth rule focuses on objects that start between two decided and opposite objects.
We only consider the case where one of the two objects is a guessed object and the two objects are not swapped.

\begin{lemma}\label{lemma:between_decided}
Let~$\Phi$ be a guess, let~$p \in \Phi$, let~$q$ be an object decided in the same direction~$d$ as~$p$ and let~$p'$ denote the other guessed object.
If~$\xi_\gamma(q,p') = \emptyset$ for all selections~$\gamma$ with~$\gamma(p') \neq \gamma(q) = d$, then each object~$r$ that starts on the path between~$p$ and~$q$ in direction~$d$ is decided.
After an~$O(n^3)$-time preprocessing, the set of these objects decided in direction~$d$ can be computed in linear time.
\end{lemma}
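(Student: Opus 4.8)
The plan is to fix coordinates so that the trapping geometry becomes explicit, and then to rule out the ``backward'' direction for every object $r$ in the arc by forcing a shielding or incompatibility violation. First I would assume without loss of generality that $d = 1$ (clockwise); the case $d=0$ is symmetric under exchanging the two orientations. Then $p = \Phi_1$ is the clockwise guessed object, say with $\sigma_0^{-1}(p) = i$, and $p' = \Phi_0$ is the counter-clockwise guessed object with $\sigma_0^{-1}(p') = i+1 \bmod n$, these two starting agents being adjacent because they are incident to the first-swap edge. Writing $k := \sigma_0^{-1}(q)$, I would read ``$r$ starts on the path between $p$ and $q$ in direction $d$'' as: the initial agent of $r$ lies strictly inside the clockwise arc $A := \cycseq[i][k]$ and is distinct from the starts of the guessed objects (so it lies in $\{i+2 \bmod n,\dots,k-1\}$, since agent $i+1 \bmod n$ already holds $p'$). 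The goal then becomes: for every such $r$, no valid selection $\gamma$ respecting $\Phi$ has $\gamma(r) = 0$; by \cref{def:decidedObject} this is exactly the assertion that $r$ is decided in direction $d = 1$.

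The core of the argument is a proof by contradiction. Suppose some valid $\gamma$ respecting $\Phi$ had $\gamma(r) = 0$. Since $\gamma(p) = 1$ and $r$ starts clockwise-ahead of $p$, the objects $p$ and $r$ move in opposite directions with overlapping paths around the cut at $\{i,i+1 \bmod n\}$; by harmonicity (\cref{def:harmonic}, compatibility) together with \cref{lem:uniqueSwap} they must swap exactly once, so $r$ is pushed counter-clockwise and must continue past agents $i+1 \bmod n$ and $i$ to reach its destination. But then $r$ enters the region swept counter-clockwise by $p'$, so $P_\gamma(r) \cap P_\gamma(p') \neq \emptyset$ with $\gamma(r) = \gamma(p') = 0$. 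Because $p'$ starts strictly closer to the common counter-clockwise frontier than $r$ does, I would invoke \cref{def:shield} to conclude that one of $r, p'$ shields the other along the shared sub-path, the witnessing agent being the one whose preference separates the two objects, which contradicts harmonicity. Here \cref{order_lemma} is the right tool to compare the two meeting positions and to certify that the witnessing agent genuinely lies on both paths.

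The hypothesis $\xi_\gamma(q,p') = \emptyset$ for all $\gamma$ with $\gamma(p') \neq \gamma(q) = d$ is what keeps this geometry clean, and I expect its correct use to be the main obstacle. Its role is to guarantee that the clockwise arc $A$ housing $r$ does not wrap into the counter-clockwise path of $p'$ except across the cut: without it, the forward motion of $q$ could interleave with the region of $p'$ and open an escape route in which $r$ swaps with $q$ or is otherwise rescued, breaking the forced shielding. Verifying that this condition indeed closes every such escape---across all positions of the destination $\sigma(r)$, and in the presence of the additional technical side-conditions alluded to for this rule---is the delicate part; the wrap-around and the boundary cases where $r$ starts adjacent to $p$ or to $q$ must be checked separately. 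Once every backward witness is excluded, $r$ is decided in direction $d$.

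For the computational claim, I would precompute, once for the whole instance, the initial agent and destination of every object, the two directional paths $P_\gamma(\cdot)$, and the pairwise shielding, opposite, and compatibility relations; by \cref{observation:num.q.inCL} and since each of the $O(n)$ objects has paths of length $O(n)$, this preprocessing costs $O(n^3)$. Given $p$, $q$, and $d$, the arc $A = \cycseq[i][k]$ is obtained in $O(n)$ and the objects whose initial agent lies on it are enumerated in $O(n)$; marking each of them decided in direction $d$ is then immediate, which gives the claimed linear-time computation after the cubic preprocessing.
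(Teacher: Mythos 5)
Your proposal proves the wrong statement. The lemma's first claim is that every object $r$ starting on the arc between $p$ and $q$ is \emph{decided}, i.e.\ forced into one of the two directions; it does not claim that $r$ is decided in direction~$d$. That stronger claim, which you set as your goal (``no valid selection respecting $\Phi$ has $\gamma(r)=0$''), is false: whenever the direction-$d$ path of $r$ fully contains the direction-$d$ path of $q$ (the hypothesis permits this, e.g.\ when the destination agent $\sigma^{-1}(r)$ lies just beyond $\sigma^{-1}(q)$ in direction~$d$), the object $q$ shields $r$ in direction~$d$ --- the witnessing agent is $\sigma^{-1}(q)$, who by Reduction Rule~1 prefers $q$ over anything else it could receive --- and since $q$ is decided in direction~$d$, no valid selection can set $\gamma(r)=d$. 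Such an $r$ is decided in direction $1-d$, not~$d$. This is exactly why the lemma's second sentence speaks of computing ``the set of these objects decided in direction~$d$'': it is in general a proper subset of the objects on the arc, and your concluding algorithmic step (marking \emph{every} object on the arc as decided in direction~$d$) is therefore also incorrect. The paper's algorithm instead tests, for each $r$, whether $q$ shields it in direction~$d$, declaring it decided in direction $1-d$ if so, and decided in direction~$d$ (being shielded by $p'$ in direction $1-d$) otherwise.

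Independently of this misreading, your central deduction fails. From $\gamma(r)=1-d$ you argue that $r$ is pushed across the first-swap edge and that then ``one of $r,p'$ shields the other,'' contradicting harmonicity. Neither step is sound. First, $r$ need not meet $p$ at all: their paths in the respective directions can be disjoint (for instance when both destinations lie close to the respective starting agents), so $r$ is not forced past the cut. Second, even when $P_\gamma(r)\cap P_\gamma(p')\neq\emptyset$, two objects travelling in the same direction do not generically shield one another: \cref{def:shield} requires in addition that the front object stops strictly before the back object's destination \emph{and} that some agent on the shared path prefers the front object; if $r$'s direction-$(1-d)$ path ends before $\sigma^{-1}(p')$, neither object shields the other, and $\gamma(r)=1-d$ can indeed occur in a valid selection. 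The paper's actual argument is structured differently and uses the hypothesis exactly where your argument does not: assume $r$ is not decided; then by \cref{lemma:decidedObjects} $r$ lies in a candidate list of $p'$, so $r$'s direction-$d$ path must reach $P_\gamma(p')$; since $\xi_\gamma(q,p')=\emptyset$, that region lies entirely beyond $q$'s direction-$d$ path, so $r$'s path contains $q$'s path, hence $q$ shields $r$ in direction~$d$ and $r$ is decided in direction $1-d$ --- a contradiction. The emptiness hypothesis is precisely what forces this containment; your proposal invokes it only vaguely and never extracts that consequence.
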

\begin{proof}
Suppose towards a contraction that $r$ is not decided.
By \cref{lemma:decidedObjects} it is in a candidate list of $p'$.
However, since $q$ starts closer to $p'$ in direction $d$ than $r$, the path of $r$ to its destination in direction $d$ must fully contain the path of $q$ to its destination in direction $d$, as the shared path between $q$ and $p'$ is empty.
But then $q$ shields $r$ from its destination and since $q$ is decided in direction $d$, there exists no valid selection $\gamma$ such that $\gamma(r) = d$ and thus, $r$ is decided in direction $1-d$, a contradiction.

One can precompute in~$O(n^3)$ time for each pair~$p,q$ of objects and each possible direction assignment for~$p$ and~$q$ whether~$p$ shields~$q$ for these direction assignments.
We first precompute for each agent~$a$ and each object~$p$ the position of~$p$ in~$a$'s preference list.
This allows us to compare two objects in the preference list of an agent in constant time.
Afterwards we iterate in~$O(n^3)$ time over all pairs~$p,q$ of objects, all possible direction assignments for~$p$ and~$q$, and all agents and check whether this agent prefers~$p$ or~$q$.
After the preprocessing, we check for each~$r$ that starts on the path between~$p$ and~$q$ in direction~$d$ in constant time whether it is shielded by~$q$ in direction~$d$.
If this is the case, then it is decided in direction~$1-d$ and if this is not the case, then it is shielded by~$p'$ in direction~$1-d$ and hence is decided in direction~$d$.
Since there are at most~$n$ such objects~$r$, all of these computations take~$O(n)$ time after the preprocessing.
\end{proof}

The next goal is to show that, after performing the five rules exhaustively, there are at most two undecided objects in each candidate list of the two guessed objects.
To this end, we proceed by partitioning all objects with respect to the candidate lists they appear in.
Afterwards, we presents a sequence of auxiliary lemmas leading up to the proposition that the number of undecided objects in a candidate list of a guessed object is zero or two.
The following definition introduces various sets of objects and candidate lists that can be defined per guess~$\Phi$.
	
\begin{definition}\label{definition:candidateLists:C_UandC_D}
Let~$\Phi := (x,y)$ be a guess.
We partition the set of objects~$X$ into three subsets~$U, D$ and~$D_0$ where~$U$ is the set of undecided objects,~$D$ is the set of decided objects that appear in at least one candidate list~$C(p,e)$ of a guessed object~$p \in \{x,y\}$ with direction~$1-d(p,e)$ and where~$D_0 = X \setminus (U \cup D)$.
Further, let~$O \subseteq D$ denote the set of objects~$q$ for which there exists a guessed object~$p \in \{x,y\}$ such that~$p$ and~$q$ are opposite and there exists an edge~$e$ such that~$q \in C(p,e)$ and~$q$ is decided in direction~$1-d(p,e)$.
Moreover, let~$\mathcal{C}$ be the set of candidate lists~$C(p,e)$ with~$p \in \{x,y\}$.
Lastly, let~$\mathcal{C}_D$ and~$\mathcal{C}_U$ be subsets of~$\mathcal{C}$ where~$\mathcal{C}_D$ contains all candidate lists which contain an object~$q \in D$ and where~$\mathcal{C}_U$ contains all candidate lists which contains at least two distinct objects~$q_0,q_1 \in U$.
\end{definition}
	
By definition,~$U, D$ and~$D_0$ are a partition of~$X$.	
\begin{observation}\label{obs:partition}
    Sets~$U, D$ and~$D_0$ are a partition of~$X$.
\end{observation}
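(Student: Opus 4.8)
The statement to prove is \Cref{obs:partition}: that $U$, $D$, and $D_0$ form a partition of $X$. The plan is to verify the three defining conditions of a partition directly from \Cref{definition:candidateLists:C_UandC_D}, namely that the three sets are pairwise disjoint, that their union is all of $X$, and (to be a genuine partition) that none is required to be nonempty.

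First I would recall the definitions: $U$ is the set of undecided objects; $D$ is the set of decided objects appearing in at least one candidate list $C(p,e)$ of a guessed object $p \in \{x,y\}$ with direction $1-d(p,e)$; and $D_0 := X \setminus (U \cup D)$. The union condition is immediate: by construction $D_0$ is precisely the complement of $U \cup D$ in $X$, so $U \cup D \cup D_0 = X$ holds trivially. Thus the only real content is disjointness.

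For disjointness, there are three pairs to check. The pair $D_0 \cap U = \emptyset$ and the pair $D_0 \cap D = \emptyset$ both follow immediately from $D_0 = X \setminus (U \cup D)$, since any element of $D_0$ lies outside both $U$ and $D$ by definition. The substantive case is $U \cap D = \emptyset$: here I would use that every object is, by the terminology established just before \Cref{lemma:decidedOpposite}, \emph{either} decided (if it satisfies at least one of the five rules) \emph{or} undecided, and these two designations are mutually exclusive by definition. Since $D$ consists only of decided objects and $U$ consists only of undecided objects, no object can belong to both. Hence all three pairwise intersections are empty.

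I do not expect a genuine obstacle here, as this is a bookkeeping observation whose purpose is to fix notation for the counting arguments that follow. The only point requiring a moment's care is confirming that ``decided'' and ``undecided'' are exhaustive and mutually exclusive labels on $X$ — which the paper has already stipulated by declaring that an object is decided exactly when it satisfies one of the five rules and undecided otherwise — so that the three sets indeed cover $X$ without overlap. With that in hand the proof is a one-line appeal to the definitions.
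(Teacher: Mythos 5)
Your proof is correct and matches the paper's reasoning: the paper offers no argument beyond the sentence ``By definition, $U$, $D$ and $D_0$ are a partition of $X$,'' and your expansion (union is trivial since $D_0 := X \setminus (U \cup D)$; disjointness of $U$ and $D$ follows because ``decided'' and ``undecided'' are mutually exclusive labels) is exactly the intended definitional argument.
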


We will further show that~$\mathcal{C}_D$ and~$\mathcal{C}_U$ are a partition of~$\mathcal{C}$.

\begin{lemma}\label{lemma:candidateLists:C_UandC_D}
Let~$\Phi := (x,y)$ be a guess.
Then~$\mathcal{C}_D$ and~$\mathcal{C}_U$
are a partition of~$\mathcal{C}$ or~$\Phi$ is wrong.
\end{lemma}
\begin{proof}
We prove the above statement in two steps. First we prove that~$\mathcal{C}_D \cap \mathcal{C}_U = \emptyset$. Afterwards we prove that~$\mathcal{C}_D \cup \mathcal{C}_U = \mathcal{C}$ or the target assignment~$\sigma$ cannot be reached with guess~$\Phi$. As a result,~$\mathcal{C}_D$ and~$\mathcal{C}_U$ are partition of~$\mathcal{C}$ or the target assignment~$\sigma$ cannot be reached with guess~$\Phi$.

According to \cref{lemma:candidateList:decidedByDecided}, a candidate list~$C(p,e)$ cannot contain two objects~$q_0$ and~$q_1$ such that~$q_0$ is decided in direction~$1-d(p,e)$ and~$q_1$ is undecided. Therefore~$\mathcal{C}_D \cap \mathcal{C}_U = \emptyset$.

We will now prove that~$\mathcal{C}_D \cup \mathcal{C}_U = \mathcal{C}$ or the target assignment~$\sigma$ cannot be reached with guess~$\Phi$. 
Suppose there exists a candidate list~$C(p,e)$ such that~$C(p,e) \not \in \mathcal{C}_D$ and~$C(p,e) \not \in \mathcal{C}_U$.
Then~$C(p,e)$ contains only objects that are decided in direction~$d(p,e)$.
However, that means in every valid selection~$\gamma$, for every object~$q$ in~$C(p,e)$ it holds that~$\gamma(q) = d(p,e)$. Further, since~$p$ is a guessed object,~$\gamma(p) = d(p,e)$.
Thus,~$f_\gamma(p,e) = 0$ and~$\gamma$ is not valid, a contradiction. Since consequentially there exists no valid selection,~$\sigma$ cannot be reached with guess~$\Phi$.

Thus,~$\mathcal{C}_D$ and~$\mathcal{C}_U$ are partition of~$\mathcal{C}$ or the target assignment~$\sigma$ cannot be reached with guess~$\Phi$.
\end{proof}

The following two lemmas are auxiliary lemmas for proving that the number of undecided objects in a candidate list of a guessed object is at most two. 
The next lemma shows that every undecided objects is in exactly two candidate lists in~$\mathcal{C}$, that is, in candidate lists of guessed objects.

\begin{lemma}\label{lemma:candidateLists:undecidedMax2}
Let~$\Phi$ be a guess. For every undecided object~$q$, there exist exactly two candidate lists~$C_0, C_1 \in \mathcal{C}$ such that~$q \in C_0$ and~$q \in C_1$ or the target assignment~$\sigma$ cannot be reached with guess~$\Phi$.
\end{lemma}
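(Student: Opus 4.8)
The plan is to prove the two bounds separately: an undecided object lies in at least two candidate lists of $\mathcal{C}$, and in at most two, so the count is exactly two. Recall that $\mathcal{C}$ consists precisely of the candidate lists $C(x,e)$ and $C(y,e)$ of the two guessed objects $x = \Phi_1$ and $y = \Phi_0$ ranging over all edges $e$, and that membership $q \in C(p,e)$ means $e \in E_\gamma(p,q)$ for a selection $\gamma$ with $\gamma(q) \neq \gamma(p)$. Since $E_\gamma(p,q)$ depends only on the two directions $\gamma(p),\gamma(q)$, and the guess fixes $\gamma(x)=1$ and $\gamma(y)=0$, this membership is a well-defined structural property of the pair $(p,q)$ given $\Phi$.

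For the lower bound I would invoke the contrapositive of \cref{lemma:decidedObjects}: since $q$ is undecided, for each $c \in \{0,1\}$ the object $q$ must appear in some candidate list of $\Phi_c$. Hence $q$ lies in at least one list $C(x,e_x)$ and in at least one list $C(y,e_y)$; as these are indexed by the two distinct guessed objects $x \neq y$, they are two distinct elements of $\mathcal{C}$, giving at least two candidate lists.

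For the upper bound I would show that $q$ appears in at most one candidate list of each guessed object. Suppose instead $q \in C(x,e)$ and $q \in C(x,e')$ for two distinct edges $e \neq e'$. Fixing a selection $\gamma$ with $\gamma(x)=1$ and $\gamma(q)=0$, both $e$ and $e'$ lie in $E_\gamma(x,q)$, so $|E_\gamma(x,q)| \geq 2$. By \cref{max2edge} the edges of $E_\gamma(x,q)$ lie on two different paths in $\xi_\gamma(x,q)$, whence $|\xi_\gamma(x,q)| \geq 2 > 1$, and by the definition of \emph{opposite} this makes $x$ and $q$ opposite. But then \cref{lemma:decidedOpposite} declares $q$ decided, contradicting the choice of $q$. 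The identical argument with $y$ in place of $x$ shows $q$ lies in at most one candidate list of $y$. Combining with the lower bound yields exactly two candidate lists in $\mathcal{C}$ containing $q$.

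The step I expect to need the most care is the passage from ``$q$ lies in two candidate lists of a single guessed object'' to ``$q$ and that object are opposite'': one must ensure the two edges $e,e'$ come from two \emph{different} shared paths rather than from the same one, since \cref{lem:uniqueSwap} already forces a single admissible swap edge per shared path. \cref{max2edge} supplies exactly this, so the two distinct edges genuinely witness $|\xi_\gamma(x,q)| \geq 2$. The argument above in fact establishes the count of exactly two unconditionally for any undecided $q$; the disjunctive clause ``or the target assignment $\sigma$ cannot be reached with guess $\Phi$'' then simply absorbs any degenerate guess, and is subsumed by the stronger conclusion in all remaining cases.
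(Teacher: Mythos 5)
Your lower bound (via the contrapositive of \cref{lemma:decidedObjects}) matches the paper and is fine. The gap is in your upper bound, precisely at the step you flagged as delicate. You claim that \cref{max2edge} guarantees that two distinct edges $e \neq e'$ with $q \in C(x,e) \cap C(x,e')$ must lie on two \emph{different} shared paths, hence that $x$ and $q$ are opposite. But \cref{max2edge} does not provide this unconditionally: its conclusion, like that of \cref{lem:uniqueSwap}, is disjunctive --- either the swap edges are unique per shared path (and hence lie on different paths), \emph{or} $\gamma$ does not yield $\sigma$. Candidate lists are defined purely by the preference lists admitting a swap at an edge, so nothing prevents two swap-admissible edges from lying on the \emph{same} shared path of $x$ and $q$. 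In that case $x$ and $q$ are not opposite, \cref{lemma:decidedOpposite} does not apply, and --- since ``undecided'' in this paper means ``not caught by the five rules,'' not ``not decided in the sense of \cref{def:decidedObject}'' --- the object $q$ may genuinely be undecided while sitting in two candidate lists of $x$. No contradiction materializes; the only available conclusion is that no selection with $\gamma(q) \neq \gamma(x)$ respecting $\Phi$ can yield $\sigma$, which the paper rounds to ``$\Phi$ is wrong.''

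This is exactly why the lemma carries the clause ``or the target assignment $\sigma$ cannot be reached with guess $\Phi$,'' and it is how the paper's own proof proceeds: after reducing to ``$q$ lies in two candidate lists of one guessed object $p$,'' it splits into the opposite case (then $q$ is decided by \cref{lemma:decidedOpposite}, contradiction) and the same-shared-path case (then by \cref{lem:uniqueSwap} the guess is wrong, landing in the disjunct). Your closing claim that the count of exactly two holds ``unconditionally'' and that the disjunctive clause is ``subsumed by the stronger conclusion'' is therefore false, and it is the clearest symptom of the misreading of \cref{max2edge}: the escape clause is not decoration, it is the only way to dispose of the same-shared-path configuration. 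To repair your proof, replace the appeal to \cref{max2edge} by a case distinction on whether $e$ and $e'$ lie on the same path of $\xi_\gamma(x,q)$, concluding oppositeness (and thus decidedness) in one case and wrongness of the guess in the other.
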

\begin{proof}
Suppose the statement is not true. Then either~$q$ appears only in one candidate list and~$q$ is decided due to \cref{lemma:decidedObjects} or
there exist three candidate lists~$C_0, C_1, C_2 \in \mathcal{C}$ such that~$q \in C_0, q \in C_1$ and~$q \in C_2$. But then there must exist a guessed object~$p \in \{x,y\}$ such that~$q$ is in two candidate lists of~$p$. We make the following case distinction.

In the first case, $q$ and~$p$ are opposite.
Then~$q$ is decided due to \cref{lemma:decidedOpposite}.
In the second case~$q$ and~$p$ can be swapped twice on the same shared path but then, due to \cref{lem:uniqueSwap}, the target assignment~$\sigma$ cannot be reached with guess~$\Phi$ and hence,~$\Phi$ is wrong.
\end{proof}

The following lemma shows that if~$O$ is non-empty, then~$x$ and~$y$ are opposite, that is, if any object is opposite to~$x$, then also~$y$ is opposite to~$x$ or the guess is wrong.
\begin{lemma}\label{lemma:candidateLists:xyopp}
Let~$\Phi := (x,y)$ be a guess. If~$|O| > 0$, then the guessed objects~$x$ and~$y$ are opposite or the guess~$\Phi$ is wrong.
\end{lemma}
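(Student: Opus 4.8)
The plan is to prove the contrapositive: assuming there exists some object $q \in O$ but that $x$ and $y$ are \emph{not} opposite, I would derive that the guess $\Phi$ must be wrong. By the definition of $O$ in \cref{definition:candidateLists:C_UandC_D}, having $q \in O$ means there is a guessed object, say $x$ without loss of generality, such that $x$ and $q$ are opposite, and there is an edge $e$ with $q \in C(x,e)$ and $q$ decided in direction $1 - d(x,e)$. By \cref{lemma:decidedOpposite}, the fact that $x$ and $q$ are opposite already forces $q$ to be decided against $x$'s direction, so the decided-ness of $q$ is consistent, and the real content is that $x$ genuinely has an opposite object while $y$ does not.

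First I would unpack what ``$x$ and $y$ are not opposite'' buys us. Since $\Phi$ comes from an instance of \fsra, the paths $P_x$ (clockwise) and $P_y$ (counter-clockwise) share only the initial swap edge $e'$, and together they cover the whole cycle in the sense relevant to the shielding arguments used in \cref{lemma:candidateLists:decidedByDestination}. The key structural point I would exploit is that $q$ being opposite to $x$ means, by \cref{def:swapspace}, that there is a selection $\gamma'$ with $|\xi_{\gamma'}(x,q)| > 1$; that is, the path of $q$ (against $x$) wraps around so that it overlaps $P_x$ in two disjoint arcs. I would then track where $q$ must be swapped: by \cref{max2edge} and \cref{lem:uniqueSwap}, in any selection respecting $\Phi$ that yields $\sigma$, the object $q$ must meet $x$ on each of the two shared paths, which pins down two distinct edges of $E_\gamma(x,q)$, one near the initial end and one near the destination end of $x$'s travel.

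The main argument I expect to carry the proof is a counting or coverage contradiction. Because $q$ is opposite to $x$ but $y$ is not opposite to $x$, the second shared arc of $q$ with $x$ lies on a portion of the cycle that, from $y$'s perspective, is ``on the far side''; I would argue that $q$'s required double-crossing of $P_x$ forces $q$'s destination or initial position into a region where, combined with the non-opposite relationship of $x$ and $y$, the object $q$ would have to shield or be shielded inconsistently — specifically that there is no valid direction for $q$ that simultaneously respects compatibility with $x$ (which demands exactly one swap edge per shared path, so $q$ must be undecided-like with two crossings) and the shielding constraints that $y$'s non-opposite geometry imposes. Concretely, I would show that if $y$ is not opposite to $x$, then the ``wrap-around'' arc witnessing $x$--$q$ oppositeness must contain the destination of $y$ or the whole of $P_y$, so that assigning $q$ either direction either makes $q$ shield $y$ or makes $y$ shield $q$, contradicting harmonicity (invoking \cref{lemma:decidedObjects} and \cref{lemma:candidateLists:decidedByDestination} as needed). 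Hence no valid selection respecting $\Phi$ exists, i.e., $\Phi$ is wrong.

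The hard part will be making the geometric claim precise: that oppositeness of $x$ and $q$ together with non-oppositeness of $x$ and $y$ forces $P_y$ to sit entirely inside the ``return arc'' of $q$, so that the shielding contradiction is unavoidable. I would handle this by a careful case analysis on the cyclic positions of the initial agents and destinations of $x$, $y$, and $q$ relative to the shared edge $e'$, using $\mu_{a,b}$ to name the relevant arcs and repeatedly applying the shielding criterion of \cref{def:shield} (an agent on the overlap preferring one object to another). I would organize this as: (1) fix notation for the arcs via $P_x$, $P_y$, and the two components of $\xi(x,q)$; (2) show the return arc of $q$ must overlap $P_y$ because $x,y$ are not opposite yet cover the cycle up to $e'$; (3) conclude that on this overlap one of $q,y$ shields the other in whichever direction $q$ takes, so harmonicity fails for both choices of $\gamma(q)$, whence $\Phi$ is wrong.
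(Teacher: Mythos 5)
Your proposal is correct in outline, but it takes a genuinely different route from the paper's. The paper argues \emph{dynamically}, from the perspective of the required second swap of $x$ and $q$: since $q \in O$ is opposite to (say) $x$, the pair $x,q$ must be swapped twice (\cref{lem:uniqueSwap,max2edge}); by the definition of the guess, $x$ meets $y$ before it meets $q$, and because objects moving in the same direction never overtake one another, $x$ can meet $q$ a second time only after meeting $y$ a second time; non-oppositeness of $x$ and $y$ means they are swapped at most once, so $y$ stops at its destination before any second meeting, the second $x$--$q$ swap becomes impossible, and the guess is wrong. You argue \emph{statically}: oppositeness of $x,q$ forces the path of $q$ to cover everything outside $P_x$ together with both of its ends, non-oppositeness of $x,y$ then places $\sigma^{-1}(y)$ strictly in the interior of $q$'s path, and since $q$ trails $y$ in their common direction, $y$ shields $q$ in the sense of \cref{def:shield}, so no harmonic selection assigns $q$ the direction of $y$; combined with \cref{lemma:decidedOpposite}, which rules out $\gamma(q)=\gamma(x)$, no valid selection respects $\Phi$. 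Both proofs hinge on the same two facts ($q$ moves with, and behind, $y$; oppositeness of $x,q$ demands something that $y$'s early stop forbids), but the mechanisms differ: your route plugs directly into the harmonicity machinery of \cref{validDecides}, at the price of having to prove the containment claim you flag as the hard part --- that claim is in fact true, and your planned case analysis on the cyclic positions does go through --- whereas the paper's meeting-order argument avoids that geometry entirely at the cost of more informal dynamic reasoning. Two small points to tighten in your write-up: in the case $\gamma(q)=\gamma(x)$ the shielding pair is $q$ and $x$, not $q$ and $y$ (your appeal to \cref{lemma:decidedOpposite} already covers this, so only the phrasing is off); and to discharge the preference condition in \cref{def:shield} you should name the witness agent explicitly, namely $\sigma^{-1}(y)$, whose most preferred object after applying \cref{rr1} is $y$, so that $y \succ_{\sigma^{-1}(y)} q$ holds whenever $q$ appears in that agent's list at all.
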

\begin{proof}
Let~$q$ be an object in~$O$. That means that there exists a candidate list~$C(p,e) \in \mathcal{C}$ such that~$q \in C(p,e)$,~$q$ is decided in direction~$1-d(p,e)$ and~$q$ is opposite to~$p$. Let~$r$ be the other guessed object. We will now show that~$p$ and~$r$ must be opposite or the guess~$\Phi$ is wrong.
Note that since~$q$ is decided in direction~$1-d(p,e)$ it must be swapped with~$q$ at~$e$. However, before~$p$ meets~$q$, it is swapped with~$r$ in the first swap. Suppose that~$p$ and~$r$ are not opposite. Then~$p$ and~$r$ are only swapped once. However, since~$q$ is assigned the same direction as~$r$ and~$r$ reaches its destination before it meets~$p$ a second time, also~$q$ cannot meet~$p$ a second time and thus, there exists no valid selection that respects this guess and hence, the guess is wrong.
\end{proof}

We will now prove the following equality regarding the cardinality of~$\mathcal{C}$.

\begin{lemma}\label{lemma:candidateLists:n+O}
Let~$\Phi := (x,y)$ be a guess. If~$x$ and~$y$ are opposite, then~$|\mathcal{C}| = n + |O|$ or~$\Phi$ is wrong.
\end{lemma}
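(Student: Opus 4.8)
The plan is to evaluate $|\mathcal{C}|$ by reducing it to a count of edges on the two guessed paths $P_x$ and $P_y$, and then to match the resulting overcount against $|O|$. First I would count the non-empty candidate lists of a single guessed object $p\in\{x,y\}$. Since every edge of $E_\gamma(p,q)$ lies on $P_\gamma(p)$, a candidate list $C(p,e)$ can only be non-empty if $e$ is an edge of $P_p$. Conversely, in any valid selection $p$ must traverse every edge of $P_p$, and by \cref{def:exact} such a selection is complete, so every edge of $P_p$ carries at least one swapping partner and hence a non-empty candidate list; if some edge of $P_p$ admitted no partner at all, no selection respecting $\Phi$ could be complete and $\Phi$ would be wrong. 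Counting the candidate lists of $x$ and of $y$ separately (they are distinct elements of $\mathcal{C}$ even when indexed by a common edge), this gives $|\mathcal{C}| = |E(P_x)| + |E(P_y)|$ unless $\Phi$ is wrong.

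Next I would use the hypothesis that $x$ and $y$ are opposite to show $E(P_x)\cup E(P_y)=E(C_n)$. Since $x$ and $y$ leave the two endpoints of the first-swap edge in opposite directions, the edges missing from $P_x$ form exactly the complementary arc from the destination of $x$ back to its source; opposite-ness, i.e.\ $|\xi_\gamma(x,y)|=2$ for the guessed directions, is precisely the condition that $P_y$ wraps far enough to cover this arc (and symmetrically). Hence the two paths jointly use every edge, and inclusion–exclusion yields $|\mathcal{C}| = |E(C_n)| + |E(P_x)\cap E(P_y)| = n + |E(P_x)\cap E(P_y)|$. It therefore remains to prove $|E(P_x)\cap E(P_y)| = |O|$, or that $\Phi$ is wrong.

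For one inequality I would combine the partition $\mathcal{C}=\mathcal{C}_D\sqcup\mathcal{C}_U$ from \cref{lemma:candidateLists:C_UandC_D} with the previous counting lemmas. No object is opposite to both guessed objects without forcing contradictory decided directions (via \cref{lemma:decidedOpposite}), so each $q\in O$ is the uniquely determined special object — decided in direction $1-d(p,e)$ — of exactly the two candidate lists of its guessed partner (using \cref{max2edge} and \cref{lemma:candidateList:decidedByDecided}), whereas every other object of $D$ is special in exactly one list; this gives $|\mathcal{C}_D| = |D| + |O|$. By \cref{lemma:candidateLists:undecidedMax2} every undecided object lies in exactly two candidate lists, and since a list containing a special decided object has no undecided members, all of these lie in $\mathcal{C}_U$; as each list of $\mathcal{C}_U$ holds at least two undecided objects we get $|\mathcal{C}_U|\le |U|$. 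Together with \cref{obs:partition} this yields $|\mathcal{C}| \le |O| + |D| + |U| = n + |O| - |D_0| \le n + |O|$, so $|E(P_x)\cap E(P_y)| \le |O|$.

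The main obstacle is the reverse inequality $|O| \le |E(P_x)\cap E(P_y)|$ (equivalently, $D_0=\emptyset$ and every list in $\mathcal{C}_U$ contains exactly two undecided objects), or that $\Phi$ is wrong. Here I would argue that each object of $O$ must be swapped with its guessed partner on a shared edge and that this charging can be made injective into $E(P_x)\cap E(P_y)$; whenever it cannot — that is, whenever two objects of $O$ are forced to meet their guessed partner on the same shared edge, or an object of $D_0$ survives — \cref{lem:uniqueSwap} or harmonicity is violated on the shared arcs, so that no valid selection respects $\Phi$ and $\Phi$ is wrong. Pinning down this charging and verifying that every failure genuinely forces wrongness is the delicate part of the argument, and it is exactly here that the opposite hypothesis and the ``or $\Phi$ is wrong'' escape do the real work.
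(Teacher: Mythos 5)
Your first half is fine and genuinely different from the paper: reducing $|\mathcal{C}|$ to $|E(P_x)|+|E(P_y)| = n + |E(P_x)\cap E(P_y)|$ and then bounding $|\mathcal{C}| = |\mathcal{C}_D|+|\mathcal{C}_U| \le (|D|+|O|) + |U| \le n+|O|$ via \cref{lemma:candidateLists:C_UandC_D}, \cref{lemma:c_d=d+o}, \cref{lemma:candidateList:decidedByDecided} and \cref{lemma:candidateLists:undecidedMax2} is sound and not circular (none of those proofs rely on the present statement, even though some appear later in the paper). But this only gives one inequality of an equality, and the direction you leave open is where essentially all of the content lies. Your charging sketch for $|O| \le |E(P_x)\cap E(P_y)|$ does not work as stated: an object $q \in O$ is opposite to its guessed partner $p$ and moves in the opposite direction, so it is swapped with $p$ \emph{twice} --- once on the component of their shared paths near the start of $P_p$ and once on the component near $p$'s destination --- and in general only the second of these two swap edges lies on $E(P_x)\cap E(P_y)$; the first lies on $P_p$ but typically outside the other guessed path. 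So the charging must select the correct one of the two meetings, and showing that these selected edges are shared and pairwise distinct requires an argument about the temporal order of swaps (ccw-moving objects cannot overtake one another, so $y$ passes every shared edge before any $q\in Q_x$ does, forcing $q$'s second meeting with $x$ to occur after the second $x$--$y$ swap). Moreover, your own upper bound is really $n+|O|-|D_0|$, so equality additionally forces $D_0=\emptyset$, i.e., that \emph{every} object appears in a suitable candidate list of a guessed object when $x$ and $y$ are opposite; you note this reduction parenthetically but never establish it, and it does not follow from any of the lemmas you invoke.

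The paper's proof supplies exactly this missing machinery, and does so without the partition lemmas: it splits $O$ into $Q_x$ and $Q_y$ according to which guessed object each member is opposite to, observes that $x$ and $y$ are swapped first (at $e'$) and, being opposite, must be swapped a second time or $\Phi$ is wrong, and then counts directly along the two journeys: after its second swap with $y$, the object $x$ is swapped exactly once with each member of $Q_x$ before reaching its destination, and symmetrically for $y$; since each such swap advances a guessed object by one edge, the edges of $P_x$ and $P_y$ beyond the $n$ edges covering the cycle are in one-to-one correspondence with $O$, giving $|\mathcal{C}| = n + |O|$ in a single stroke. That simultaneous count yields both inequalities at once, which is precisely the step your proposal acknowledges but does not carry out; as it stands, the proposal proves only $|\mathcal{C}| \le n + |O|$ (or $\Phi$ wrong), not the stated equality.
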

\begin{proof}
Let~$p \in \{x,y\}$ be a guessed object and let~$r \in \{x,y\}$ with~$p \neq r$ be the other guessed object. Suppose that
$p$ and~$r$ are opposite. We partition the set~$O$ into two subsets~$Q_p$ and~$Q_r$ as follows: Let~$q$ be an object such that there exists a candidate list~$C(p,e) \in \mathcal{C}$ such that~$q \in C(p,e)$,~$q$ is decided in direction~$1-d(p,e)$ and~$q$ is opposite to~$p$.
Then~$q \in Q_p$ and otherwise~$q \in Q_r$.
Observe that if~$q \in Q_r \subseteq O$, then~$q$ is in some candidate list~$C(r,e)$ and decided in direction~$1 - d(r,e)$.

Observe that~$p$ is swapped with~$r$ before it is swapped with any object in~$O$ because~$p$ and~$r$ are the guessed objects in~$\Phi$. Since~$p$ and~$r$ are opposite, they need to be swapped twice as otherwise the guess is wrong. Suppose the guess is not wrong. The union~$P_x \cup P_y$ covers the whole cycle~$C_n$ and hence, there exist at least~$n$ candidate lists, i.e.~$|\mathcal{C}| \geq n$. Note that after~$p$ is swapped with~$r$ it is swapped with every object in~$Q_p$ once, before reaching its destination. Since~$p$ and~$r$ are symmetric, we can derive the following. The guessed object~$p$ is swapped with~$|Q_p|$ objects after it is swapped with~$r$ the second time and guessed object~$r$ is swapped with~$|Q_r|$ objects after it is swapped with~$p$ the second time. From this it follows that the number of candidate lists is equal to~$n + |Q_p|+ |Q_r| = n + |O|$. This concludes the proof.
\end{proof}

The next lemma shows an equality involving the cardinalities of~$\mathcal{C}_D, D,$ and~$O$.
	
\begin{lemma}\label{lemma:c_d=d+o}
Let~$\Phi := (x,y)$ be a guess. Then either~$\Phi$ is wrong or it holds that
\begin{equation}\label{equation:lemma:c_d=d+o}
    |\mathcal{C}_D| = |D| + |O|.
\end{equation}
\end{lemma}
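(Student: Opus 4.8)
The plan is to count $|\mathcal{C}_D|$ by setting up a surjection from $\mathcal{C}_D$ onto $D$ and determining the size of each fiber. For a candidate list $C(p,e) \in \mathcal{C}$, call an object $q \in C(p,e)$ \emph{opposite-decided} if $q$ is decided in direction $1 - d(p,e)$. By \cref{lemma:candidateList:decidedByDecided}, a candidate list contains at most one opposite-decided object. First I would argue that, unless $\Phi$ is wrong, a list lies in $\mathcal{C}_D$ precisely when it contains an opposite-decided object: if $C(p,e)$ has such an object $q$, then $q$ is decided in the opposite direction in this very list, so $q \in D$ and hence $C(p,e) \in \mathcal{C}_D$; conversely, a list $C(p,e) \in \mathcal{C}_D$ is not in $\mathcal{C}_U$ by \cref{lemma:candidateLists:C_UandC_D} and hence contains at most one undecided object, so exactness ($f_\gamma(p,e) = 1$ for the guessed $p$) forces exactly one object to travel in direction $1 - d(p,e)$, and this object must be opposite-decided or no valid selection respects $\Phi$. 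This yields a well-defined map $\phi \colon \mathcal{C}_D \to D$ sending each list to its unique opposite-decided object, and $\phi$ is surjective since every $q \in D$ appears, by definition of $D$, in at least one list in which it is opposite-decided.

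It then remains to compute the fibers $\phi^{-1}(q)$, that is, the number $k(q)$ of lists in which $q$ is opposite-decided. I claim $k(q) = 1$ for $q \in D \setminus O$ and $k(q) = 2$ for $q \in O$. For $q \in O$, the object $q$ is opposite to a guessed object $p$, so by the footnote to \cref{def:swapspace} its two shared paths with $p$ lie on the same directed path of $p$; hence $q$ occurs in exactly two candidate lists $C(p,e_0), C(p,e_1)$, both with the same value $d(p,e_0) = d(p,e_1)$ equal to the direction opposite to the decided direction of $q$, so both are opposite-decided and $k(q) \geq 2$. To see $k(q) = 2$ exactly, I would invoke \cref{lemma:candidateLists:xyopp}, which guarantees that $x$ and $y$ are themselves opposite whenever $O \neq \emptyset$, together with \cref{lemma:candidateLists:undecidedMax2} and \cref{max2edge} to rule out a third opposite-decided occurrence in a list of the other guessed object. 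For $q \in D \setminus O$, the object $q$ is not opposite to any guessed object, so it appears in at most one candidate list of each of $x$ and $y$, and the main step is to show that it cannot be opposite-decided in a list of both: the geometry of the two directed paths $P_x$ and $P_y$, which overlap only at the edge of the initial swap, forces the edge at which $q$ meets a guessed object in its opposite direction to be unique, so $k(q) = 1$.

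Summing over the fibers then gives
\begin{equation*}
    |\mathcal{C}_D| = \sum_{q \in D} k(q) = |D \setminus O| + 2|O| = |D| + |O|,
\end{equation*}
which is the claimed identity; every step either holds outright or concludes that $\Phi$ is wrong, matching the disjunction in the statement.

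I expect the main obstacle to be the exact fiber-size argument, and in particular ruling out that an object of $D \setminus O$ is opposite-decided in candidate lists of both $x$ and $y$. This is the one place where a purely local counting argument does not suffice, since it requires the global structure of $P_x$ and $P_y$ and the opposite/non-opposite dichotomy rather than a single candidate list. The supporting facts I would lean on are \cref{lem:uniqueSwap} for uniqueness of the swap edge on each shared path, \cref{max2edge} for the bound of at most two shared paths, and \cref{lemma:candidateLists:xyopp} to force $x$ and $y$ to be opposite once $O \neq \emptyset$.
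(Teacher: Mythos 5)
Your proposal is correct and takes essentially the same route as the paper's proof: the paper also counts $|\mathcal{C}_D|$ by assigning to each candidate list its unique opposite-decided object (uniqueness via \cref{lemma:candidateList:decidedByDecided}) and summing multiplicities, namely $1$ for each object of $D \setminus O$ and $2$ for each object of $O$, which gives $|D \setminus O| + 2|O| = |D| + |O|$. The one step you flag as the main obstacle---that an object of $D \setminus O$ cannot be opposite-decided in candidate lists of both $x$ and $y$---is settled in the paper not by the geometry of $P_x$ and $P_y$ but by direction parity: since $\gamma(x) \neq \gamma(y)$ in every selection respecting $\Phi$, being opposite-decided in a list of $x$ forces one decided direction while being opposite-decided in a list of $y$ forces the other, and an object decided in both directions is impossible unless $\Phi$ is wrong.
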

\begin{proof}
We start with the following observation which is true due to~$O \subseteq D$:
\begin{equation*}
    |D\setminus O| = |D| - |O|.
\end{equation*}
First observe that every candidate list~$C(p,e) \in \mathcal{C}_D$ contains exactly one object~$q \in D$ such that~$q$ is decided in direction~$1-d(p,e)$, due to \cref{lemma:candidateList:decidedByDecided}.

Now recall that every object~$q \in D \setminus O$ is non-opposite to both guessed objects~$x$ and~$y$ and can therefore appear in at most one candidate list of~$x$ and in one candidate list of~$y$, or otherwise~$\Phi$ is wrong. Further, since~$q \in D\setminus O$, object~$q$ is decided. However, since in every valid selection that respect~$\Phi$,~$\gamma(x) \neq \gamma(y)$,~$q$ is either decided in direction~$1-d(x,e_x)$ for some edge~$e_x$ such that~$q \in C(x,e_x)$ or~$q$ is decided in direction~$1-d(y,e_y)$ for some edge~$e_y$ such that~$q \in C(y,e_y)$ but not both.
Thus, the number~$\epsilon_0$ of candidate lists~$C(p,e)$ that contain an object~$q \in D \setminus O$ is equal to~$|D \setminus O|$.

Next, recall that every object~$q \in O$ is opposite to one guessed object~$p \in \{x,y\}$ for which there exists an edge~$e$ such that~$q \in C(p,e)$ and~$q$ is decided in direction~$1-d(p,e)$.
Hence~$q$ appears in two candidate list of~$p$ and in no candidate list of the other guessed object~$r \neq p$, or otherwise~$\Phi$ is wrong. 
Thus, the number~$\epsilon_1$ of candidate lists~$C(p,e)$ that contain an object~$q \in O$ is equal to~$2 |O|$.

We will now combine these results to show \cref{equation:lemma:c_d=d+o}.
Observe that $$|\mathcal{C}_D| = \epsilon_0 + \epsilon_1 = |D \setminus O| + 2|O| = |D| - |O| + 2|O| = |D| + |O|.$$
\end{proof}

The following lemma shows an inequality regarding the cardinality of~$\mathcal{C}_U$.

\begin{lemma}\label{lemma:candidateLists:o>0}
Let~$\Phi := (x,y)$ be a guess.
Then,~$|\mathcal{C}_U| \geq |U|$ or~$\Phi$ is wrong.
\end{lemma}
\begin{proof}
We make a case distinction over whether or not~$x$ and~$y$ are opposite.
If~$x$ and~$y$ are opposite, then due to \cref{lemma:candidateLists:n+O}, it holds that~$|\mathcal{C}| = n + |O|$, 	where~$n$ is the number of objects.
Recall that~$\mathcal{C}_D$ and~$\mathcal{C}_U$ are a partition of~$\mathcal{C}$ and thus~$|\mathcal{C}_D| + |\mathcal{C}_U| = n + |O|$.
Applying \cref{equation:lemma:c_d=d+o} yields~$|D| + |O| + |\mathcal{C}_U| = n + |O|$, which is equivalent to~$|\mathcal{C}_U| = n - |D|$.
Note that~$n = |U| + |D| + |D_0|$, since~$U, D$ and~$D_0$ are by \cref{obs:partition} a partition of~$X$ and hence~$|\mathcal{C}_U| \geq |U|$.

If~$x$ and~$y$ are non-opposite, then \cref{lemma:candidateLists:decidedByDestination,lemma:candidateLists:xyopp} imply that every object whose destination is not on the joint path~$P_x \cup P_y$ of~$x$ and~$y$ is in~$D_0$.
Hence, if an object is in~$U$ or~$D$, then its destination is on~$P_x \cup P_y$.
Note further that for an object there exists as many candidate lists as there are edges on that objects path.
Thus, the number objects in~$U$ and~$D$ are upper-bounded by the number of candidate lists of~$x$ and~$y$. From this follows that~$|\mathcal{C}| \geq |U| + |D|$.
By \cref{lemma:candidateLists:C_UandC_D}, $\mathcal{C}_D$ and~$\mathcal{C}_U$ are a partition of~$\mathcal{C}$.
Hence~$|\mathcal{C}_D| + |\mathcal{C}_U| \geq |U| + |D|$.
Applying \cref{equation:lemma:c_d=d+o} yields~$|D| + |O| + |\mathcal{C}_U| \geq |U| + |D|$, which is equivalent to~$|O| + |\mathcal{C}_U| \geq |U|$.
By \cref{lemma:candidateLists:xyopp} and since~$x$ and~$y$ are non-opposite,~$|O| = 0$ or~$\Phi$ is wrong.
Thus,~$|\mathcal{C}_U| \geq |U|$ or~$\Phi$ is wrong.
This concludes our proof.
\end{proof}

We will now conclude that the number of undecided objects in a candidate list of the guessed objects~$x$ and~$y$ is either zero or two.
	
\begin{proposition}
\label{proposition:candidateLists}
Let~$\Phi := (x,y)$ be a guess.
Let further~$(p,e) \in H$, such that~$p \in \Phi$.
The number of undecided objects in~$C(p,e)$ is either zero or two or~$\Phi$ is wrong.
\end{proposition}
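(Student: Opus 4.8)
The plan is to prove the dichotomy by a double-counting argument over the incidences between undecided objects and the candidate lists of the two guessed objects, feeding on the structural lemmas already established. Throughout I would assume that $\Phi$ is not wrong, since otherwise there is nothing to show; this lets every ``or~$\Phi$ is wrong'' conclusion of the cited lemmas supply us with the stated equalities and the partition property.

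First I would observe that every candidate list relevant to the statement, namely every~$C(p,e)$ with~$p \in \{x,y\}$, belongs to~$\mathcal{C}$, and that by \cref{lemma:candidateLists:C_UandC_D} the sets~$\mathcal{C}_D$ and~$\mathcal{C}_U$ partition~$\mathcal{C}$. I then split into the two cases. If~$C(p,e) \in \mathcal{C}_D$, then it contains an object decided in direction~$1-d(p,e)$, and \cref{lemma:candidateList:decidedByDecided} forces every other object of the list to be decided in direction~$d(p,e)$; hence~$C(p,e)$ contains no undecided object, giving the value zero. It therefore remains to show that every list in~$\mathcal{C}_U$ contains exactly two undecided objects; by the definition of~$\mathcal{C}_U$ each such list already contains at least two, so the real work is to rule out three or more.

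The heart of the argument is the counting. By \cref{lemma:candidateLists:undecidedMax2}, each undecided object lies in exactly two candidate lists of~$\mathcal{C}$, and since no list in~$\mathcal{C}_D$ contains an undecided object, both of these must lie in~$\mathcal{C}_U$. Counting incidences one way, the total number of (undecided object, list) incidences equals~$2|U|$. Counting the other way, every list in~$\mathcal{C}_U$ contributes at least two, so the total is at least~$2|\mathcal{C}_U|$, which yields~$|U| \ge |\mathcal{C}_U|$. Combining this with~$|\mathcal{C}_U| \ge |U|$ from \cref{lemma:candidateLists:o>0} gives~$|\mathcal{C}_U| = |U|$, so the two bounds on the incidence count coincide at~$2|\mathcal{C}_U|$. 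Equality then forces every list in~$\mathcal{C}_U$ to contain exactly two undecided objects, and together with the~$\mathcal{C}_D$ case this establishes the dichotomy.

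The main obstacle is not this final assembly, which is a clean tightness/averaging step, but in making sure the two ingredients line up: the exact incidence count~$2|U|$ relies on \cref{lemma:candidateLists:undecidedMax2} giving \emph{exactly} two (not one or three) lists per undecided object, while the matching lower bound~$|\mathcal{C}_U| \ge |U|$ rests on the cardinality identities behind \cref{lemma:candidateLists:o>0}. I would take particular care over the point that the two lists containing a fixed undecided object both lie in~$\mathcal{C}_U$ rather than~$\mathcal{C}_D$; this is precisely where \cref{lemma:candidateList:decidedByDecided} (no undecided objects in any~$\mathcal{C}_D$ list) is needed, and where a guess violating one of the hypotheses would instead be declared wrong.
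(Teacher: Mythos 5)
Your proof is correct and follows essentially the same route as the paper: it combines \cref{lemma:candidateLists:o>0} ($|\mathcal{C}_U| \geq |U|$), \cref{lemma:candidateLists:undecidedMax2} (each undecided object lies in exactly two lists of $\mathcal{C}$), the partition of $\mathcal{C}$ into $\mathcal{C}_D$ and $\mathcal{C}_U$, and \cref{lemma:candidateList:decidedByDecided} in the same double-counting argument that forces every list in $\mathcal{C}_U$ to contain exactly two undecided objects. Your write-up merely makes the tightness step ($2|\mathcal{C}_U| \leq 2|U| \leq 2|\mathcal{C}_U|$) more explicit than the paper's informal phrasing.
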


\begin{proof}
We will prove the above statement by using a counting argument.
\Cref{lemma:candidateLists:o>0} shows that~$|\mathcal{C}_u| \geq |U|$ or~$\Phi$ is wrong.
Further in \cref{lemma:candidateLists:undecidedMax2} we showed that every undecided object is in exactly two candidate lists~$C(p,e)$ and~$C(q,e)$ where~$p,q \in \{x,y\}$.
Lastly, by definition of~$\mathcal{C}_U$ every candidate list~$C(p,e) \in \mathcal{C}_U$ contains at least two distinct objects~$q_0,q_1 \in U$. 

From this it follows that every candidate list~$C(p,e) \in \mathcal{C}_U$ contains exactly two undecided objects, as otherwise there exist not enough undecided objects to fill every candidate list in~$\mathcal{C}_U$ with at least two undecided objects, since every undecided object can be in at most two candidate lists in~$\mathcal{C}_U$.

The statement now follows from the fact that each candidate list in~$\mathcal{C}_U$ contains exactly two undecided objects and that each candidate list in~$\mathcal{C}_D$ contains no undecided objects.
\end{proof}

The last ingredient for the construction of a 2-SAT formula for exact selection is a generalized notion of decided objects, which will also cover objects that are decided, relative to the direction of other objects.
For this notion we will define an edge similar to the unique edge in \cref{lem:uniqueSwap}.
The difference being that where \cref{lem:uniqueSwap} states that there is a unique edge where two objects \emph{can be swapped}, we now define a unique edge where the two objects \emph{can meet}.
To this end, we first show that the number of objects moving into clockwise direction is the same for every valid selection.
		
\begin{lemma}\label{lemma:type}
Let~$\mathcal{I} := (N, X, \succ, C_n, \sigma_0, \sigma)$ be a instance of \ra{} on cycle~$C_n$.
For the set of objects~$X$ one can calculate in~$O(n)$ time the number of objects moving in clockwise and respectively counter-clockwise direction in every valid selection.
\end{lemma}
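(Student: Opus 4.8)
The plan is to exhibit a conserved quantity of the swapping dynamics that pins down the number of counter-clockwise objects independently of the chosen valid selection. For an object~$p$ write~$s_p := \sigma_0^{-1}(p)$ for its initial agent,~$t_p := \sigma^{-1}(p)$ for its destination, and~$\delta_p := (t_p - s_p) \bmod n \in \{0, \dots, n-1\}$ for the clockwise distance from~$s_p$ to~$t_p$. To each intermediate assignment I would associate, for every object, its \emph{signed clockwise displacement}: the number of edges it has so far crossed clockwise minus the number it has crossed counter-clockwise. A single rational swap over an edge moves one of the two participating objects one step clockwise and the other one step counter-clockwise, so the sum over all objects of the signed clockwise displacement changes by~$+1-1 = 0$. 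Since this sum is~$0$ in~$\sigma_0$, it remains~$0$ in every reachable assignment, in particular in~$\sigma$.

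Next I would use the fact, already established in the lead-up to \cref{max2edge}, that an agent never re-accepts an object it has traded away; hence in any sequence of rational swaps each object traverses a simple arc of the cycle and therefore moves monotonically in a single direction. Consequently, in the sequence of swaps witnessed by a valid selection~$\gamma$ (which yields~$\sigma$ by \cref{validDecides} and \cref{greedySwaps}), an object~$p$ with~$\gamma(p) = 1$ crosses exactly the~$\delta_p$ edges of its clockwise arc, contributing signed displacement~$+\delta_p$, while an object with~$\gamma(p) = 0$ crosses the~$n - \delta_p$ edges of the complementary arc, contributing~$-(n - \delta_p) = \delta_p - n$. Summing over all objects and using that the total is~$0$ gives
\begin{equation*}
0 = \sum_{p:\,\gamma(p)=1}\delta_p + \sum_{p:\,\gamma(p)=0}(\delta_p - n) = \sum_{p \in X}\delta_p - n\cdot\bigl|\{p : \gamma(p)=0\}\bigr|,
\end{equation*}
so the number of counter-clockwise objects equals~$\tfrac{1}{n}\sum_{p \in X}\delta_p$, a quantity depending only on the instance and not on~$\gamma$; the number of clockwise objects is then its complement. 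This shows that both counts are identical across all valid selections.

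For the running time I would compute the~$\delta_p$ directly: one pass over the agents builds the source and destination of every object from~$\sigma_0$ and~$\sigma$, after which each~$\delta_p$ is obtained in constant time and their sum in~$O(n)$ time; dividing by~$n$ yields the desired count. If the sum is not divisible by~$n$, the count would be non-integral, which already certifies that no valid selection exists.

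The main subtlety to nail down is the treatment of stationary objects, i.e.\ objects with~$s_p = t_p$ and hence~$\delta_p = 0$: such objects cross no edge and contribute~$0$ to the displacement sum regardless of the direction~$\gamma$ assigns them, so they do not affect~$\tfrac1n\sum_p\delta_p$. I would therefore phrase the statement as counting the objects that actually move and remark that a stationary object's assigned direction is immaterial to the dynamics. The one genuinely delicate point in the argument is justifying the monotone-traversal claim precisely, namely that the signed displacement is exactly~$\delta_p$ or~$\delta_p - n$ with no additional full turns around the cycle; this follows from the no-re-acceptance property but should be stated carefully, as it is what rules out spurious winding terms in the summation.
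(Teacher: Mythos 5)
Your proposal is correct and takes essentially the same approach as the paper: both rest on the identity that the total clockwise path length equals the total counter-clockwise path length (the paper justifies it by pairing every edge-crossing of a clockwise object with the counter-clockwise partner it swaps with, you justify it via conservation of the total signed displacement, which is the same double count), solve it to conclude that the number of counter-clockwise objects is $\tfrac{1}{n}\sum_{p}\delta_p$, and compute this sum in $O(n)$ time. Your extra care about stationary objects and about non-divisibility of $\sum_p \delta_p$ by $n$ are minor refinements that the paper's proof glosses over, not a different argument.
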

\begin{proof}
Let the variables~$d_1,...,d_n \in \{0, 1\}$ represent objects~$x_1,...,x_n$ where~$x_i$ moves in clockwise direction if~$d_i = 1$ and counter-clockwise otherwise.
Let further~$y_1,...,y_n \in \mathbb{N}$ denote the length of the path of an object's initial position to its destination in clockwise direction.
Since for each edge on an object's path in a certain direction there must be an object moving in the opposite direction swapping the object at that edge the sum of the lengths of the paths of the objects moving in clockwise direction must be equal to the sum of the lengths of the paths of the objects moving in counter-clockwise direction.
Formally,~$\sum_{i=1}^n d_i y_i = \sum_{i=1}^n ((1-d_i) (n - y_i))$, which can be rewritten as~$\sum_{i=1}^n (d_i y_i - (1-d_i) (n - y_i)) = 0$.
This is equivalent to
\begin{equation*}
	-\sum_{i=1}^n n + n \sum_{i=1}^n d_i + \sum_{i=1}^n y_i = 0.
\end{equation*}
Given the graph, the objects, and the assignments~$\sigma_0$ and~$\sigma$, ~$\sum_{i=1}^n y_i$ is a constant~$Y$.
This constant can be computed in linear time by measuring for each object~$p$ the distance between~$\sigma_0^{-1}(p)$ and~$\sigma_0^{-1}(p)$ in clockwise direction. 
Using~$Y$ and~$\sum_{i=1}^n n = n^2$ yields
\begin{equation}
\sum_{i=1}^n d_i = \dfrac{n^2 - Y}{n} = \dfrac{n^2}{n} - \dfrac{Y}{n} = n - \dfrac{Y}{n}
\end{equation}

Hence, the number of objects moving in clockwise direction is equal to~$n - \dfrac{Y}{n}$ and the number of objects moving in counter-clockwise direction is~$\dfrac{Y}{n}$.
\end{proof}

We will henceforth denote the number of objects walking in clockwise direction in instance~$\mathcal{I}$ by~$\theta_\mathcal{I}$ or~$\theta$ for short if the instance is implicitly clear.
We will now use~$\theta_\mathcal{I}$ to show that there is a unique edge where two agents can meet in all valid selections.

\begin{lemma}
\label{lemma:simulation}
Let~$\mathcal{I} := ((N, X, \succ, C_n, \sigma_0, \sigma), e')$ be an instance of \fsra{}.
Let~$\Phi := (x,y)$ be the guess of~$\mathcal{I}$, let~$(p,e) \in H$ be an object-edge-pair and let~$q \in C(p,e)$ where~$q$ is an undecided object.
Let~$P$ be the shared path of~$p$ and~$q$ such that~$e \in E(P)$.
After an~$O(n^3)$-time preprocessing, we can compute in constant time an edge~$f$ on~$P$ such that for every valid selection~$\gamma$ that respects~$\Phi$ and for which~$\gamma(q) \neq \gamma(p) = d(p,e)$ it holds that~$p$ and~$q$ meet at edge~$f$ in the execution of \texttt{Greedy Swap} on input~$\gamma$.
\end{lemma}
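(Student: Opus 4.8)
The statement asserts that for an undecided object $q$ in a candidate list $C(p,e)$ of a guessed object $p$, the edge $f$ where $p$ and $q$ actually meet during \texttt{Greedy Swap} is the same across all valid selections respecting $\Phi$, and that $f$ can be located quickly after preprocessing. The plan is to count exactly how many objects must be swapped past $p$ (equivalently, past $q$) before $p$ and $q$ come face to face, and then to convert this count into a concrete edge by walking along the shared path $P$. The key conceptual input is that, once $\gamma$ is valid, the \emph{order} in which an object meets its swap partners along its path is forced, so the meeting edge is determined purely by how many opposite-direction objects precede $q$ on $p$'s route (and symmetrically). This mirrors the counting argument already used in the proof of \Cref{greedySwaps}, where we showed $q$ swaps with exactly $m-1$ objects before meeting $p$; here I would reuse that logic but make the resulting position effectively computable.

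\textbf{Main steps.}
First I would fix any valid selection $\gamma$ respecting $\Phi$ with $\gamma(q)\neq\gamma(p)=d(p,e)$ and invoke \Cref{validDecides} so that \texttt{Greedy Swap} succeeds; this guarantees $p$ and $q$ do meet somewhere on $P$, and by \Cref{lem:uniqueSwap} they meet (and swap) at a unique edge of $P$. Second, I would argue that the position of this meeting is invariant: starting from $\sigma_0^{-1}(p)$ and moving in direction $d(p,e)$, the meeting edge is reached after $p$ has been swapped with exactly those opposite-direction objects that start strictly closer to $p$ (in direction $d(p,e)$) than $q$ does, and on the shared path. \Cref{order_lemma} is the crucial tool here: it shows that, in a harmonic selection, the order of the swap edges along $p$'s path matches the order of the starting positions of the opposite objects, so the set of objects $p$ meets before $q$ is exactly determined by their \emph{initial} positions, independent of $\gamma$. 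Counting these objects gives a number $k$ that does not depend on the particular valid selection. Third, I would translate $k$ into the edge $f$: walking along $P$ from $p$'s initial agent, the object $p$ advances one agent per preceding swap, and the value $\theta$ from \Cref{lemma:type} pins down, globally, how the clockwise/counter-clockwise populations interleave so that the displacement can be read off. The edge $f$ is then the edge at the resulting offset along $P$.

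\textbf{Computational step.}
For the complexity claim I would set up an $O(n^3)$ preprocessing table analogous to the one built in the proof of \Cref{lemma:between_decided}: precompute, for every pair of objects and every pair of directions, whether one shields the other and whether they are opposite, together with the starting-position offsets along the cycle and, for each $(p,e)$, the count $k$ of opposite objects that start closer to $p$ than the endpoints of the shared path containing $e$. Since candidate-list membership and initial positions are static data, this table lets me, for a query $(p,e,q)$, look up $q$'s offset and $p$'s offset in constant time and return the edge $f$ at the computed displacement along $P$. The running time is dominated by the triple loop over (object, object, agent) as in \Cref{lemma:between_decided}, giving $O(n^3)$ preprocessing and $O(1)$ per query.

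\textbf{Anticipated obstacle.}
The main difficulty is making the invariance of the meeting edge fully rigorous when the two objects share \emph{two} paths, i.e.\ when $p$ and $q$ are opposite and $|\xi_\gamma(p,q)|=2$. In that regime one must be careful to define which shared path $P$ contains $e$ and to count preceding swaps relative to the correct endpoint of $P$, since after the first of the two swaps the effective starting configuration changes (exactly the situation handled by the ``second swap'' case in \Cref{greedySwaps}). I expect the cleanest route is to reduce to the single-path case by first locating the first meeting edge and then, for the second shared path, re-running the count on the residual instance obtained after that first swap, so that \Cref{order_lemma} applies verbatim. The remaining bookkeeping — converting a count of preceding objects into an edge offset using $\theta$ — is routine once the ordering is established.
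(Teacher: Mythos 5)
The core of this lemma is an invariance claim: the quantity that determines the meeting edge~$f$ is the number of objects assigned direction~$1-\gamma(p)$ that start between~$p$ and~$q$ on~$P$, and the lemma asserts this number is the \emph{same} in every valid selection respecting~$\Phi$ with~$\gamma(q) \neq \gamma(p) = d(p,e)$. Your proposal derives this invariance from \cref{order_lemma}, but that lemma does not give it. \Cref{order_lemma} only constrains, \emph{within one fixed selection}, the relative order of the swap edges of those objects that are assigned the opposite direction in that selection; it says nothing about whether two different valid selections can assign the opposite direction to different subsets of the objects lying between~$p$ and~$q$. A priori they can, in which case~$p$ would be swapped a different number of times before facing~$q$ and the meeting edge would move. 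Closing exactly this gap is the substance of the paper's proof, and it is done by anchoring the counts to the guessed objects: since~$p$ (moving with~$x$) must be swapped with the opposite guessed object~$y$ at the unique edge given by \cref{lem:uniqueSwap} (or else~$p$ is decided and \cref{lemma:between_decided} takes over), the number~$\lambda(p,x)$ of clockwise-moving objects starting on the arc from~$p$ to~$x$ equals the fixed number of edges~$y$ traverses to reach that edge, and hence is the same constant in every valid selection; symmetrically, the undecidedness of~$q$ (via \cref{lemma:decidedObjects}) anchors~$q$ to~$x$ and fixes~$\lambda(q,y)$. Only by comparing these two selection-independent quantities does one obtain the selection-independent count of objects between~$p$ and~$q$, and from it the edge~$f$. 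Note that your argument nowhere uses that~$q$ is undecided, nor that the selection respects the guess~$\Phi$ beyond lip service --- a sign that it would ``prove'' the statement even in settings where it fails.

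A second, related problem is that your preprocessing is circular: you propose to tabulate, for each~$(p,e)$, the count of \emph{opposite-direction} objects starting closer to~$p$ than~$q$, but which objects have the opposite direction is precisely the selection-dependent information that is unavailable at preprocessing time. The paper instead tabulates selection-independent data per guess --- the set of decided objects, and a table~$T[i,j]$ counting objects decided in clockwise direction that start between positions~$i$ and~$j$ --- from which~$\lambda(p,x)$ and~$\lambda(q,y)$, and then~$f$, are read off in constant time per query. The global count~$\theta$ from \cref{lemma:type} that you invoke cannot substitute for this anchoring, since knowing the total number of clockwise-moving objects does not determine how many of them lie between~$p$ and~$q$.
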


\begin{proof}
Let~$\Phi := (x,y)$ be the guess of~$\mathcal{I}$, let~$(p,e) \in H$ be an object-edge-pair and let~$q \in C(p,e)$ be an undecided object.
Let~$P$ be the shared path of~$p$ and~$q$ such that~$e \in E(P)$.
We will assume without loss of generality that~$\gamma(p) = \gamma(x) = 1$. 
The other case is symmetric with~$x$ and~$y$ and clockwise and counter-clockwise swapped.

We will first show that after computing the set of all decided objects, we can compute in constant time the constant number~$\lambda(p,x)$ of objects that start on~$\cycseq[p][x]$ that are assigned clockwise direction and the constant number~$\lambda(x,p)$ of objects that start on~$\cycseq[x][p]$ that are assigned clockwise direction in every valid selection~$\gamma$ with~$\gamma(p) = \gamma(x) \neq \gamma(y) = 0$.
Observe that~$\lambda(p,y) + \lambda(y,p) = \theta + 2$ as both~$p$ and~$x$ walk in clockwise direction and are counted in both~$\lambda(p,y)$ and~$\lambda(y,p)$.
Thus, given~$\theta$ and either~$\lambda(p,x)$ or~$\lambda(x,p)$, we can compute the respective other term in constant time.
Afterwards we show how to compute the edge~$f$ over which~$p$ and~$q$ can be swapped in any valid sequence of swaps that respects~$\Phi$.

Let $\xi_\gamma(p,y)$ be the set of shared paths of~$p$ and~$y$.
We will distinguish between three cases: (i)~$\xi_\gamma(p,y)$ is empty, (ii)~$\xi_\gamma(p,y)$ is non-empty but~$p$ does not appear in a candidate list of~$y$, or (iii)~$\xi_\gamma(p,y)$ is non-empty and~$p$ appears in a candidate list of~$y$.
By \cref{lemma:decidedObjects}, $p$ is in the first case decided.
If it is decided in direction~$0$, then~$\gamma$ is not valid as~$\gamma(p) = 1$.
Since~$\lambda(p,y)$ and~$\lambda(y,p)$ are only defined for valid selections we can ignore this case.
If~$p$ is decided in direction~$1$, then due to \cref{lemma:between_decided}, every object that starts on the path between~$p$ and~$x$ in clockwise direction is decided as well.
Hence, the number $\lambda(x,p)$ is equal to the number of those objects decided in clockwise direction (including~$x$ and~$p$) and can be computed in linear time.

	In the second case, $\xi_\gamma(p,y)$ is non-empty but~$p$ does not appear in a candidate list of~$y$.
Then $\gamma$ is not valid and we can again ignore this case.
In the last case~$\xi_\gamma(p,y)$ is non-empty and~$p$ appears in a candidate list of~$y$.
We will now prove that for any valid selection~$\gamma$ where~$\gamma(p) = \gamma(x)$,~$\lambda(p,x)$ is equal to the number of edges that~$y$ has to pass before reaching the edge~$e_y$ where~$p$ and~$y$ can be swapped according to \cref{lem:uniqueSwap}.
Since~$p$ is in the candidate list of~$y$ we know that~$e_y$ must exist and that~$p$ and~$y$ have to be swapped at~$e_y$ or otherwise~$\gamma$ is not valid.
We know further that~$y$ has to be swapped with~$\lambda(p,x)$ objects (including~$x$) before it meets~$p$, as otherwise it cannot be swapped with~$p$ over~$e_y$.
Suppose towards a contradiction that~$\lambda(p,x)$ is not equal to this number and~$\gamma$ is valid.
Because~$y$ has to be swapped with~$\lambda(p,x)$ objects---as this many objects move in clockwise direction before~$y$ meets~$p$---and each swap amounts to one edge being passed,~$y$ and~$p$ meet at an edge~$e'$ that is not~$e_y$.
However, \cref{lem:uniqueSwap} states that~$e_y$ is the only edge on the shared path of~$p$ and~$y$ where they can be swapped.
Thus,~$p$ and~$y$ cannot be swapped at edge~$e'$ and~$\sigma$ cannot be reached and therefore~$\gamma$ cannot be valid, a contradiction.
Since edge~$e_y$ is constant for all valid selections, so is~$\lambda(p,x)$.

We conclude this proof by showing how to compute the edge~$f$ where~$p$ and~$q$ can meet in every valid selection~$\gamma$ that respects~$\Phi$ and with~$\gamma(q) \neq \gamma(p) = \gamma(x)$ in constant time after an~$O(n^3)$-time preprocessing step.
The set of decided objects for the guess~$\Phi$ can be computed in~$O(n^2)$ time.
Doing this for each guess then takes~$O(n^3)$ time.
Afterwards we can compute a table~$T[i,j]$ that stores how many objects that start between~$i$ and~$j$ are decided in clockwise direction.
Computing the first entry~$T(i,i+1\bmod n)$ takes~$O(1)$ time and afterwards we can compute another entry~$T[i,j]$ in constant time by checking whether~$y+1 \bmod n$ is decided in clockwise direction.
Thus computing all entries for~$i$ takes~$O(n)$ time and computing all entries of the table takes~$O(n^2)$ time.
Afterwards we can compute the number~$\lambda(p,x)$ in constant time and by comparing~$\lambda(p,x)$ and~$\lambda(q,y)$, we can calculate in constant time the number of objects between~$p$ and~$q$ in clockwise direction that move in clockwise direction.
Counting this number of edges from~$q$ in counter-clockwise direction yields the edge~$f$ where~$p$ and~$q$ can meet if this many objects between them move in clockwise direction and hence are swapped with~$q$.
\end{proof}

We use \cref{lemma:simulation} to define successful tuples as follows.
A tuple is successful if the edge where two objects meet according to \cref{lemma:simulation} is the edge over which they can be swapped according to \cref{lem:uniqueSwap}.
\begin{definition}\label{def:simulations}
Let~$(p,e) \in H$ be an object-edge-pair and let~$q \in C(p,e)$ such that ~$q$ is undecided.
Let~$P$ be their shared path and let~$e \in E(P)$.
Let~$f$ be the edge according to \cref{lemma:simulation} where~$p$ and~$q$ meet in every valid selection~$\gamma$ that respects~$\Phi$ and for which it holds that~$\gamma(q) \neq \gamma(p) = d(p,e)$.
If~$q$ is closer to~$p$ in direction~$d(p,e)$ than guessed object~$\Phi_{d(p,e)}$, then let~$c := 1$ and~$c := 0$ otherwise. 
Then,~$S(p,e,q)$ denotes the tuple~$(f,c)$ and one says that it is \emph{successful} if~$e=f$ and unsuccessful otherwise.
\end{definition}

We will now introduce conditionally decided objects and construct the 2-SAT formula afterwards.

\begin{definition}\label{def:p-decidedObject}
Let~$\mathcal{I} := ((N, X, \succ, C_n, \sigma_0, \sigma), e)$ be an instance of \fsra{}.
Let~$C(p,e)$ be a candidate list and let $q \in C(p,e)$ be an object such that there exists a direction~$d$ such that there is no selection~$\gamma$ with~$\gamma(p) = d(p,e)$ and $\gamma(q) = d$ that respects~$\mathcal{I}$ and that is valid.
Then~$q$ is \emph{$p$-decided} in direction~$1-d$.
\end{definition}
Observe that the class of decided objects is a sub-class of $p$-decided objects.
Analogously to decided objects, we present rules to find conditionally decided objects and say that objects are conditionally undecided if none of the rules apply for them.
Let~$C(p,e)$ be a candidate list, let~$q \in C(p,e)$ be an object, and let~$d$ be the direction such that~$e$ is on the path of~$p$ in direction~$d$.
Then~$q$ is~$p$-decided in direction~$1-d$ if one of the following rules apply:
\begin{enumerate}
	\item The object~$q$ shields $p$ in direction $d(p,e)$.
	\item The tuple~$S(p,e,q)$ is unsuccessful.
	\item There exists a candidate list~$C(p,e)$ such that $q \in C(p,e)$ and there exists an object $q \neq q' \in C(p,e)$, such that $q'$ is decided in direction $1-d(p,e)$.
\end{enumerate}

We now show the correctness of the three rules for~$p$-decided objects.
The first rule focuses on objects that shields $p$ in direction $d(p,e)$.

\begin{lemma}\label{lem:decidedShield}
Let $(p,e) \in H$ be an object-edge-pair and let $q \in C(p,e)$ be an object in the corresponding candidate list. If $q$ shields $p$ in direction $d(p,e)$, then $q$ is $p$-decided in direction $1-d(p,e)$.
\end{lemma}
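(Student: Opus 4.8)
The plan is to unfold \cref{def:p-decidedObject} and reduce the claim to the basic incompatibility of shielding with harmonicity. Recall that to show $q$ is $p$-decided in direction $1-d(p,e)$ it suffices to exhibit a direction $d$ for which no valid selection $\gamma$ respecting $\mathcal{I}$ satisfies $\gamma(p)=d(p,e)$ and $\gamma(q)=d$; the decided direction is then $1-d$. I would take $d := d(p,e)$, so that the task becomes showing that no valid selection respecting $\mathcal{I}$ assigns both $p$ and $q$ the direction $d(p,e)$.

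First I would observe that whether $q$ shields $p$ in direction $d(p,e)$ depends only on the common direction $d(p,e)$ of the two objects and not on the rest of the selection. Indeed, the three conditions in \cref{def:shield} only reference $P_\gamma(p)$, $P_\gamma(q)$, the relative order of $q$ and the destination of $p$, and the preference lists of agents on $P_\gamma(p)\cap P_\gamma(q)$; since the path $P_\gamma(p)$ is determined by $\gamma(p)$ alone (and likewise for $q$), all of these are fixed once $\gamma(p)=\gamma(q)=d(p,e)$ is set. Hence the hypothesis ``$q$ shields $p$ in direction $d(p,e)$'' applies uniformly to \emph{every} selection $\gamma$ with $\gamma(p)=\gamma(q)=d(p,e)$.

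The core step is then immediate from \cref{def:harmonic}: a harmonic selection forbids, for every object, any same-direction object that shields it. Consequently every selection $\gamma$ with $\gamma(p)=\gamma(q)=d(p,e)$ is inharmonic and therefore not valid. Intuitively, as noted after \cref{def:shield}, the shielding agent $i$ on $P_\gamma(p)\cap P_\gamma(q)$ receives $q$ before $p$ can reach it, and since $q\succ_i p$ it never accepts $p$ afterwards; but $i$ lies on the path of $p$, so $p$ can never reach its destination and $\sigma$ is not reached. In particular there is no valid selection respecting $\mathcal{I}$ with $\gamma(p)=d(p,e)$ and $\gamma(q)=d(p,e)$.

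Putting this together, with $d=d(p,e)$ the requirement of \cref{def:p-decidedObject} is met, so $q$ is $p$-decided in direction $1-d(p,e)$, as claimed. I do not expect a genuine obstacle here; the only point requiring care is the bookkeeping of directions, namely confirming that the direction one must exclude is $d(p,e)$ itself (so that the decided direction comes out as $1-d(p,e)$) and that shielding is a selection-independent property of the shared direction, which is what lets me quantify over all relevant selections simultaneously.
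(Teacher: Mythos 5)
Your proof is correct and follows essentially the same route as the paper's: both reduce the claim to the fact that shielding in direction $d(p,e)$ makes every selection with $\gamma(p)=\gamma(q)=d(p,e)$ inharmonic (hence invalid), and then invoke \cref{def:p-decidedObject} with $d = d(p,e)$. Your additional remarks — that shielding depends only on the two assigned directions, and the explicit bookkeeping of which direction is excluded — are faithful elaborations of the paper's terser two-sentence argument rather than a different approach.
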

\begin{proof}
By definition, there exists no harmonic selection $\gamma$ such that $\gamma(p) = d(p,e)$ and $\gamma(q) = d(p,e)$ if $q$ shields $p$ in direction $d(p,e)$. Thus, $q$ is $p$-decided in direction $1-d(p,e)$.
\end{proof}

The second rule focuses on objects~$q \in C(p,e)$ where~$S(p,e,q)$ is unsuccessful.

\begin{lemma}\label{lem:decidedUnsuccessful}
Let $\Phi$ be a guess, let $(p,e) \in H$ be an object-edge-pair and let $q \in C(p,e)$ be an object such that $S(p,e,q)$ is unsuccessful. Then $q$ is $p$-decided in direction $d(p,e)$.
\end{lemma}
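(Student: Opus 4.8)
The plan is to show that if $S(p,e,q)$ is unsuccessful, then no valid selection can place $q$ in the direction opposite to $p$ at the edge $e$, forcing $q$ to be $p$-decided in direction $d(p,e)$. Recall from \cref{def:simulations} that $S(p,e,q) = (f,c)$ is unsuccessful precisely when $e \neq f$, where $f$ is the unique edge from \cref{lemma:simulation} at which $p$ and $q$ must meet in \emph{every} valid selection $\gamma$ respecting $\Phi$ with $\gamma(q) \neq \gamma(p) = d(p,e)$. The key observation is that the definition of the candidate list guarantees that $e \in E_\gamma(p,q)$, i.e., $e$ is the unique edge on the shared path $P$ where $p$ and $q$ \emph{can be swapped} according to \cref{lem:uniqueSwap}, whereas $f$ is the edge where they actually \emph{meet}.

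First I would suppose towards a contradiction that there is a valid selection $\gamma$ respecting $\Phi$ with $\gamma(q) \neq \gamma(p) = d(p,e)$. Since $q \in C(p,e)$, by \cref{definition:validity:f_gamma} and \cref{lem:uniqueSwap} the edge $e$ is the one edge on the shared path $P \in \xi_\gamma(p,q)$ where $p$ and $q$ are permitted to swap by the incident agents' preference lists. On the other hand, \cref{lemma:simulation} tells us that for this very selection $\gamma$, the objects $p$ and $q$ meet at edge $f$ in the execution of \texttt{Greedy Swap}. Because $S(p,e,q)$ is unsuccessful we have $e \neq f$, so $p$ and $q$ meet at $f$ but cannot be swapped there, as $f$ is not the swap edge guaranteed by \cref{lem:uniqueSwap}. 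Then neither $p$ nor $q$ can proceed past each other along $P$, so at least one of them fails to reach its destination, contradicting the validity of $\gamma$ (by \cref{validDecides}, a valid selection yields $\sigma$).

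Hence there is no valid selection $\gamma$ respecting $\Phi$ with $\gamma(p) = d(p,e)$ and $\gamma(q) = 1 - d(p,e)$. By \cref{def:p-decidedObject}, taking $d := 1 - d(p,e)$, this means $q$ is $p$-decided in direction $1 - d = d(p,e)$, which is exactly the claimed statement. The main obstacle I anticipate is making the meeting-versus-swapping distinction airtight: one must carefully invoke that $e$ is the \emph{only} swap-eligible edge on the relevant shared path (so that meeting at $f \neq e$ genuinely blocks both objects), and that \cref{lemma:simulation} applies to the assumed $\gamma$ because $q$ is undecided and $\gamma$ respects $\Phi$ with the correct orientation $\gamma(q) \neq \gamma(p) = d(p,e)$. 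Once these hypotheses are verified, the contradiction is immediate and the proof is short.
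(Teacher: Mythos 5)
Your proof is correct and follows essentially the same route as the paper's: invoke \cref{lemma:simulation} to get the unique meeting edge~$f$, use \cref{lem:uniqueSwap} to conclude that~$p$ and~$q$ cannot be swapped at~$f \neq e$, derive that one of them cannot reach its destination (contradicting validity), and read off the conclusion from \cref{def:p-decidedObject}. The direction bookkeeping ($d := 1-d(p,e)$, hence $p$-decided in direction $d(p,e)$) also matches the paper.
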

\begin{proof}
Let $P$ be the shared path of $p$ and $q$ that contains $e$.
Since $S(p,e,q)$ is unsuccessful, for every valid selection $\gamma$ that respects $\Phi$ and where $\gamma(p) = d(p,e) = 1- \gamma(q)$, $p$ and $q$ meet at an edge $f \neq e$ that is on $P$, according to \cref{lemma:simulation}. However, \cref{lem:uniqueSwap} tells us that since $p$ and $q$ can be swapped over $e$ (as $q$ is in the candidate list $C(p,e)$), $p$ and $q$ cannot be swapped over $f$. Since $f$, however, is on the shared path of $p$ and $q$, at least one of $p$ and $q$ cannot reach its destination. Hence, $\gamma$ is not valid, a contradiction. From this follows that there exists no such valid $\gamma$ and thus, $q$ is $p$-decided in direction $d(p,e)$.
\end{proof}

Lastly, the third rule focuses on the effect that $p$-decided objects have on other objects in the same canidate list.

\begin{lemma}\label{lemma:pdecided_larger_one}
Let $\Phi$ be a guess and let $(p,e) \in H$ be an object-edge-pair. If $C(p,e)$ contains at least one object $q$ that is $p$-decided in direction $1-d(p,e)$, then every object $q' \in C(p,e) \setminus \{q\}$ is $p$-decided in direction $d(p,e)$ or $\Phi$ is wrong.
\end{lemma}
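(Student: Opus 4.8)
The plan is to mirror the proof of \cref{lemma:candidateList:decidedByDecided}, now working with $p$-decided objects in place of decided ones. The key point is that $p$-decidedness only constrains valid selections~$\gamma$ that respect~$\Phi$ and satisfy~$\gamma(p) = d(p,e)$, and this is exactly the regime in which the first case of the definition of~$f_\gamma(p,e)$ (\cref{definition:validity:f_gamma}) is active. So the whole argument reduces to a single counting step against the exactness requirement of validity.

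First I would argue by contradiction and fix an object~$q' \in C(p,e) \setminus \{q\}$ that is \emph{not}~$p$-decided in direction~$d(p,e)$. By \cref{def:p-decidedObject} this means there is a valid selection~$\gamma$ respecting~$\Phi$ with~$\gamma(p) = d(p,e)$ and~$\gamma(q') = 1 - d(p,e)$. (If no valid selection respecting~$\Phi$ with~$\gamma(p) = d(p,e)$ exists in the first place, then every object is $p$-decided vacuously and~$\Phi$ is wrong, which is the alternative permitted by the statement.) Next I would invoke the hypothesis on~$q$: since~$q$ is~$p$-decided in direction~$1 - d(p,e)$, every valid selection respecting~$\Phi$ with~$\gamma(p) = d(p,e)$ must assign~$\gamma(q) = 1 - d(p,e)$; in particular the selection~$\gamma$ fixed above satisfies~$\gamma(q) = 1 - d(p,e)$.

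Finally I would derive the contradiction from exactness. Both~$q$ and~$q'$ lie in~$C(p,e)$ and both receive direction~$1 - d(p,e)$, which differs from~$\gamma(p) = d(p,e)$. Because~$d(p,e) = \gamma(p)$, the first case of~$f_\gamma(p,e)$ applies, and~$q$ and~$q'$ are two distinct objects of~$C(p,e)$ with direction opposite to~$\gamma(p)$; hence~$f_\gamma(p,e) \geq 2$. This violates the exactness of~$\gamma$ (\cref{def:exact,def:harmonic}), so~$\gamma$ cannot be valid, contradicting the choice of~$\gamma$. Consequently no such~$q'$ and~$\gamma$ exist, and every~$q' \in C(p,e) \setminus \{q\}$ is~$p$-decided in direction~$d(p,e)$.

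I do not expect a genuine obstacle here; the argument is a pigeonhole step built directly on the definition of~$f_\gamma$ and the exactness clause of validity, and it is structurally identical to \cref{lemma:candidateList:decidedByDecided}. The only part requiring care is the bookkeeping of directions: one must check that $p$-decidedness of~$q$ in direction~$1 - d(p,e)$ really forces~$\gamma(q) = 1 - d(p,e)$ precisely in the regime~$\gamma(p) = d(p,e)$, so that both~$q$ and~$q'$ are counted by~$f_\gamma(p,e)$ and drive its value above one.
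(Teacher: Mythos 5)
Your proposal is correct and takes essentially the same route as the paper's proof: both arguments note that $p$-decidedness of $q$ in direction $1-d(p,e)$ forces $\gamma(q) = 1-d(p,e)$ in any valid selection respecting $\Phi$ with $\gamma(p) = d(p,e)$, so a valid selection additionally assigning $\gamma(q') = 1-d(p,e)$ would give $f_\gamma(p,e) > 1$ and violate exactness. Your parenthetical claim that the degenerate case (no valid selection with $\gamma(p) = d(p,e)$ at all) makes $\Phi$ wrong is the same mild over-claim the paper makes in its own final sentence, and it is harmless in both treatments since vacuous $p$-decidedness already yields the lemma's conclusion.
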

\begin{proof}
Suppose towards a contradiction that $C(p,e)$ contains at least one object $q$ that is $p$-decided in direction $1-d(p,e)$ but there exists at least one object $q' \neq q$ such that $q'$ is not $p$-decided in direction $d(p,e)$.

However, since $q$ is $p$-decided in direction $1-d(p,e)$ there exists no valid selection $\gamma$ that respects $\Phi$ such that $\gamma(p) = d(p,e) = \gamma(q)$. Suppose towards a contradiction that there exists a valid selection $\gamma$ that respects $\Phi$ such that $\gamma(q') =\gamma(q) = 1-d(p,e) =1- \gamma(p)$. But then, $f_\gamma(p,e) > 1$ and thus, $\gamma$ is not exact, a contradiction.
From this follows that there exists no valid selection $\gamma$ that respects $\Phi$ where $\gamma(p) = d(p,e) = 1-\gamma(q')$ and thus, $q'$ is $p$ decided in direction $d(p,e)$. If $q'$, however, is already known to be $p$-decided in direction $1-d(p,e)$ then there exists no valid selection that respects $\Phi$ and thus, $\Phi$ is wrong.
\end{proof}

We finally construct a 2-SAT formula for exact selections using decided, conditionally decided and undecided objects.

\begin{construction}
\label{const:exact}
Let~$\Phi := (x,y)$ be the guess of an instance~$\mathcal{I} := ((N, X, \succ, C_n, \sigma_0, \sigma), e')$ of \fsra{} and let~$C(p,e) \in H$ be a candidate list.

If the number of $p$-decided objects in direction $1-d(p,e)$ is not one and there exists no $p$-undecided object in $C(p,e)$, then we add the clause
\begin{equation}\label{eq:exact:pneqd}
(p \neq d(p,e)).\footnotemark
\end{equation}
\footnotetext{We use the notation~$q \neq d(p,e)$ to avoid case distinctions. Since~$d(p,e)$ is precomputed, this clause is equivalent to~$\neg q$ if~$d(p,e)=1$ and~$q$ otherwise.}

Otherwise, for every decided object $q$ in direction $d$ we add the clause 
\begin{equation}\label{eq:exact:qeqd}
q = d.
\end{equation}
Further, for every $p$-decided object $q$ in direction $d$ that is not also a decided object, we add the clause
\begin{equation}\label{eq:exact:peqd_implies_qeqd}
(p = d(p,e)) \rightarrow (q = d).
\end{equation}

Lastly, if $p \in \{x,y\}$ and there exists an undecided object $q \in C(p,e)$, then we distinguish between two cases. If there exists exactly one other object $q' \in C(p,e)$ that is undecided, then we add the clause
\begin{equation}\label{eq:exact:qneq_prime}
q \neq q'.
\end{equation}
Otherwise we add the the clause
\begin{equation}\label{eq:exact:bot}
\bot.
\end{equation}

We use~$\psi_e$ to denote the constructed formula and~$\phi = \psi_h \land \psi_e$. 
\end{construction}

Before we can show that \cref{const:exact} is correct, we show two final lemmas.
The first lemma states that if two objects appear in some candidate list and their respective tuples are both successful, then they also appear in the same candidate list of a guessed object.
\begin{lemma}\label{lemma:simulation:sameCandidateList}
Let~$p, q_0$ and~$q_1$ be three objects and let~$e$ be an edge. If~$S(p,e,q_0) = S(p,e,q_1)$, then~$q_0$ and~$q_1$ are in the same candidate list of guessed object~$\Phi_{d(p,e)}$.
\end{lemma}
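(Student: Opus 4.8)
The plan is to reduce the whole statement to the edge-counting mechanism already set up in \cref{lemma:simulation}. Assume without loss of generality that $d(p,e)=1$, so that the guessed object of interest is $g := \Phi_{d(p,e)} = x$, which moves clockwise, while $q_0$ and $q_1$ — being undecided, as is required for $S(p,e,\cdot)$ to be defined — move counter-clockwise. First I would record that each $q_i$ in fact lies in a candidate list of $x$: since $q_i$ is undecided, the contrapositive of \cref{lemma:decidedObjects} forces $q_i$ into a candidate list of every guessed object, in particular into some $C(x,e'_i)$. Moreover $q_i$ is not opposite to $x$ (otherwise \cref{lemma:decidedOpposite} would decide it), so $E_\gamma(x,q_i)$ is a single edge and $e'_i$ is uniquely determined. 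The goal then becomes to show $e'_0 = e'_1$.

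The heart of the argument is to express both $e'_i$ and the meeting edge $f$ of $q_i$ with $p$ via one common counting recipe. By \cref{lemma:simulation}, the edge at which $q_i$ meets $p$ in the run of \texttt{Greedy Swap} — which is precisely the first component $f$ of $S(p,e,q_i)$ — is obtained by counting, from the initial position of $q_i$ and proceeding counter-clockwise, a number of edges equal to the number $\mu_p$ of clockwise-moving objects lying on $\cycseq[p][q_i]$. Applying the identical recipe to the pair $(x,q_i)$ — legitimate because $q_i \in C(x,e'_i)$, so by \cref{lem:uniqueSwap} the unique swap edge of $x$ and $q_i$ is exactly where they meet — shows that $e'_i$ is reached by counting $\mu_x$ edges counter-clockwise from $q_i$, where $\mu_x$ is the number of clockwise-moving objects lying on $\cycseq[x][q_i]$.

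Next I would compare $\mu_p$ and $\mu_x$. The two arcs $\cycseq[p][q_i]$ and $\cycseq[x][q_i]$ share the endpoint $q_i$ and differ precisely on the arc between $p$ and $x$, so $\mu_p - \mu_x$ equals, up to the treatment of the endpoints $p$ and $x$, the number of clockwise-moving objects strictly between $p$ and $x$. By the constancy of $\lambda(p,x)$ established in \cref{lemma:simulation} (which rests in turn on \cref{lemma:type}), this count is the same for all valid selections respecting $\Phi$ and is in particular independent of which $q_i$ we consider. Whether it is added to or subtracted from $\mu_p$ — that is, whether $q_i$ starts in the clockwise arc from $p$ to $x$ or beyond $x$ — is exactly the datum recorded by the second component $c$ of $S(p,e,q_i)$. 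Hence $e'_i$ is a fixed counter-clockwise offset from $f$ when $c=1$ and a fixed clockwise offset from $f$ when $c=0$; in either case $e'_i$ is a function of the pair $(f,c)=S(p,e,q_i)$ alone. Since $S(p,e,q_0)=S(p,e,q_1)$, we conclude $e'_0 = e'_1$, and therefore $q_0$ and $q_1$ both belong to the single candidate list $C(x,e'_0)$ of $g=\Phi_{d(p,e)}$, as required.

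I expect the main obstacle to be the bookkeeping in the comparison of $\mu_p$ and $\mu_x$: one must pin down the off-by-one contributions of the endpoints $p$ and $x$ separately in the two cases for $c$, and verify that the resulting offset between $f$ and $e'_i$ is genuinely a constant over all admissible $q_i$ — that is, that it depends only on the fixed clockwise-object count between $p$ and $x$ guaranteed by \cref{lemma:simulation}, and not on the particular starting position of $q_i$.
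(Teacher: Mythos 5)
Your proof is correct and follows essentially the same route as the paper's: both arguments fix the common meeting edge $f$ with $p$ and use the constant counting mechanism of \cref{lemma:simulation} (the invariant number of clockwise-moving objects between $p$ and $\Phi_{d(p,e)}$) to show that the edge where each $q_i$ meets the guessed object is a fixed offset from $f$ whose orientation is exactly the datum $c$, forcing $e'_0 = e'_1$. Your preliminary step verifying via \cref{lemma:decidedObjects,lemma:decidedOpposite} that each undecided $q_i$ lies in a uniquely determined candidate list of $\Phi_{d(p,e)}$ is a small addition the paper leaves implicit, but the core argument is the same.
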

\begin{proof}
Let~$(f,c) = S(p,e,q_0) = S(p,e,q_1)$.
Let~$e_0, e_1$ be the edges such that for every valid selection~$\gamma$ that respects~$\Phi$ and with~$\gamma(p) = d(p,e) = 1-\gamma(q_0) = 1 - \gamma(q_1)$ it holds that~$p$ and~$q_0$ meet at edge~$e_0$ and~$p$ and~$q_1$ meet at edge~$e_1$ according to \cref{lemma:simulation}.
By definition of~$f$, it holds for every valid selection~$\gamma$ that respects~$\Phi$ and with~$\gamma(p) = d(p,e)$ that if~$\gamma(q_0) \neq \gamma(p)$, then~$p$ and~$q_0$ meet at edge~$f$ and if~$\gamma(q_q) \neq \gamma(p)$, then~$p$ and~$q_1$ meet at edge~$f$ according to \cref{lemma:simulation}.
By definition of~$c$, either~$q_0$ and~$q_1$ are both closer to~$p$ in direction~$d(p,e)$ than the guessed object~$\Phi_{d(p,e)}$ or both further from~$p$ in direction~$d(p,e)$ than~$\Phi_{d(p,e)}$.

We distinguish between two cases.
In the first case~$c = 1$.
Then, according to \cref{lemma:simulation}, there exists a number~$\lambda$ of objects such that~$q_0$ and~$q_1$ are both swapped with exactly~$\lambda$ objects after being swapped with~$p$ and before meeting guessed object~$\Phi_{d(p,e)}$, at an edge~$e^*_0$ and~$e^*_1$ respectively.

In the second case~$c = 0$. Then, according to \cref{lemma:simulation}, there exists a number~$\lambda$ of objects such that~$q_0$ and~$q_1$ are both swapped with exactly~$\lambda$ objects after being swapped with guessed object~$\Phi_{d(p,e)}$, at an edge~$e^*_0$ and~$e^*_1$ respectively, and before meeting~$p$.

Since~$\lambda$ is the same constant for both~$q_0$ and~$q_1$ in both cases, it holds that~$e^*_0 = e^*_1$.
In \cref{lemma:simulation},~$e^*_0$ and~$e^*_1$ are chosen to be the edges where~$\Phi_{d(p,e)}$ can be swapped with~$q_0$ and~$q_1$ respectively. Thus,~$q_0$ and~$q_1$ are in the same candidate list of~$\Phi_{d(p,e)}$.
\end{proof}

The next lemma shows the relevance of the second entry~$c$ in~$(f,c) = S(p,e,q)$.
Using this lemma, we can always eliminate all candidates with~$c=1$ or all candidates with~$c=0$.

\begin{lemma}\label{lemma:remaining:partition}
Let $\Phi$ be a guess, let~$(p,e) \in H$ be an object-edge-pair and let~$Q$ be the set of undecided objects~$q \in C(p,e)$ for which~$(f_q,c_q) = S(p,e,q)$ is successful.
Then~$Q$ can be partitioned into two sets~$Q_0$ and~$Q_1$ where~$q$ belongs to~$Q_{c_q}$ and the following statement holds.
There exists an~$i \in \{0,1\}$ such that if~$p$ is not decided in direction~$1-d(p,e)$, then for every object~$q \in Q_i$~$q$ is decided in direction~$d(p,e)$.
\end{lemma}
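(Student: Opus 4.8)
The plan is to reduce to the case $d(p,e) = 1$, so that $\Phi_{d(p,e)} = x$; the case $d(p,e)=0$ is symmetric after exchanging the two directions and the two guessed objects. Since $p$ is not decided in direction $1-d(p,e)=0$, there is at least one valid selection respecting $\Phi$ with $\gamma(p)=1$, and throughout I only consider such selections. The partition is the obvious one suggested by the statement: $Q_1 := \{q \in Q : c_q = 1\}$ collects the successful undecided objects that start closer to $p$ in direction $d(p,e)$ than $x$, and $Q_0 := \{q \in Q : c_q = 0\}$ collects the remaining ones.

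First I would record two structural facts. Because every $q \in Q$ is successful, its tuple is $S(p,e,q) = (e,c_q)$, so all objects of $Q_1$ share the tuple $(e,1)$ and all objects of $Q_0$ share $(e,0)$. By \cref{lemma:simulation:sameCandidateList} this places all of $Q_1$ in a single candidate list $C(x,e_1^{\star})$ of the guessed object $x$ and all of $Q_0$ in a single candidate list $C(x,e_0^{\star})$; by \cref{proposition:candidateLists} each such list contains at most two undecided objects, whence $|Q_0| \le 2$ and $|Q_1| \le 2$. Second, I would prove a swap-uniqueness statement: in every valid selection respecting $\Phi$ with $\gamma(p)=1$, at most one object of $Q$ is assigned direction $0$. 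Indeed, if two such objects both moved counter-clockwise then, being successful, each would---by \cref{lemma:simulation}---meet $p$ precisely at $e$ in the run of \texttt{Greedy Swap}; but by \cref{lem:uniqueSwap} the edge $e$ is the only edge of their shared path with $p$ at which a swap can occur, so both would have to be swapped with $p$ over $e$. After $p$ leaves $e$ following its first such swap it can no longer return, so the second object cannot meet $p$ at $e$, contradicting its successfulness.

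The main obstacle is the final step: pinning down a \emph{single} index $i$ that works for \emph{all} valid selections, i.e.\ showing that the counter-clockwise partner of $p$ at $e$ always comes from the same group rather than from $Q_1$ in one selection and from $Q_0$ in another. Here I would exploit that the guessed object $x$ moves in the same direction as $p$ and can therefore never be swapped with $p$, so $x$ acts as a fixed barrier separating the objects that reach $p$ from the near side (strictly between $p$ and $x$, giving $c=1$) from those that must first cross $x$ (the far side, giving $c=0$). Crucially, both the edge $e$ at which $p$ meets its partner and the edge at which that partner is forced to meet $x$ are the same across all valid selections by \cref{lemma:simulation}, and this rigidity is what lets me argue that the role of ``partner of $p$ at $e$'' is playable only by the objects of one group, say $Q_{1-i}$, while every object of the other group $Q_i$ is shielded from $e$ by $x$ (with one group possibly empty in degenerate configurations). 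For such a $q \in Q_i$ there is then no valid selection respecting $\Phi$ with $\gamma(p)=1$ and $\gamma(q)=0$, which is exactly the assertion that $q$ is decided in direction $d(p,e)$ in the sense of \cref{def:p-decidedObject}; this yields the claimed index $i$ and completes the proof. I expect the delicate point to be the rigorous case analysis verifying that the barrier $x$ really confines the partner to one side \emph{uniformly} over all valid selections, and not merely within each individual selection, since the latter already follows from swap-uniqueness but is too weak for the stated conclusion.
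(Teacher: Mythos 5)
There is a genuine gap, and you identify it yourself: the entire burden of the lemma is to exhibit a \emph{single} index~$i$ that works \emph{uniformly over all} valid selections respecting~$\Phi$, and your proposal leaves exactly this step as an expectation rather than a proof. Moreover, the mechanism you sketch for it cannot work as stated. You want the same-direction guessed object~$x = \Phi_{d(p,e)}$ to act as the obstruction that rules out~$\gamma(q) = 1-d(p,e)$ for all~$q$ in one group, saying these objects are ``shielded from~$e$ by~$x$''. But shielding (\cref{def:shield}) is only defined between objects moving in the \emph{same} direction; in any selection respecting~$\Phi$ we have~$\gamma(x) = d(p,e)$, so~$x$ can never shield an object in the configuration~$\gamma(q) = 1-d(p,e)$ that you need to exclude, and your~$Q_i$ objects do possess a swap edge with~$x$ (they lie in a candidate list of~$x$ by \cref{lemma:simulation:sameCandidateList}), so incompatibility with~$x$ does not obviously fire either. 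The paper instead takes the \emph{opposite} guessed object~$\Phi_{1-d(p,e)}$ as the reference: either~$p$ appears in no candidate list of~$\Phi_{1-d(p,e)}$, in which case~$p$ is decided (\cref{lemma:decidedObjects}), the hypothesis forces it to be decided in direction~$d(p,e)$, and \cref{lemma:between_decided} settles~$Q_0$ (so~$i=0$); or, by \cref{lem:uniqueSwap}, there is a unique edge~$e^*$ on the shared path containing~$e$ where~$p$ and~$\Phi_{1-d(p,e)}$ must be swapped, and~$i$ is read off by comparing~$e^*$ with~$e$: if~$e^*$ is closer to~$p$ than~$e$, then by \cref{order_lemma} every~$q_1 \in Q_1$ assigned direction~$1-d(p,e)$ would shield~$\Phi_{1-d(p,e)}$; if~$e^*$ is farther, then~$\Phi_{1-d(p,e)}$ shields every~$q_0 \in Q_0$ assigned direction~$1-d(p,e)$. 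This comparison of~$e$ against~$e^*$ is precisely the selection-independent datum that determines~$i$, which your sketch never identifies.

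A second, independent problem is that even if your barrier argument were completed, it would establish the wrong notion. You restrict attention from the outset to selections with~$\gamma(p) = d(p,e)$ and conclude that no valid such selection has~$\gamma(q) = 1-d(p,e)$; that is the conditional notion of being \emph{$p$-decided} (\cref{def:p-decidedObject}), and indeed you cite that definition. The statement, however, asserts that~$q$ is \emph{decided} (\cref{def:decidedObject}): no valid selection respecting~$\Phi$ assigns~$q$ direction~$1-d(p,e)$ at all, including selections with~$\gamma(p) = 1-d(p,e)$, which your setup excludes from consideration. The paper obtains this stronger, unconditional conclusion precisely because its shielding relations involve a guessed object whose direction is pinned by~$\Phi$ independently of~$\gamma(p)$. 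Your preliminary observations (each successful tuple equals~$(e,c_q)$, each~$Q_{c}$ lies in a single candidate list of~$\Phi_{d(p,e)}$, and per-selection uniqueness of the counter-moving partner at~$e$) are correct but, as you concede, strictly weaker than what is claimed.
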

\begin{proof}
Let~$\xi_\gamma(p,\Phi_{1-d(p,e)})$ be the set of shared path for every selection $\gamma$ where $\gamma(p) = d(p,e) = 1 - \gamma(\Phi_{1-d(p,e)})$.
We distinguish between two cases.
In the first case $\xi_\gamma(p,\Phi_{1-d(p,e)})$ is empty and hence~$p$ does not occur in a candidate list of~$\Phi_{1-d(p,e)}$.
According to \cref{lemma:decidedObjects},~$p$ is decided.
If~$p$ is not decided in direction~$1-d(p,e)$, then according to \cref{lemma:between_decided}, every object that starts between $\Phi_{d(p,e)}$ and~$p$ in direction $d(p,e)$ is decided.
Since $p$ cannot be swapped with any of these objects before being swapped with~$\Phi_{1-d(p,e)}$ (with which it is never swapped), none of these objects can be swapped with~$p$ and if they are contained in~$C(p,e)$, then they are decided in direction~$d(p,e)$.
Thus, every object $q \in Q_0$ is decided in direction~$d(p,e)$.

In the second case $\xi_\gamma(p,\Phi_{1-d(p,e)})$ is non-empty but~$p$ does not appear in a candidate list of $\Phi_{1-d(p,e)}$.
Then again~$p$ is decided and we can repeat the argument above.
In the last case~$\xi_\gamma(p,\Phi_{1-d(p,e)})$ is non-empty and~$p$ appears in a candidate list of $\Phi_{1-d(p,e)}$.
Let~$e^*$ be the edge where~$p$ and~$\Phi_{1-d(p,e)}$ can be swapped and which is on the same shared path of~$p$ and~$\Phi_{1-d(p,e)}$ as~$e$.
We distinguish between two cases.
In the first case,~$e^*$ is closer to~$p$ in direction~$d(p,e)$ than~$e$ and since every object~$q_1 \in Q_1$ is closer to~$p$ in direction~$d(p,e)$ than~$\Phi_{1-d(p,e)}$, according to \cref{order_lemma},~$\Phi_{1-d(p,e)}$ is shielded by~$q_1$ in direction~$1-d(p,e)$.
Thus, every~$q_1 \in Q_1$ is decided in direction~$d(p,e)$.
In the second case,~$e^*$ is further away from~$p$ in direction~$d(p,e)$ than~$e$.
Since every object~$q_0 \in Q_0$ is further away from~$p$ in direction~$d(p,e)$ than~$\Phi_{1-d(p,e)}$, according to \cref{order_lemma},~$\Phi_{1-d(p,e)}$ shields~$q_0$ in direction~$1-d(p,e)$.
Thus, every~$q_0 \in Q_0$ is decided in direction~$d(p,e)$.
\end{proof}

Based on \cref{lemma:simulation:sameCandidateList,lemma:remaining:partition}, we can finally prove that a selection is valid if and only if it corresponds to a satisfying truth assignment to~$\phi$. 

\begin{proposition}
\label{proposition:final}
Let~$\Phi := (x,y)$ be the guess of an instance~$\mathcal{I} := ((N, X, \succ, C_n, \sigma_0, \sigma), e')$ of \fsra{}.
Let~$\gamma$ be a selection that respects~$\Phi$.
Then~$\gamma$ is valid if and only if it corresponds to a satisfying truth assignment of~$\phi$.
\end{proposition}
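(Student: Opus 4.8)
The goal is to show that for a fixed guess $\Phi := (x,y)$ and a selection $\gamma$ that respects $\Phi$, validity is equivalent to $\gamma$ satisfying $\phi = \psi_h \land \psi_e$. Since validity means \emph{exact and harmonic} (\cref{def:harmonic}), and since \cref{harmonicSel} already establishes that $\gamma$ is harmonic if and only if it satisfies $\psi_h$, the entire task reduces to proving that, \emph{among selections that respect $\Phi$ and are harmonic}, exactness is equivalent to satisfying $\psi_e$. The plan is therefore to fix a harmonic selection $\gamma$ respecting $\Phi$ and argue clause-by-clause over \cref{const:exact} that $\gamma$ is exact (i.e. $f_\gamma(p,e) = 1$ for every object-edge-pair $(p,e)$ with $d(p,e) = \gamma(p)$, the only case in which $f_\gamma$ is not trivially $1$) if and only if $\gamma$ satisfies every clause of $\psi_e$.

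\textbf{Key steps.} First I would reduce to candidate lists of the guessed objects: by \cref{proposition:candidateLists}, after exhaustively applying the five decided-object rules, every candidate list $C(p,e)$ with $p \in \{x,y\}$ contains either zero or two undecided objects (or $\Phi$ is wrong, a case in which I would argue that $\psi_e$ is unsatisfiable, e.g.\ via the $\bot$-clause of \cref{eq:exact:bot} or an inconsistent pair of forced assignments). I would then verify that each type of clause in \cref{const:exact} faithfully encodes the requirement $f_\gamma(p,e) = 1$. The counting identity $f_\gamma(p,e) = |\{q \in C(p,e) : \gamma(q) \neq \gamma(p)\}|$ (for $d(p,e) = \gamma(p)$) means exactness demands exactly one element of $C(p,e)$ take the opposite direction to $p$. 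For candidate lists with no undecided objects, the $p$-decided objects (via \cref{lem:decidedShield,lem:decidedUnsuccessful,lemma:pdecided_larger_one}) determine all directions: clause \cref{eq:exact:pneqd} correctly forbids $\gamma(p) = d(p,e)$ precisely when the number of objects forced opposite is not one, while clauses \cref{eq:exact:qeqd,eq:exact:peqd_implies_qeqd} pin down the forced and conditionally-forced directions. For candidate lists with exactly two undecided objects $q, q'$, clause \cref{eq:exact:qneq_prime} ($q \neq q'$) forces exactly one of them opposite to $p$, which together with the determined $p$-decided members yields $f_\gamma(p,e) = 1$.

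\textbf{Assembling the equivalence.} For the forward direction, assuming $\gamma$ is valid I would use exactness to check that every clause is satisfied, appealing to the soundness of each decided/$p$-decided rule (an object decided in a direction cannot take the other in any valid selection) to guarantee the forced-assignment clauses hold, and using $f_\gamma(p,e) = 1$ to confirm the $q \neq q'$ clauses. For the converse, assuming $\gamma$ satisfies $\phi$, I would invoke \cref{harmonicSel} for harmonicity and then argue exactness: the clauses force each candidate list to contribute exactly one opposite-direction object. Here I would lean on \cref{lemma:remaining:partition} to argue that the partition of successful undecided candidates into $Q_0, Q_1$ means that among the two undecided objects, the $q \neq q'$ clause produces exactly one opposite-direction choice that is actually realizable, and on \cref{lemma:simulation:sameCandidateList} to ensure two objects with identical successful tuples land in the same guessed candidate list so no double-counting corrupts the count.

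\textbf{Main obstacle.} The hardest part will be the case analysis ensuring the count is \emph{exactly} one rather than merely at least or at most one, and in particular handling the interaction between $p$-decided objects and the two undecided objects within a single candidate list. The subtlety is that $p$-decidedness is \emph{conditional} on $\gamma(p) = d(p,e)$, so I must carefully track that when $\gamma(p) = d(p,e)$ the $p$-decided-in-direction-$d(p,e)$ objects do \emph{not} count toward $f_\gamma(p,e)$ while a single opposite object (either a $p$-decided-opposite object or exactly one of the two undecided objects) does, and I must rule out the degenerate possibilities (zero or two opposite contributors) precisely against the clause structure. I expect this to require invoking \cref{lemma:remaining:partition} to certify that one of $Q_0, Q_1$ is entirely decided in direction $d(p,e)$, so that the genuine binary choice reduces cleanly to the single $q \neq q'$ clause; reconciling this with the possibility that $\Phi$ is wrong (where the construction emits $\bot$ or contradictory unit clauses) is the delicate bookkeeping that the proof must get right.
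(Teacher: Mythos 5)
Your proposal takes essentially the same route as the paper's proof: both factor out harmonicity via \cref{harmonicSel}, reduce the task to showing that exactness corresponds to satisfying~$\psi_e$, and then argue clause-by-clause through \cref{const:exact}, invoking \cref{proposition:candidateLists} for the zero-or-two undecided structure, \cref{lemma:remaining:partition} and \cref{lemma:simulation:sameCandidateList} for the undecided-pair cases, and the $\bot$-clause to dispose of wrong guesses. Your identified ``main obstacle''---tracking the conditional nature of $p$-decidedness and pinning the count to exactly one---is precisely the case analysis the paper carries out, so the plan is sound and faithful to the original argument.
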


\begin{proof}
We show that a selection that respects~$\Phi$ is valid if and only if it corresponds to a satisfying truth assignment of~$\phi$.
We first show that if a selection is valid, then it corresponds to a satisfying truth assignment of~$\phi$ and afterwards we show that if a selection corresponds to a satisfying truth assignment of~$\phi$, then it is valid.

For the first step, assume towards a contradiction that $\gamma$ is valid and respects $\Phi$ but it does not correspond to a satisfying truth assignment of $\phi$.
Then there is a clause~$c$ in~$\phi$ that is not satisfied by the truth assignment corresponding to~$\gamma$.
Notice that~$c$ cannot be a clause in~$\psi_h$ as~$\gamma$ is valid and \cref{harmonicSel} states that a selection is harmonic if and only if it corresponds to a satisfying truth assignment of~$\psi_h$.
Thus~$c$ is a clause in~$\psi_e$.
We distinguish between the six different kinds of clauses in~$\psi_e$.

In the first case $c$ is a clause as in \cref{eq:exact:pneqd}.
Then there exists an object-edge-pair $(p,e) \in H$ such that $C(p,e)$ does not contain
exactly one $p$-decided object in direction $1-d(p,e)$, nor a $p$-undecided object. However, since $c$ is not fulfilled by $\gamma$, it holds that $\gamma(p) = d(p,e)$.
We distinguish between two cases. In the first case there exists no object $q \in C(p,e)$ for which there exists a valid selection $\gamma'$ where $\gamma'(p) = d(p,e) = 1- \gamma'(q)$ and thus, $\gamma$ cannot be valid, as then $f_\gamma(p,e) = 0$, a contradiction.
In the second case there exist at least two objects $q_0,q_1 \in C(p,e)$ such that for every valid selection $\gamma'$ where $\gamma'(p) = d(p,e)$ it holds that $\gamma(q_0) = \gamma(q_1) = 1-d(p,e)$ and thus, $\gamma$ cannot be valid, as then $f_\gamma(p,e) > 1$, a contradiction.

In the second case $c$ is a clause as in \cref{eq:exact:qeqd}.
Then there exists an object $q \in C(p,e)$ that is decided in a direction $d$ but since $c$ is not fulfilled in $\gamma$, $\gamma(q) = 1-d$. Since according to \cref{def:decidedObject}, there exists no selection $\gamma'$ that respects $\Phi$ and where $\gamma'(q) = 1-d$, $\gamma$ cannot be valid, a contradiction. 

In the third case $c$ is a clause as in \cref{eq:exact:peqd_implies_qeqd}. Then there exists an object $q \in C(p,e)$ that is $p$-decided in a direction $d$ but since $c$ is not fulfilled in $\gamma$, $\gamma(q) = 1-d$ and $\gamma(p) = d(p,e)$.
Since according to \cref{def:p-decidedObject}, there exists no selection $\gamma'$ that respects $\Phi$ and where $\gamma'(q) = 1-d$ and $\gamma(p) = d(p,e)$, $\gamma$ cannot be valid, a contradiction. 

In the fourth case $c$ is a clause as in \cref{eq:exact:qneq_prime}. Then there exist objects $q, q' \in C(p,e)$ that are undecided
and there exists no other undecided object $q'' \in C(p,e)$ such that $q \neq q'' \neq q'$ and since $c$ is not fulfilled in $\gamma$, $\gamma(q) = \gamma(q')$ and thus, either $f_\gamma(p,e) = 0$ or $f_\gamma(p,e) = 2$, a contradiction.

In the fifth case $c$ is a clause as in \cref{eq:exact:bot}. Then, however, there exists an object-edge-pair $(p,e) \in H$ such that $p \in \Phi$, an undecided object $q \in C(p,e)$ and not exactly one other undecided object in $C(p,e)$. But then, according to \cref{proposition:candidateLists}, $\Phi$ is wrong. From this follows that there exists no valid selection that respects $\Phi$ and thus, $\gamma$ cannot be valid and respect $\Phi$, a contradiction.

We now show that if a selection corresponds to a satisfying truth assignment of~$\phi$, then it is valid.
Suppose towards a contradiction that $\gamma$ corresponds to a satisfying truth assignment of $\phi$ but $\gamma$ is not valid.
By \cref{harmonicSel}, $\gamma$ has to be harmonic as it corresponds to a satisfying truth assignment of~$\psi_h$.
Thus,~$\gamma$ is not exact, that is, there exists a $(p,e) \in H$ such that $f_\gamma(p,e) \neq 1$.
We distinguish between four cases: (i)~$C(p,e)$ contains only~$p$-decided objects in direction~$d(p,e)$, (ii)~$C(p,e)$ contains only~$p$-decided objects and at least one~$p$-decided object in direction~$1-d(p,e)$, (iii)~$C(p,e)$ contains two~$p$-undecided objects~$q_0,q_1$ where one of them is closer to~$p$ than~$\Phi_{d(p,e)}$ in direction~$d(p,e)$ and the other is further away, and (iv)~$C(p,e)$ contains only~$p$-undecided objects that are all closer to~$p$ than~$\Phi_{d(p,e)}$ or are all further away.

In the first case $\phi$ contains the clause from \cref{eq:exact:pneqd} and since $\gamma$ corresponds to a satisfying truth assignment of $\phi$, $\gamma(p) = 1-d(p,e)$. From this follows that $f_\gamma(p,e) = 1$, a contradiction.

In the second case, according to \cref{lemma:pdecided_larger_one} there exists exactly one object $q \in C(p,e)$ that is $p$-decided in direction $1-d(p,e)$ and each object $q' \in C(p,e) \setminus \{q\}$ is $p$-decided in direction $d(p,e)$. We will now assume that $\gamma(p) = d(p,e)$ as otherwise $f_\gamma(p,e) = 1$, a contradiction.
For every $q \in C(p,e)$, which is $p$-decided in direction $d$, $\phi$ contains the clause $q = d$ as in either \cref{eq:exact:qeqd} or \cref{eq:exact:peqd_implies_qeqd}.
Since there is exactly one object $q$ that is $p$-decided in direction $1-d(p,e)$ and since we assumed that $\gamma(p) = d(p,e)$, according to its definition, $f_\gamma(p,e) = 1$, a contradiction.

In the third and fourth case~$C(p,e)$ contains at least one $p$-undecided object.
Note that by \cref{lemma:pdecided_larger_one}, if~$C(p,e)$ contains~$p$-undecided objects, then it does not contain~$p$-decided objects in direction $1-d(p,e)$ and if it contains only one object, then this object is~$p$-decided.
Observe further that by \cref{lem:decidedUnsuccessful} for each~$p$-undecided object~$q\in C(p,e)$ it holds that~$S(p,e,q)$ is successful.

In the third case, let~$q_0 \neq q_1 \in C(p,e)$ be two~$p$-undecided objects such that~$q_0$ be closer to~$p$ in direction~$d(p,e)$ than~$\Phi_{d(p,e)}$ and~$q_1$ be further away from~$p$ in direction~$d(p,e)$ than~$\Phi_{d(p,e)}$.
By \cref{lemma:remaining:partition}, if~$p$ is not decided in direction~$1-d(p,e)$, then either~$q_0$ or~$q_1$ is~$p$-decided in direction~$d(p,e)$, a contradiction to~$q_0$ and~$q_1$ being~$p$-undecided.
If~$p$ is decided in direction~$1-d(p,e)$, then~$f_\gamma(p,e) = 1$, a contradiction.
	
In the fourth case, we assume without loss of generality that~$C(p,e)$ contains at least two~$p$-undecided objects~$q_0,q_1$ as the first case already deals with empty candidate lists and there can not be a single~$p$-undecided object in a candidate list of~$p$ as shown above.
Further,~$q_0$ and~$q_1$ are either both closer to~$p$ in direction~$d(p,e)$ than~$\Phi_{d(p,e)}$ or both further away from~$p$ in direction~$d(p,e)$ than~$\Phi_{d(p,e)}$.
By definition,~$S(p,e,q_0) = S(p,e,q_1)$ in this case and according to \cref{lemma:simulation:sameCandidateList},~$q_0$ and~$q_1$ are in the same candidate list of~$\Phi_{d(p,e)}$.
If~$|C(p,e)| > 2$, then there is a candidate list~$C(\Phi_{d(p,e)},e')$ of~$\Phi_{d(p,e)}$ with~$|C(\Phi_{d(p,e)},e')| \geq |C(p,e)| > 2$ and hence~$\phi$ contains the clause~$\bot$ from~\cref{eq:exact:bot}.
This contradicts the fact that~$\gamma$ corresponds to a satisfying truth assignment of~$\phi$.
If~$|C(p,e)| = 2$, then~$\phi$ contains the clause~$q_0 \neq q_1$ from \cref{eq:exact:qneq_prime} and thus exactly one of~$q_0$ and~$q_1$ are assigned to direction~$1-d(p,e)$, regardless of the assigned direction of~$p$.
Thus~$f_\gamma(p,e) = 1$, a contradiction.
This concludes the proof.
\end{proof}

Based on \cref{proposition:final,observation:num.q.inCL}, we can show our main theorem.

\begin{theorem}
\label{thm:polyAlg}
\ra{} on cycles is decidable in~$\mathcal{O}(n^3)$ time.
\end{theorem}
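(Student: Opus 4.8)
The plan is to assemble the structural results into an algorithm that iterates over the $n$ edges of the cycle and, for each, decides the corresponding instance of \fsra{} via a single 2-SAT test. Concretely, recall the equivalence stated just after the definition of \fsra{}: the instance $\mathcal{I}$ is a yes-instance of \ra{} if and only if $(\mathcal{I}, e)$ is a yes-instance of \fsra{} for some edge $e$ of $C_n$. The algorithm therefore loops over all $n$ edges; for the edge $e$ it forms the guess $\Phi$, builds the formula $\phi = \psi_h \land \psi_e$ of \cref{const:harmonic,const:exact}, and tests whether $\phi$ is satisfiable. It answers Yes as soon as some edge yields a satisfiable formula, and No otherwise.

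For correctness I would chain the propositions already proved. Fix an edge $e$ with guess $\Phi$. By \cref{proposition:final}, a selection respecting $\Phi$ is valid if and only if it corresponds to a satisfying truth assignment of $\phi$. By \cref{validDecides}, a selection is valid if and only if \texttt{Greedy Swap} returns True on it, and by \cref{greedySwaps} this happens exactly when the selection yields $\sigma$. A selection respecting $\Phi$ that yields $\sigma$ is precisely a solution witnessing that $(\mathcal{I}, e)$ is a yes-instance of \fsra{}. Hence $\phi$ is satisfiable if and only if $(\mathcal{I}, e)$ is a yes-instance, and taking the disjunction over all edges yields the claimed correctness.

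The running-time argument splits into global preprocessing and per-guess work. In the preprocessing I would compute, for every agent, the rank of each object in its preference list in $O(n^2)$ time, enabling constant-time comparisons; precompute for every ordered pair of objects and every direction assignment whether one shields the other in $O(n^3)$ time as in \cref{lemma:between_decided}; and compute the candidate lists $C(p,e)$, which are guess-independent, within the same $O(n^3)$ budget. By \cref{observation:num.q.inCL} the total size of all candidate lists is $O(n^2)$. The harmonic formula $\psi_h$ is also guess-independent and consists of $O(n^2)$ shielding and compatibility clauses, so it is built once in $O(n^2)$ time. For each of the $n$ guesses I would then compute the decided objects in $O(n^2)$ time, build the table $T$ of \cref{lemma:simulation} in $O(n^2)$ time so that every meeting edge and hence every tuple $S(p,e,q)$ is available in constant time, classify the remaining objects as $p$-decided or undecided, and emit the clauses of \cref{const:exact}. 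Since each such clause is charged to a candidate-list entry or a pair of entries and there are $O(n^2)$ entries in total, $\phi$ has $O(n^2)$ clauses and is assembled in $O(n^2)$ time; solving 2-SAT is linear in the formula size and hence also $O(n^2)$. Thus each guess costs $O(n^2)$, and the $n$ guesses together with the $O(n^3)$ preprocessing give the claimed $O(n^3)$ bound.

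The main obstacle is the running-time bookkeeping rather than any further combinatorial insight, since correctness follows immediately from \cref{proposition:final}. The crux is to keep every per-guess quantity --- decided objects, the table $T$, and the tuples $S(p,e,q)$ --- recomputable in $O(n^2)$ time for each of the $n$ edges, and, above all, to guarantee that the 2-SAT formula never grows beyond $O(n^2)$ clauses. This last point rests entirely on \cref{observation:num.q.inCL}: without the cap on total candidate-list size, a naive accounting over all object-edge-pairs would overshoot the budget and inflate both the clause count and the cost of iterating over candidate lists.
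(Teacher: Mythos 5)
Your proposal is correct and follows essentially the same route as the paper: iterate over all $n$ edges as \fsra{} instances, build $\phi = \psi_h \land \psi_e$ per guess, and decide via 2-SAT, with correctness delegated to \cref{proposition:final} and the running time split into $\mathcal{O}(n^3)$ global preprocessing (shielding/opposite/compatibility tables) plus $\mathcal{O}(n^2)$ per guess for decided objects, clause generation bounded via \cref{observation:num.q.inCL}, and linear-time 2-SAT solving. Your explicit chaining of \cref{proposition:final,validDecides,greedySwaps} for correctness is slightly more detailed than the paper's one-line appeal to \cref{proposition:final}, but the argument is the same.
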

\begin{proof}
Let~$\mathcal{I} := (N, X, \succ, C_n, \sigma_0, \sigma)$ be an instance of \ra.
Given the instance, we compute the 2-SAT formula~$\psi_h$ according to \cref{const:harmonic}.
We first precompute in overall~$\mathcal{O}(n^3)$ time for each pair of objects whether they are opposite and for each possible assignment of directions for these two objects whether one objects shields the other object and whether they are compatible.
Constructing~$\psi_h$ then takes~$\mathcal{O}(n^2)$ time as it requires to check for each pair~$p,q$ of objects whether they are compatible and whether one shields the other in any direction.
Afterwards we divide~$\mathcal{I}$ into~$n$ instances~$(\mathcal{I}, e)$ of \fsra.
This further determines the guess~$\Phi:=(x,y)$.
We then compute the set of all decided objects according to \cref{lemma:decidedOpposite,lemma:decidedObjects,lemma:candidateList:decidedByDecided,lemma:candidateLists:decidedByDestination,lemma:between_decided} in~$O(n^2)$ time.
Moreover, we iterate over all objects~$p$ and compute for each possible direction~$d$ of~$p$ the set of all~$p$-decided objects according to \cref{lem:decidedShield,lem:decidedUnsuccessful,lemma:pdecided_larger_one} in~$O(n)$ time.
Notice that \cref{proposition:candidateLists,lemma:simulation:sameCandidateList} imply that each candidate list~$C(p',e')$ contains at most two objects that are~$p'$-undecided.
Hence, performing \cref{lemma:pdecided_larger_one} can be performed for all objects that are not yet discovered to be~$p'$-decided in constant time per candidate list.
For a given guess, this then takes~$O(n^2)$ time for all objects.
We then create a 2-SAT formula according to~\cref{const:exact} to check for an exact selection for the given instance of \fsra{}.
For constructing~$\phi$, we add clauses for each object-edge-pair and due to \cref{observation:num.q.inCL}, there are~$\mathcal{O}(n^2)$ such clauses.
The time for computing each clause is constant and solving~$\phi$ takes time linear in the number of clauses, that is,~$O(n^2)$ time.
Thus, the procedure takes overall~$\mathcal{O}(n^2)$ time per instance of \fsra{} and~$\mathcal{O}(n^3)$ time in total.
\Cref{proposition:final} shows correctness.
\end{proof}

\section{NP-hardness of Reachable Assignment on Cliques}\label{chapter_cliques}
In this section we prove that \ra{} is NP-hard on cliques.
To do so, we will adapt the reduction of \ro{} to \ra{} by Gourves et al.\,\cite{gourves_object_2017} to preserve the property that the resulting graph is a clique if the input graph was a clique and use the fact that \ro{} is NP-hard on cliques \cite{bentert_good_2019}.

\begin{proposition}
\ra{} is NP-hard on cliques.
\end{proposition}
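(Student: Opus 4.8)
The plan is to reduce \ro{} on cliques, shown NP-hard by Bentert et al.~\cite{bentert_good_2019}, to \ra{} on cliques, starting from the reduction from \ro{} to \ra{} of Gourves et al.~\cite{gourves_object_2017}. In an instance of \ro{} we are given a designated agent~$a$ and a designated object~$o$ and ask whether~$o$ becomes reachable for~$a$; the Gourves et al.\ reduction augments such an instance with a small gadget (additional agents and objects, extended preference lists, and an extended initial assignment) together with a target assignment~$\sigma$ so that~$\sigma$ is reachable if and only if~$o$ is reachable for~$a$. The only reason this does not already give the desired result is that the gadget is attached to the original graph along a few specific edges, so running the reduction on a clique produces a graph that is a clique with a gadget hanging off it, rather than a clique itself.

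First I would keep the Gourves et al.\ construction essentially unchanged --- its agents, objects, preference profile, initial assignment~$\sigma_0$, and target assignment~$\sigma$ --- and simply add every edge missing from the resulting graph~$G'$ so that it becomes the complete graph on all of its vertices. Since the input graph is already a clique and the gadget contributes only a few additional vertices, this is computable in polynomial time and the resulting graph is a clique. It then remains to show that this edge completion does not change which assignments are reachable.

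The main obstacle is precisely this last point: an added edge could in principle enable a rational swap that the original reduction never intended, which would be fatal for the backward direction, as it could make~$\sigma$ reachable without~$a$ ever holding~$o$. Since a swap across an edge~$\{u,v\}$ is rational only if \emph{both}~$u$ and~$v$ strictly prefer the object they would receive, I would design the gadget preferences so that every newly added edge is \emph{inert}. Concretely, the gadget object that must ultimately travel to the new ``sink'' agent is placed at the bottom of the preference list of every original agent except the designated router~$a$, while the sink agent accepts only the object~$o$; hence a rational swap across any edge incident to a gadget vertex, other than the single intended edge at~$a$, would require a strict preference that, by construction, does not exist. Verifying this inertness for each class of added edges --- those joining a gadget vertex to an original agent and, if the gadget has several vertices, those joining two gadget vertices --- is the technical heart of the argument.

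Once the added edges are shown to be inert, the set of reachable assignments of~$G'$ viewed as a clique coincides with that of~$G'$ as produced by the original reduction, so correctness transfers directly. For completeness I would still record the two directions: if~$o$ is reachable for~$a$, then the witnessing sequence of rational swaps already brings~$o$ to~$a$, and appending the intended gadget trades extends it to a sequence reaching~$\sigma$; conversely, any sequence reaching~$\sigma$ must at some point place~$o$ at~$a$ before it is handed on to the sink, so~$o$ is reachable for~$a$. Together with the polynomial size of the construction, this establishes that \ra{} is NP-hard on cliques.
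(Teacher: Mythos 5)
Your top-level plan is the same as the paper's: combine the NP-hardness of \ro{} on cliques due to Bentert et al.\,\cite{bentert_good_2019} with the \ro{}-to-\ra{} reduction of Gourves et al.\,\cite{gourves_object_2017}, and then add all missing edges so that the constructed graph is itself a clique. However, the step you identify as the technical heart --- showing that every added edge is \emph{inert} --- is not available, because it is false for the actual construction, which you also misremember. The Gourves et al.\ gadget is not a small sink-style gadget: it duplicates the \emph{entire} agent set, makes the copies a clique, places the copy object~$x_j'$ at the \emph{top} (not the bottom) of original agent~$j$'s preference list, and gives each copy~$j'$ a preference list ranking essentially all of~$X \setminus \{x_\ell\}$ above its initial object~$x_j'$; this richness is needed so that, no matter which objects of~$X \setminus \{x_\ell\}$ the copies end up holding, they can sort themselves to one fixed canonical target. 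Precisely because the copies accept so many original objects, the added edges are not inert: once a copy~$k'$ holds an original object other than its favourite (e.g., after a swap with~$k$, who may have traded beforehand), a rational swap between~$k'$ and an original agent~$j \neq k$ across the added edge~$\{j,k'\}$ is possible whenever~$j$ prefers the object held by~$k'$ and~$k'$ prefers the object held by~$j$, which in general happens. So a correctness transfer via edge-by-edge inertness cannot work.

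What the paper uses instead is a property specific to the target object rather than to the edges: no copy other than~$i'$ ever ranks~$x_\ell$ above its initial object, and no original agent accepts any copy object other than that of its own copy. Hence~$x_\ell$ can only ever travel among the original agents and can only reach~$i'$ through~$i$, which gives soundness exactly as in the original reduction, while completeness only uses the intended sequence of swaps, which never uses an added edge. Note also that your fallback design --- a single sink accepting only~$o$, with the gadget object at the bottom of the originals' lists --- cannot be repaired into a correct reduction: in \ra{} the target assignment~$\sigma$ is fixed in advance, but the final configuration of the original agents depends on the route~$o$ takes to reach~$a$; the full copy set exists precisely to absorb every such configuration and sort it into one predetermined target, and a constant-size gadget cannot do this.
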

\begin{proof}
Gourves et al.\,\cite{gourves_object_2017} already described a reduction of \ro{} to \ra{} on general graphs.
Given an instance~$\mathcal{I} = (N, X, \succ, G=(N,E), \sigma_0, i, x_\ell)$ of \ro{}, they create an equivalent instance~$\mathcal{I}':=(N \cup N', X \cup X', {\succ',} G', \sigma_0', \sigma')$ of~\ra{} as follows.
They first copy all agents with their initial object, that is,~$N'=\{j' \mid j \in N\}, X' = \{x_j' \mid x_j \in X\}$, and~$\sigma_0' = \{(j,x_j),(j',x_j') \mid (j,x_j) \in \sigma_0\}$.
Then, they connect each agent to its copy and pairwise connect all of the copies, that is,~$G' = (N \cup N', E \cup \{(j,j')\mid j \in N\} \cup \{(j',k')\mid j' \neq k' \in N'\})$.
The preference lists of the original agents stay the same except that they now prefer the initial object of their respective copy the most.
The agent~$i'$ only prefers~$x_\ell$ over its initial object~$x_i'$.
Each other agent in~$X'$ only prefers objects in~$X \setminus \{x_\ell\}$ over its initial object and each agent~$j' \in X'$ prefers the object~$x_j$ the most except for~$\ell'$, which prefers~$x_i$ the most.
The remainder of the preference lists are then arranged such that if each agent in~$X' \setminus \{i'\}$ holds an arbitrary object from~$X \setminus \{x_\ell\}$, then there is always a sequence of rational swaps between the agents in~$X'$ such that each agent obtains its most preferred object.
Finally,~$\sigma'$ assigns each agent its most preferred object.

Observe that in order to reach~$\sigma'$, the agent~$i'$ has to obtain~$x_\ell$ from~$i$.
Since no agent in~$X' \setminus \{i'\}$ accepts~$x_\ell$, the object has to be swapped amongst the agents in~$N$, that is, the original instance has to be a yes-instance.
For the reverse direction, note that if~$x_\ell$ can be traded amongst the agents in~$N$ to agent~$i$, then after this sequence of swap each agent can swap its object with its respective copy and the copies can swap the objects such that each agent obtains its most preferred object.

We adapt this reduction to prove NP-hardness for \ra{} on cliques by requiring that the original graph is a clique and simply adding all of the missing edges between original agents and copies.
Bentert et al.\,\cite{bentert_good_2019} already proved NP-hardness for \ro{} on cliques.
The proof of correctness is then exactly the same as the original.
\end{proof}

\section{Conclusion}\label{chapter_conclusion}
In this work, we have investigated a version of house-marketing problems called \ra{}, a problem in the field of \emph{Multi-Agent-Systems} that was first proposed by Gourves et al.\,\cite{gourves_object_2017}.
We presented an~$\mathcal{O}(n^3)$-time algorithm for cycles and showed NP-hardness for cliques.

The key to solving \ra{} on trees and on cycles was to exploit the number of unique paths an object can be swapped along. 
Finding graph classes in which this number is bounded and solving \ra{} for these graph classes is a natural next step for further research.
Moreover, since cycles are paths with one additional edge, it seems promising to investigate graphs with constant feedback edge number.
Afterwards, one may study the parameterized complexity of \ra{} with respect to the parameter feedback edge number.
Other possibilities for parameters are related to the agent's preferences as studied by Bentert~et~al.\,~\cite{bentert_good_2019} for \ro{}.
One might also consider generalized settings such as allowing ties in the preference lists as studied by Huang and Xiao \cite{DBLP:conf/aaai/HuangX19} for \ro{}.

Finally, we mention that Katar{\'{\i}}na Cechl{\'{a}}rov{\'{a}} and Ildik{\'{o}} Schlotter \cite{DBLP:conf/iwpec/CechlarovaS10} studied the parameterized complexity of a version of house-marketing that allows for approximation.
Finding meaningful versions of \ra{} that allow for approximation is another alley for future research.

\bibliographystyle{plainurl}
\bibliography{reference}

\begin{thebibliography}{10}

\bibitem{DBLP:conf/sigecom/AbrahamBS07}
David~J. Abraham, Avrim Blum, and Tuomas Sandholm.
\newblock Clearing algorithms for barter exchange markets: Enabling nationwide
  kidney exchanges.
\newblock In {\em Proceedings of the 8th {ACM} Conference on Electronic
  Commerce (EC~'07)}, pages 295--304. {ACM}, 2007.

\bibitem{AbrCecManMeh2005}
David~J. Abraham, Katar{\'{\i}}na Cechl{\'{a}}rov{\'{a}}, David Manlove, and
  Kurt Mehlhorn.
\newblock Pareto optimality in house allocation problems.
\newblock In {\em Proceedings of the 16th International Symposium on Algorithms
  and Computation (ISAAC~'05)}, volume 3827 of {\em LNCS}, pages 1163--1175.
  Springer, 2005.

\bibitem{bentert_good_2019}
Matthias Bentert, Jiehua Chen, Vincent Froese, and Gerhard~J. Woeginger.
\newblock Good things come to those who swap objects on paths.
\newblock {\em CoRR}, abs/1905.04219, 2019.

\bibitem{BMW18}
Matthias Bentert, Josef Mal{\'{\i}}k, and Mathias Weller.
\newblock Tree containment with soft polytomies.
\newblock In {\em Proceedings of the 16th Scandinavian Symposium and Workshops
  on Algorithm Theory, {SWAT}~'18}, volume 101 of {\em LIPIcs}, pages
  9:1--9:14. Schloss Dagstuhl - Leibniz-Zentrum f{\"{u}}r Informatik, 2018.

\bibitem{BCGLMW18}
Aur{\'e}lie Beynier, Yann Chevaleyre, Laurent Gourv\`{e}s, Julien Lesca,
  Nicolas Maudet, and Ana\"{e}lle Wilczynski.
\newblock Local envy-freeness in house allocation problems.
\newblock In {\em Proceedings of the 17th International Conference on
  Autonomous Agents and MultiAgent Systems (AAMAS~'18)}, pages 292--300.
  International Foundation for Autonomous Agents and Multiagent Systems
  ({ACM}), 2018.

\bibitem{DBLP:conf/wine/0001W19}
Felix Brandt and Ana{\"{e}}lle Wilczynski.
\newblock On the convergence of swap dynamics to pareto-optimal matchings.
\newblock In {\em Proceedings of the 15th International Conference on Web and
  Internet Economics ({WINE}~'19)}, volume 11920 of {\em LNCS}, pages 100--113.
  Springer, 2019.

\bibitem{BKN18}
Robert Bredereck, Andrzej Kaczmarczyk, and Rolf Niedermeier.
\newblock Envy-free allocations respecting social networks.
\newblock In {\em Proceedings of the 17th International Conference on
  Autonomous Agents and Multiagent Systems (AAMAS~'18)}, pages 283--291.
  International Foundation for Autonomous Agents and Multiagent Systems
  ({ACM}), 2018.

\bibitem{DBLP:conf/iwpec/CechlarovaS10}
Katar{\'{\i}}na Cechl{\'{a}}rov{\'{a}} and Ildik{\'{o}} Schlotter.
\newblock Computing the deficiency of housing markets with duplicate houses.
\newblock In {\em Proceedings of the 5th International Symposium on
  Parameterized and Exact Computation ({IPEC}~'10)}, volume 6478 of {\em LNCS},
  pages 72--83. Springer, 2010.

\bibitem{DBLP:conf/aaai/ChevaleyreEM07}
Yann Chevaleyre, Ulrich Endriss, and Nicolas Maudet.
\newblock Allocating goods on a graph to eliminate envy.
\newblock In {\em Proceedings of the 22nd Conference on Artificial Intelligence
  (AAAI~'07)}, pages 700--705. {AAAI} Press, 2007.

\bibitem{DBLP:books/daglib/diestel}
Reinhard Diestel.
\newblock {\em Graph Theory, 4th Edition}, volume 173 of {\em Graduate Texts in
  Mathematics}.
\newblock Springer, 2012.

\bibitem{gourves_object_2017}
Laurent Gourv{\`{e}}s, Julien Lesca, and Ana{\"{e}}lle Wilczynski.
\newblock Object allocation via swaps along a social network.
\newblock In {\em Proceedings of the 26th International Joint Conference on
  Artificial Intelligence ({IJCAI~'17})}, pages 213--219, 2017.

\bibitem{GW09}
Dan Gusfield and Yufeng Wu.
\newblock The three-state perfect phylogeny problem reduces to 2-sat.
\newblock {\em Communications in Information \& Systems}, 9(4):295--302, 2009.

\bibitem{DBLP:conf/aaai/HuangX19}
Sen Huang and Mingyu Xiao.
\newblock Object reachability via swaps along a line.
\newblock In {\em Proceedings of the 33rd {AAAI} Conference on Artificial
  Intelligence ({AAAI~'19})}, pages 2037--2044. {AAAI} Press, 2019.

\bibitem{IP18}
Ayumi Igarashi and Dominik Peters.
\newblock Pareto-optimal allocation of indivisible goods with connectivity
  constraints.
\newblock {\em CoRR}, abs/1811.04872, 2018.

\bibitem{roth_incentive_1982}
Alvin~E. Roth.
\newblock Incentive compatibility in a market with indivisible goods.
\newblock {\em Economics Letters}, 9(2):127--132, 1982.

\bibitem{DBLP:conf/sagt/SaffidineW18}
Abdallah Saffidine and Ana{\"e}lle Wilczynski.
\newblock Constrained swap dynamics over a social network in distributed
  resource reallocation.
\newblock In {\em Proceedings of the 11th International Symposium on
  Algorithmic Game Theory (SAGT~'18)}, volume 11059 of {\em LNCS}, pages
  213--225. Springer, 2018.

\bibitem{shapley_cores_1974}
Lloyd Shapley and Herbert Scarf.
\newblock On cores and indivisibility.
\newblock {\em Journal of Mathematical Economics}, 1(1):23--37, 1974.

\bibitem{SoeUen2010}
Tayfun S{\"{o}}nmez and M.~Utku {\"{U}}nver.
\newblock House allocation with existing tenants: {A} characterization.
\newblock {\em Games and Economic Behavior}, 69(2):425--445, 2010.

\end{thebibliography}

\end{document}